\documentclass[12pt]{article}

\usepackage{hyperref}%
\hypersetup{%
   breaklinks,%
   colorlinks=true,%
   linkcolor=[rgb]{0.45,0.0,0.0},%
   citecolor=[rgb]{0,0,0.45}
}

\usepackage{amsmath}%
\usepackage[amsmath,thmmarks]{ntheorem}
\usepackage{amssymb}%
\theoremseparator{.}
\usepackage{paralist}%
\usepackage{mleftright}%
\usepackage{xspace}%
\usepackage{euscript}%
\usepackage[cm]{fullpage}%
\usepackage{xcolor}%
\usepackage{graphicx}%
\usepackage{framed}%

\numberwithin{figure}{section}%
\numberwithin{table}{section}%
\numberwithin{equation}{section}%

\newcommand{\HLinkShort}[2]{\hyperref[#2]{#1\ref*{#2}}}
\newcommand{\HLink}[2]{\hyperref[#2]{#1~\ref*{#2}}}
\newcommand{\HLinkPage}[2]{\hyperref[#2]{#1~\ref*{#2}%
      $_\text{p\pageref{#2}}$}}
\newcommand{\HLinkPageOnly}[1]{\hyperref[#1]{Page~\refpage*{#1}%
      $_\text{p\pageref{#1}}$}}

\newcommand{\HLinkSuffix}[3]{\hyperref[#2]{#1\ref*{#2}{#3}}}
\newcommand{\HLinkPageSuffix}[3]{\hyperref[#2]{#1\ref*{#2}%
      #3$_\text{p\pageref{#2}}$}}

\newcommand{\figlab}[1]{\label{fig:#1}}
\newcommand{\figref}[1]{\HLink{Figure}{fig:#1}}

\newcommand{\seclab}[1]{\label{sec:#1}}
\newcommand{\secref}[1]{\HLink{Section}{sec:#1}}

\newcommand{\corlab}[1]{\label{cor:#1}}
\newcommand{\corref}[1]{\HLink{Corollary}{cor:#1}}%

\providecommand{\deflab}[1]{\label{def:#1}}

\newcommand{\itemlab}[1]{\label{item:#1}}
\newcommand{\itemref}[1]{\HLinkSuffix{(}{item:#1}{)}}

\newcommand{\remlab}[1]{\label{rem:#1}}
\newcommand{\remref}[1]{\HLink{Remark}{rem:#1}}%

\newcommand{\lemlab}[1]{\label{lemma:#1}}
\newcommand{\lemref}[1]{\HLink{Lemma}{lemma:#1}}%

\newcommand{\thmlab}[1]{{\label{theo:#1}}}
\newcommand{\thmref}[1]{\HLink{Theorem}{theo:#1}}

\providecommand{\eqlab}[1]{}%
\renewcommand{\eqlab}[1]{\label{equation:#1}}
\newcommand{\Eqref}[1]{\HLinkSuffix{Eq.~(}{equation:#1}{)}}
\newcommand{\eqqref}[1]{(\ref{equation:#1})}

\newtheorem{theorem}{Theorem}[section] 

\newtheorem{defn}[theorem]{Definition}

\newtheorem{lemma}[theorem]{Lemma}
\newtheorem*{lemma:u}[theorem]{Lemma}%
\newtheorem{corollary}[theorem]{Corollary}
%


\theoremstyle{remark}%
\theorembodyfont{\upshape}%
\newtheorem{remark}[theorem]{Remark}%
\newtheorem{example}[theorem]{Example}

\theoremheaderfont{\em}%
\theorembodyfont{\upshape}%
\theoremstyle{nonumberplain}
\theoremseparator{}
\theoremsymbol{\myqedsymbol}
\newtheorem{proof}{Proof:}
%


\newcommand{\eps}{{\varepsilon}}%

\newcommand{\cardin}[1]{\left| {#1} \right|}%

\newcommand{\pth}[1]{\mleft({#1}\mright)}
\newcommand{\Set}[2]{\left\{ #1 \;\middle\vert\; #2 \right\}}
\newcommand{\brc}[1]{\left\{ {#1} \right\}}

\newcommand{\Ex}[2][\!]{\mathop{\mathbf{E}}#1\pbrcx{#2}}

\renewcommand{\th}{th\xspace}

\newcommand{\Prob}[1]{\mathop{\mathbf{Pr}}\!\pbrcx{#1}}
\newcommand{\pbrcx}[1]{\left[ {#1} \right]}

\newcommand{\myqedsymbol}{\rule{2mm}{2mm}}

\newcommand{\ds}{\displaystyle}%


\newcommand{\range}{\mathbf{r}}%

\newcommand{\CDChar}{\Xi}%
\newcommand{\CD}[1]{\CDChar #1}%

\newcommand{\CDH}[2]{\CDChar_{\geq #2}\pth{ #1}}%

\newcommand{\Family}{\EuScript{F}}%

\newcommand{\cell}{\sigma}
\newcommand{\cellA}{\psi}

\newcommand{\cDim}{b}%
\newcommand{\Dim}{\delta}%

\newcommand{\BVT}{{\sf BVT}}

\newcommand{\DefSet}[1]{\mathsf{D}\pth{#1}}
\newcommand{\KillSet}[1]{\mathsf{K}\pth{#1}}
\newcommand{\clX}[1]{\KillSet{#1}}%

\newcommand{\BadProb}{\varphi}

\newcommand{\RangeSet}{{\mathcal{R}}}

\newcommand{\Growth}[2]{g^{}_{#1}\pth{#2}}

\newcommand{\VCProj}[2]{#1_{|#2}}

\renewcommand{\Re}{\mathbb{R}}%

\def\eps{{\varepsilon}}
\def\bd{{\partial}}

\def\A{{\cal A}}
\def\Arr{{\cal A}}

\def\H{{\cal H}}
\def\nn{{\bf n}}
\def\xx{{\bf x}}
\def\R{{\cal R}}
\def\reals{{\mathbb R}}
\def\xx{{\bf x}}

\def\V{{\sf VD}}

\def\VD{{\sf VD}}
\def\BT{{\sf BVT}}

\definecolor{blue25}{rgb}{0, 0, 11}

\newcommand{\M}{{\cal M}}%
\renewcommand{\O}{{\cal O}}%

\newcommand{\MCY}[2]{%
   \begin{minipage}{#1\linewidth}
       \begin{center}
           \smallskip%
           #2
           \smallskip%
       \end{center}
   \end{minipage}%
}%

\newcommand{\Bezout}{B\'ezout\xspace}
\newcommand{\Gartner}{G\"artner\xspace}
\newcommand{\etal}{\textit{et~al.}\xspace}

{\endMakeFramed}

\def\C{{\cal C}}



\begin{document}

\title{Decomposing arrangements of hyperplanes: \\
   VC-dimension, combinatorial dimension, \\
   and point location%
   \thanks{%
      Work by Esther Ezra was partially supported by NSF CAREER under
      grant CCF:AF-1553354 and by Grant 824/17 from the Israel Science Foundation.
      Work by Sariel Har-Peled was partially supported by %
      NSF AF awards CCF-1421231 and CCF-1217462.  %
      Work by Haim Kaplan was supported by Grant 1841/14 from the
      Israel Science Fund and by Grant 1161/2011 from the German
      Israeli Science Fund (GIF).
      Work by Micha Sharir has been supported by Grant 2012/229 from
      the U.S.-Israel Binational Science Foundation, and by Grant
      892/13 from the Israel Science Foundation.
      Work by Haim Kaplan and Micha Sharir was also supported by
      the Israeli Centers for Research Excellence (I-CORE) program
      (center no.~4/11), by the Blavatnik Computer Science Research
      Fund at Tel Aviv University, and by the Hermann
      Minkowski--MINERVA Center for Geometry at Tel Aviv University.
      %
   }%
}

\author{Esther Ezra\thanks{%
    Dept.~of Computer Science, Bar-Ilan University, Ramat Gan, Israel and
    School of Mathematics, Georgia Institute of Technology, Atlanta,
    Georgia 30332, USA.  \texttt{eezra3@math.gatech.edu}.  } \and
  Sariel Har-Peled\thanks{%
    Department of Computer Science, University of Illinois, 201
    N.~Goodwin Avenue, Urbana, IL, 61801, USA. %
    \texttt{sariel@illinois.edu}. } \and Haim Kaplan\thanks{%
    Blavatnik School of Computer Science, Tel Aviv University, Tel
    Aviv 69978 Israel.  \texttt{haimk@post.tau.ac.il}.  } \and Micha
  Sharir\thanks{%
    Blavatnik School of Computer Science, Tel Aviv University, Tel
    Aviv 69978 Israel.  \texttt{michas@post.tau.ac.il}.  } }

\maketitle


\begin{abstract}
    This work is motivated by several basic problems and techniques
    that rely on space decomposition of arrangements of hyperplanes in
    high-dimensional spaces, most notably Meiser's 1993 algorithm for
    point location in such arrangements. A standard approach to these
    problems is via random sampling, in which one draws a random
    sample of the hyperplanes, constructs a suitable decomposition of
    its arrangement, and recurses within each cell of the decomposition
    with the subset of hyperplanes that cross the cell. The efficiency
    of the resulting algorithm depends on the quality of the sample,
    which is controlled by various parameters.

    One of these parameters is the classical \emph{VC-dimension}, and
    its associated \emph{primal shatter dimension}, of a suitably
    defined corresponding range space.  Another parameter, which we
    refer to here as the \emph{combinatorial dimension}, is the maximum number
    of hyperplanes that are needed to define a cell that can arise in
    the decomposition of some sample of the input hyperplanes; this
    parameter arises in Clarkson's (and later Clarkson and Shor's)
    random sampling technique.%

    We re-examine these parameters for the two main space
    decomposition techniques---\emph{bottom-vertex triangulation}, and
    \emph{vertical decomposition}, including their explicit dependence on
    the dimension $d$, and discover several unexpected
    phenomena, which show that, in both techniques, there are large
    gaps between the VC-dimension (and primal shatter dimension), and
    the combinatorial dimension.

    For vertical decomposition, the combinatorial dimension is only
    $2d$, the primal shatter dimension is at most $d(d+1)$,
    and the VC-dimension is at least $1 + d(d+1)/2$ and at most $O(d^3)$.  For
    bottom-vertex triangulation, both the primal shatter dimension and
    the combinatorial dimension are $\Theta(d^2)$, but there seems to be
    a significant gap between them, as the combinatorial dimension is
    $\frac12d(d+3)$, whereas the primal shatter dimension is at most $d(d+1)$,
    and the VC-dimension is between $d(d+1)$ and $5d^2 \log{d}$ (for $d\ge 9$).

    Our main application is to point location in an arrangement of $n$
    hyperplanes is $\Re^d$, in which we show that the query cost in
    Meiser's algorithm can be improved if one uses vertical
    decomposition instead of bottom-vertex triangulation, at the cost
    of some increase in the preprocessing cost and storage. The best query
    time that we can obtain is $O(d^3\log n)$, instead of
    $O(d^4\log d\log n)$ in Meiser's algorithm.  For these bounds to
    hold, the preprocessing and storage are rather large
    (super-exponential in $d$). We discuss the tradeoff between query cost
    and storage (in both approaches, the one using bottom-vertex trinagulation
    and the one using vertical decomposition).

    Our improved bounds rely on establishing several new structural
    properties and improved complexity bounds for vertical decomposition,
    which are of independent interest, and which we expect to find additional applications.

    The point-location methodology presented in this paper can be adapted to
    work in the \emph{linear decision-tree model}, where we are only concerned
    about the cost of a query, and measure it by the number of point-hyperplane sign tests
    that it performs. This adaptation is presented in the companion paper~\cite{ES17},
    where it yields an improved bound for the linear decision-tree complexity of $k$-SUM.
    A very recent breakthrough by Kane et al.~\cite{KLM17} further improves the bound,
    but only for special, ``low-complexity'' classes of hyperplanes. We show here
    how their approach can be extended to yield an efficient and improved point-location
    structure in the RAM model for such collections of hyperplanes.
\end{abstract}

\section{Introduction}

\paragraph{Point location.}
This work is motivated by several basic problems and techniques that
rely on space decomposition of arrangements of hyperplanes in
high-dimensional spaces, most notably (i) Meiser's 1993 algorithm for
point location in such arrangements~\cite{m-plah-93}, and (ii) the linear
decision tree complexity of $k$-SUM, {\sc SubsetSum}, {\sc Knapsack}, and related problems.

Let $H$ be a set of $n$ hyperplanes in $\Re^d$, and let $\Arr(H)$ denote the arrangement of $H$.
The \emph{point-location problem} in $\Arr(H)$ is to preprocess $H$ into a
data structure that supports efficient point-location queries, each of
which specifies a point $q\in\Re^d$ and asks for the (not necessarily
full-dimensional) cell of $\Arr(H)$ that contains $q$.
We represent the output cell $C$ by its sign pattern with respect to the hyperplanes in $H$, where the sign of $C$ with respect to
a hyperplanes $h\in H$ is $0$ if $h$ contains $C$, $+1$ if $C$ is in the positive side of $H$ and $-1$ if $C$ is in the negative side
of $h$.
A simpler
variant, known as \emph{vertical ray-shooting}, which does not seem to
make the problem substantially simpler, is to ask for the hyperplane
of $H$ that lies directly above $q$, namely, the first hyperplane that
an upward $x_d$-vertical ray emanating from $q$ will hit (it is
possible that the ray terminates right away, when $q$ lies on one or several such hyperplanes).

\paragraph{Linear decision tree complexity of $k$-SUM and related problems.}
The $k$-SUM problem is to determine, for a given set $A$ of $n$ real numbers and
a parameter $k$, whether there exist $k$ (say, distinct) elements of $A$ that sum up to $0$.
A simple transformation turns this into a point location problem: Let $H$
be the set of all hyperplanes in $\Re^n$ of the form $x_{i_1}+x_{i_2} + \cdots + x_{i_k}=0$,
over all the $\binom{n}{k}$ $k$-tuples $1\le i_1 < i_2 < \cdots < i_k\le n$.
Map the given set $A=\{a_1,a_2,\ldots,a_n\}$ to the point $q_A = (a_1,a_2,\ldots,a_n)\in\Re^n$.
The $k$-SUM problem is then reduced to the problem of determining whether $q_A$ lies
on a hyperplane of $H$, an instance of point location amid hyperplanes in a high-simensional space.

We analyze the complexity of this problem
in the \emph{linear decision tree} model, where we only count linear sign tests involving the
elements of $A$ (and no other operation is allowed to access the actual values of these elements).
This problem is studied in the companion paper~\cite{ES17}, where a full treatment and analysis are provided.

\paragraph{The challenge.}
In the context studied in this paper (as well as in Meiser's work), $d$ is not assumed to be a
constant, so the dependence of the bounds that we will derive on $d$
is crucial for calibrating the quality of our solutions.  This is a
departure from classical approaches to this problem in computational geometry,
in which $d$ is assumed to be a small constant, and constants that depends solely on $d$
(sometimes exponentially or worse) are ignored (i.e., hidden in the
$O$-notation). In particular, we cannot use off-the-shelf results,
in which the dependency on $d$ is not made explicit, as a black box.

\paragraph{Time-space tradeoff.}
As is typical with problems of this kind, there is a trade-off between
the cost of a query and the storage (and preprocessing time) required
by the structure. We are interested here in solutions in which the
query time is polynomial in $d$ and (poly)logarithmic in $n$. The only
known algorithm that achieves this goal is an old algorithm from 1993,
due to Meiser~\cite{m-plah-93} (see also the related work~\cite{Meyer84} under the decision-tree model).
We review Meiser's algorithm in \secref{meiser-bvt}, add a few called-for enhancements,
and show that it can achieve query time of $O(d^4 \log n)$, and that the storage,
for this query cost,\footnote{%
   This large dependence of the storage on $d$ was missed in the
   analysis in \cite{m-plah-93}; it can be reduced by increasing the
   query time---see \secref{meiser-bvt} for full details.}
is $O(n^{2d\log d + O(d)})$. The storage has been tightened (in terms of its dependence on $n$, but not
on $d$) to $O(d^{O(d^3)}n^d)$ in a follow-up work by Liu~\cite{l-nplah-04}.

\paragraph{Canonical decompositions.}
A standard approach to these problems is via random sampling, in which one draws a random
sample of the hyperplanes, constructs a suitable decomposition of its arrangement, and
recurses within each cell of the decomposition. The efficiency of the resulting algorithm
depends on the quality of the sample, which is controlled by various parameters, discussed shortly.

Before discussing the random sampling methodologies that turn this
intuition into rigorous theory, we briefly review the two basic
techniques for decomposing an arrangement of hyperplanes into subcells of
``constant'' description complexity (here the qualifier ``constant''
is misleading, because it depends on $d$, which is not assumed to be
constant, but this is how the standard terminology goes). These are
\emph{bottom-vertex triangulation} and \emph{vertical decomposition}.

In the bottom-vertex triangulation approach (\BVT{} for short; see
\cite{m-ldg-02}), we triangulate (recursively) each facet in the
arrangement (within the hyperplane containing it), and then
triangulate $d$-space, by picking the lowest vertex $w$ in each full-dimensional
cell $C$, and by connecting $w$ to each of the $(d-1)$-simplices in the
triangulations of the facets of $C$ that are not adjacent to $w$ (some
care is needed when triangulating the unbounded cells of $\Arr(H)$);
see \secref{b:v:t} for more details.

The other scheme, \emph{vertical decomposition} (\VD{} for short; see
\cite{cegs-sessr-91,sa-dsstg-95}), is an extension to higher
dimensions of the standard two-dimensional vertical decomposition of
arrangements of lines or curves (see, e.g.,~\cite{as-aa-00, bcko-cgaa-08,sa-dsstg-95}).
It is constructed by a careful and somewhat involved recursion on the
dimension, and decomposes $\Arr(H)$ into box-like vertical prisms of
some special structure; see \cite{sa-dsstg-95} and \secref{v:de} for
full details.

\paragraph{Cuttings and random sampling.}
The key tool for fast point location is to construct a canonical decomposition
of $\Re^d$ into cells, such that, for a given parameter $1<r<n$, the following properties hold.
\begin{compactenum}[\quad (a)]
    \item Each cell in the decomposition is ``simple'', and, in particular,  it is
    relatively easy (and efficient, when $r$ is small) to find the cell containing the query point.
    \item The total number of cells in the decomposition is small (in
    particular, it only depends on $r$).
    \item For each cell $\cell$, the set of hyperplanes of $H$ that cross $\cell$, called the
 {\em conflict list} of $\cell$, and denoted $\clX{\cell}$, is of size at most $n/r$.
\end{compactenum}
Such a decomposition is called a \emph{$(1/r)$-cutting}.  See
\cite{c-c-05,cf-dvrsi-90, bs-ca-95, h-cctp-00} and references therein for
more information about cuttings.

The point location algorithm then recursively
constructs the same data structure for each cell $\cell$, on the set $\clX{\cell}$ of hyperplanes,
and the overall tree-like structure allows us to locate a query point $q$ in $\Arr(H)$ efficiently.
Meiser's point location algorithm, as well as many other point location algorithms, follow this standard approach.

The main general-purpose tool for constructing a cutting is to take a random sample $R$ of $H$
of a suitable size (that depends on $r$), and argue that a canonical decomposition $\Xi$ of $\Arr(R)$
(namely, either the \BVT{} or the \VD{} of $R$) is a $(1/r)$-cutting of $\Arr(H)$, with sufficiently large probability.

There are two basic techniques for estimating the size of the sample
that is required to guarantee this property. They are based, respectively, on
the $\eps$-net theorem~\cite{hw-ensrq-87}, and on the Clarkson-Shor random sampling theory~\cite{cs-arscg-89}.
We devote \secref{sect2} for the analysis of these sampling methods.
We first briefly review the first approach, and then spend most of the section on a
careful analysis of various aspects of the Clarkson-Shor theory. Although we present it in more generality,
we are mainly interested in its specialization to the case of hyperplanes in high-dimensional spaces.

\paragraph{Range spaces and growth functions.}
A key concept in the analysis around the $\eps$-net theorem is the notion of a \emph{range space}.
In general, this is a pair $(H,\R)$, where $H$ is some ground set (in
our case, a set of hyperplanes in $\Re^d$), and $\R$ is a family of
subsets of $H$, called \emph{ranges}. In our settings, a range is the
subset of all hyperplanes of $H$ that cross some region that ``looks
like'' a cell of the decomposition; that is, an arbitrary simplex for
bottom-vertex triangulation, and an arbitrary vertical prism (with
some constraints on its structure, discussed in \secref{v:de})
for vertical decomposition.

These range spaces are hereditary in nature, in the sense that each
subset $H' \subseteq H$ defines an induced (or projected) range space
$(H', \R')$, where $\R' = \VCProj{\R}{H'} = \Set{\range \cap H'}{\range \in \R}$.
Of key interest is the dependence of (an upper bound on) the number of ranges in this
range space on the size of $H'$.  Formally, we define the
(global) \emph{growth function} of $(H,\R)$ to be
$\Growth{\R}{m} = \ds \max_{H' \subseteq H,\; \cardin{H'}= m} \cardin{\VCProj{\R}{H'}}$.

We say that a growth function is \emph{$(\alpha,\beta)$-growing}, for
a real parameter $\alpha$ and an integer parameter $\beta\ge 1$, if
$\Growth{\R}{m} \leq 2^\alpha m^{\beta}$ (either for all $m$, or for all
$m$ larger than some constant threshold). As will be reviewed below,
the growth function of a range space is a key tool in obtaining
estimates on the size of a random sample $R$ that guarantees that the
corresponding decomposition of $\Arr(R)$ is a $(1/r)$-cutting.  We note that in
most previous studies, the parameter $\alpha$ is ignored, since it is considered to be
a constant, and as such has little effect on the analysis.
Here, in contrast, $\alpha$ will depend on $d$, which is not a
constant, and this will lead to some ``nonstandard'' estimates on the
performance of random sampling.

\paragraph{VC-dimension and primal shatter dimension.}
The VC-dimension of a range space $(H,\R)$ is the size of the largest
subset $H'$ of $H$ that can be \emph{shattered} by $\R$; that is, each
of the $2^{|H'|}$ subsets of $H'$ is a range in the restriction of
$\R$ to $H'$. In general, by the Sauer--Shelah lemma (see, e.g.,
\cite{pa-cg-95,h-gaa-11}), if $(H,\R)$ has finite VC-dimension
$\delta$ (namely, a value independent of $|H|$, which is the case
in the special instances considered here), then the growth function satisfies
\begin{equation} \eqlab{gr:ss}
\Growth{\R}{m} \le \sum_{j=0}^\delta \binom{m}{j} \le 2 \pth{\frac{me}{\delta}}^\delta ,
\end{equation}
where the right inequality holds for $m\ge\delta$.
The \emph{primal shatter dimension} $\delta_0$ of $(H,\R)$ is the smallest integer $\beta$
for which there exists an $\alpha\ge 0$ such that $g_\R$ is $(\alpha,\beta)$-growing.
Informally, $\delta_0$ is obtained (a) by obtaining the best upper bound on $g_{\R}(m)$,
in terms of $m$, and then (b) by stripping away the dependence of $g_\R$ on the corresponding parameter $\alpha$.
In our contexts, $\alpha$ may be large (but it does not depend on $m$), so this simplified
notation makes more sense as $m$ gets larger.
Clearly, \eqqref{gr:ss} implies that $\delta_0 \leq \delta$,
and in general they need not be equal.  For example, for the range space in which the
ground set consists of points in the plane and the ranges are (subsets
obtained by intersecting the ground set with) unit disks, the
VC-dimension is $3$ but the primal shatter dimension is only $2$.
We also note that reducing $\delta_0$ (down from $\delta$, when possible) might
incur an increase in $\alpha$.

\paragraph{$\eps$-nets.}
These notions of dimension are used to control the quality of the decomposition,
via the theory of \emph{$\eps$-nets}. For a general range space
$(H,\R)$ and a prespecified parameter $\eps\in (0,1)$, a subset
$N\subseteq H$ is an \emph{$\eps$-net} if every ``heavy'' range $R\in\R$,
of size at least $\eps|H|$, contains an element of $N$.
Equivalently, every range that is disjoint from $N$ is of size smaller
than $\eps |H|$.  The celebrated theorem of Haussler and Welzl~\cite{hw-ensrq-87},
or rather its enhancement by Blumer et al.~\cite{Blumer}
(stated as \thmref{epsilon:net} below) asserts that, if $(H,\R)$ is a range space of
finite VC-dimension $\delta$, then a random sample of
$\frac{c\delta}{\eps} \log \frac{1}{\eps}$ elements of $H$, for some
suitable absolute constant $c$, is an $\eps$-net with constant
probability.  As follows from the proof of this theorem, a similar
assertion holds (more or less) when we replace $\delta$ by the
(potentially smaller) primal shatter dimension $\delta_0$, except that the
required sample size is now $\frac{c\delta_0}{\eps} \log \frac{\delta_0}{\eps}$, so
the dimension also appears inside the logarithm.

In our context, if the sample $R$ is of size $c\delta r \log r$ (resp.,
$c\delta_0 r \log (\delta_0 r)$), where $\delta$ (resp., $\delta_0$) is the
VC-dimension (resp., primal shatter dimension) of the corresponding range space
(where ranges are defined by simplices or by vertical prisms), then,
with constant probability, (the interior of) each cell in the decomposition of $\Arr(R)$,
which, by construction, is not crossed by any hyperplane of $R$, is crossed by
fewer than $n/r$ hyperplanes of $H$, so the decomposition is a $(1/r)$-cutting of $\Arr(H)$.
In particular, choosing $r=2$, say, we guarantee, with constant probability, that if we choose $R$ to be of
size $2c\delta$ (resp., $2c\delta_0\log(2\delta_0)$), then (with constant probability)
the conflict list of each of the cells of the decomposition is of size at most $n/2$.

In the point location application, presented in \secref{sec:meiser}, this choice of $|R|$,
namely, the one that guarantees (with constant probability), that the size of each conflict
list goes down by a factor of at least $2$,
leads to the best bound for the cost of a query that this method
yields, but it results in a rather large bound on the storage size.
To decrease storage, we need to use significantly larger values of $r$,
that is, larger samples. Loosely speaking, a larger sample
size better controls the size of the recursive subproblems, and leads to smaller storage size,
but one pays for these properties in having to spend more time to locate the query point
in the canonical decomposition of the sample. See \secref{sec:meiser} for more details.

\paragraph{The Clarkson--Shor theory and combinatorial dimension.}
In contrast with the general approach based on range spaces,
Clarkson~\cite{c-narsc-87} (and later Clarkson and
Shor~\cite{cs-arscg-89}) developed an alternative theory, which, in
the context considered here, focuses only on simplices or vertical
prisms (referred to by the common term `cell' in what follows, not
to be confused with the undecomposed cells of the arrangement) that
can arise in the actual canonical space decomposition of the
arrangement of some sample $R$ of $H$. Each such cell $\cell$ has, in
addition to its conflict list $\clX{\cell}$, defined as above, also a
\emph{defining set} $\DefSet{\cell}$, which is the smallest subset
$H'$ of $H$ (or a smallest subset, in case of degeneracies),
for which $\cell$ is a cell in the decomposition of
$\Arr(H')$. We refer to the cells that arise in this manner as
\emph{definable cells}.  Clearly, each definable cell determines a
range in the corresponding range space $(H,\R)$, but, as will follow
from our analysis, not necessarily vice versa.

Define the \emph{combinatorial dimension}, associated with the
decomposition technique, to be the maximum size of the defining set of
a cell. (This too is a much more widely applicable notion, within the
Clarkson-Shor sampling theory, but,
again, we only consider it here in the context of decomposing
arrangements of hyperplanes.)

When we analyze the behavior of random sampling under the Clarkson-Shor theory,
we replace the global growth function, as introduced earlier, by a \emph{local}
growth function $u$, so that $u(m)$ is the maximum number of cells (simplices
or vertical prisms) in the corresponding canonical decomposition of $\Arr(R)$,
for a sample $R$ of size $m$.

\paragraph{Exponential decay.}
The $\eps$-net theory, as well as the alternative approach of Clarkson
and Shor, are concerned with the ``worst-case'' behavior of a random
sample, in the sense that, for a sample $R$, the quality of the
decomposition associated with $R$ is measured by the \emph{largest}
size of a conflict list of a cell in the decomposition, whereas the average
size of such a list is in general smaller (typically, by a logarithmic
factor).  The analysis of Clarkson and Shor \cite{cs-arscg-89} shows
that the average size of a conflict list for a sample of size $O(br)$
(where $b$ is the combinatorial dimension),
raised to any fixed power $c$
 is $O((n/r)^c)$, suggesting that the
number of cells whose conflict list is significantly larger must be
very small. This has been substantiated by Chazelle and Friedman
\cite{cf-dvrsi-90}, who showed that the number of cells in a
canonical decomposition of the sample, having a conflict list whose size is
at least $t$ times larger than $n/r$, is exponentially decreasing as a function
of $t$. This is known as the \emph{exponential decay lemma}. Various
extensions and variants of the lemma have been considered in the literature;
see, e.g., \cite{AMS98,CMS93}.

\paragraph{Back to our context.}
Our main motivation for this study  was to (analyze more precisely, and) improve
Meiser's data structure for point location in high-dimensional arrangements of hyperplanes.
 A first (and crucial) step towards this goal
is to gain better understanding of the exact values of the above
parameters (VC-dimension, primal shatter dimension, and combinatorial
dimension), in particular of the way they depend on the dimension $d$.

\paragraph{Our results: The sampling parameters.}
We first re-examine the sampling parameters, namely, the VC-dimension,
growth function (and primal shatter dimension), and combinatorial
dimension, for the two canonical space decomposition techniques,
bottom-vertex triangulation and vertical decomposition. We discover
several unexpected phenomena, which show that (i) the parameters of the
two techniques are quite different, and (ii) within each technique, there
are large gaps between the VC-dimension, the primal shatter dimension, and the
combinatorial dimension. Some of these gaps reflect our present inability to
tighten the bounds on the corresponding parameters, but most of them are ``genuine'',
and we provide lower bounds that establish the gaps.

For vertical decomposition, the combinatorial dimension is only $2d$,
as easily follows from the definition. We prove that the global growth function is
at most $2^{O(d^3)}n^{d(d+1)}$, so the primal shatter dimension is at most $d(d+1)$,
and that the VC-dimension is at least $1+\frac12 d(d+1)$ and
at most $O(d^3)$ (the constants of proportionality are absolute and rather small).
Although we do not believe that the gap between the lower and upper bounds on the
VC-dimension is really so large, we were unable to obtain a tighter upper
bound on the VC-dimension, and do not know what is the true
asymptotic growth of this parameter (in terms of its dependence on $d$).
The local growth function is only $O(4^dn^{2d}/d^{7/2})$.

For bottom-vertex triangulation, both the primal shatter dimension and
the combinatorial dimension are $\Theta(d^2)$, but there is still a
significant gap between them: The combinatorial dimension is
$\frac12 d(d+3)$, whereas (i) the primal shatter dimension is at most
$d(d+1)$, and (ii) the VC-dimension is at least $d(d+1)$; we do not
know whether these two quantities are equal, and in fact do not even
know whether the VC-dimension is $O(d^2)$ (here standard arguments
imply that it is $O(d^2\log d)$). We also bound the local growth function by $n^d$.

The bound on the local growth function for 
vertical decomposition is new.
The bound for vertical decomposition is a drastic improvement from the earlier
bounds (such as in \cite{cegs-sessr-91}), in terms of its dependence on $d$,
and we regard it as one of the significant contributions of this work. It is
obtained using several fairly simple yet crucial observations concerning the
structure of the vertical decomposition in an arrangement of hyperplanes. We regard
this part of the paper as being of independent interest, and believe that it
will find applications in further studies of the structure and complexity of
vertical decompositions and their applications.

\paragraph{Our results: Point location.}
These findings have implications for the quality of random sampling
for decompositions of arrangements of hyperplanes. That is, in view of
our current state of knowledge, the best approach (with some caveats)
is to use vertical decomposition, and analyze it via the Clarkson--Shor
technique, in the sense that we can then ensure the desired sampling
quality while choosing a much smaller random sample.

This will be significant in our main application, to point location in
an arrangement of $n$ hyperplanes in $\Re^d$.
We show that the (fastest) query cost in Meiser's algorithm can be
improved, from $O(d^4 \log n)$ to $O(d^3\log n)$, if one uses vertical
decomposition instead of bottom-vertex triangulation (which is the one used in \cite{m-plah-93}),
and only wants to ensure that the problem size goes down by (at least) a factor of $2$ in
each recursive step (which is the best choice for obtaining fast query time).

In addition, we give a detailed and rigorous analysis of the tradeoff
between the query time and the storage size (and preprocessing cost),
both in the context of Meiser's algorithm (which, as noted, uses bottom-vertex
triangulation), and ours (which uses vertical decomposition). Meiser's
study does not provide such an analysis, and, as follows from our
analysis in \secref{sec:meiser}, the storage bound asserted in~\cite{m-plah-93}
(for the query time mentioned above) appears to be incorrect.
The storage bound has been improved, in a follow-up study of Liu~\cite{l-nplah-04},
to $O(n^d)$, but the constant of proportionality is $d^{O(d^3)}$.

We do not discuss here the second application of our analysis, namely to the decision complexity of $k$-SUM,  as it is presented in a separate
companion paper~\cite{ES17},
 and in the more recent work by Kane et al.~\cite{KLM17} (where the latter work is applicable only in certain restricted scenarios).

As far as we can tell, the existing literature lacks precise details of
the preprocessing stage, including a sharp analysis of the storage and
preprocessing costs. (For example, Meiser's work~\cite{m-plah-93} completely
skips this analysis, and consequently misses the correct trade-off
between query and storage costs.) One feature of our analysis is a careful
and detailed description of (one possible simple version of) this preprocessing stage.
We regard this too as an independent (and in our opinion, significant)
contribution of this work.

The improvement that we obtain by using vertical decomposition
(instead of bottom vertex triangulation) is a
consequence of the ability to use a sample of smaller size (and still
guarantee the desired quality).  It comes with a cost, though, since
(the currently best known upper bound on) the complexity of vertical
decomposition is larger than the bound on the complexity of
bottom-vertex triangulation. Specifically, focusing only on the dependence
on $n$, the complexity of bottom-vertex triangulation is $O(n^d)$, but the
best known upper bound on the complexity of vertical decomposition is $O(n^{2d-4})$~\cite{koltun}.

\paragraph{Degeneracies.}
Another feature of our work is that it handles arrangements that are not in general position.
This requires some care in certain parts of the analysis, as detailed throughout the paper.
One motivation for handling degeneracies is the companion study~\cite{ES17} of the $k$-SUM problem,
in which the hyperplanes that arise are very much not in general position.

\paragraph{Optimistic sampling.}
We also consider situations in which we do not insist that the entire decomposition of the arrangement
of a random sample be a $(1/r)$-cutting, but only require that the cell of the decomposition that contain
a prespecified point $q$ have conflict list of size at most $n/r$. This allows us to use a slightly
smaller sample size (smaller by a logarithmic factor), which leads to a small improvement in the cost
of a point-location query. It comes with a cost, of some increase in the storage of the structure.
We provide full analysis of this approach in \secref{sec:meiser}.

\paragraph{The recent work of Kane \etal}
Very recently, in a dramatic breakthrough, Kane et al.~\cite{KLM17} presented a new approach
to point location in arrangements of hyperplanes in higher dimensions, based on tools from active learning.
Although not stated explicitly in their work, they essentially propose a new cell decomposition technique
for such arrangements, which is significantly simpler than the two techniques studied here (bottom-vertex
triangulation and vertical decomposition), and which, when applied to a suitable random sample of
the input hyperplanes, has (with constant probability) the cutting property (having conflict lists of small size).
However, their approach only works for ``low-complexity'' hyperplanes, namely hyperplanes that have integer
coefficients whose tuple has small $L_1$-norm.\footnote{Kane et al.~use a more general notion of ``low inference dimension'', and show that low-complexity hyperplanes do have low inference dimension.} Moreover, the analysis in \cite{KLM17} only caters to the
\emph{linear decision tree model} that we have already mentioned earlier. Nevertheless, when the input
hyperplanes do have low complexity, the machinery in \cite{KLM17} can be adapted to yield a complete solution
in the RAM model, for point location in such arrangements.
We present this extension in \secref{sec:meiser}, using a suitable adaptation of our general point-location machinery.

\paragraph{Paper organization.}
We begin in \secref{sect2} with the analysis of random sampling, focusing mainly on the Clarkson-Shor
approach. We then study bottom-vertex triangulation in \secref{b:v:t}, including its representation and
complexity, and the various associated sampling parameters. An analogous treatment of vertical
decomposition is given in \secref{v:de}. Finally, in \secref{sec:meiser}, we present the point location
algorithms, for both decomposition methods considered here, including the improvement that cen be obtained
from optimistic sampling, and the variant for low-complexity hyperplanes.

\section{Random sampling}
\seclab{sect2}

In this section we review various basic techniques for analyzing the
quality of a random sample, focusing (albeit not exclusively) on the
context of decompositions of arrangements of hyperplanes.  We also add
a few partially novel ingredients to the cauldron.

As mentioned in the introduction, there are two main analysis techniques for random sampling,
one based on the VC-dimension or the primal shatter dimension of a suitably defined range space,
and one that focuses only on definable ranges, namely those that arise in a canonical decomposition
associated with some subset of the input. In this section we only briefly review the first approach,
and then spend most of the section on a careful analysis of the second one.
This latter technique, known as the Clarkson-Shor sampling technique (originally introduced in
Clarkson \cite{c-narsc-87} and further developed in Clarkson and Shor \cite{cs-arscg-89}), has been thoroughly
studied in the past (it is about 30 years old by now), but the analysis presented here still has some
novel features:
\begin{compactenum}[(A)]
\item It aims to calibrate in a precise manner the dependence of the various
  performance bounds on the dimension $d$ of the ambient space (in most of the earlier works,
  some aspects of this dependence were swept under the rug, and were hidden in the $O(\cdot)$ notation).

\item It offers an alternative proof technique that is based on
  \emph{double sampling}. This method has been used in the work
  of Vapnik and Chervonenkis~\cite{vc-ucrfe-71, vc-ucfoe-13} and
  the proof of Haussler and Welzl~\cite{hw-ensrq-87} of the
  $\eps$-net theorem. It is also (implicitly) used in the work of
  Chazelle and Friedman \cite{cf-dvrsi-90} in the context of the
  Clarkson-Shor framework.%

\item We also study the
  scenario where we do not care about the quality of the sample in terms of how small are the
  conflict lists of all the cells of the decomposition, but only of the cell that contains some
  fixed pre-specified point. This allows us to use smaller sample size, and thereby further
  improve the performance bounds. As in the introduction, we refer to this approach as
  \emph{optimistic sampling}.
\end{compactenum}

\subsection{Sampling, VC-dimension and $\eps$-nets}

We use the following sampling model, which is the one used in the
classical works on random sampling, e.g., in Haussler and Welzl~\cite{hw-ensrq-87}.

\begin{defn}
    \deflab{m:sample}%
    Let $H$ be a finite set of objects.  For a target size $\rho$, a
    \emph{$\rho$-sample} is a random sample $R \subseteq H$, obtained by $\rho$
    independent draws (with repetition) of elements from $H$.
\end{defn}

In this model, since we allow repetitions, the size of $R$ is not fixed (as it was in
the original analysis in \cite{c-narsc-87,cs-arscg-89}); it is a random variable whose
value is at most $\rho$.

The oldest approach for guaranteeing high-quality sampling is based on
the following celebrated result of Haussler and Welzl. In its original formulation,
the bound on the sample size is slightly larger; the improved dependence on the VC-dimension
comes from Blumer et al.~\cite{Blumer} (see \cite{Komlos:1992} for a matching lower bound, and see also \cite{pa-cg-95}).

\begin{theorem}[$\eps$-net theorem, Haussler and Welzl~\cite{hw-ensrq-87}]
    \thmlab{epsilon:net}%
    Let $(H,\RangeSet)$ be a range space of VC-dimension $\Dim$,
    and let $0 < \eps \leq 1$ and $0 < \BadProb < 1$ be given parameters. Let
    $N$ be an $m$-sample, where
    $$
        m \geq \max\pth{\; \frac{4}{\eps} \log \frac{4}{\BadProb} \,,\,
           \frac{8\Dim}{\eps}\log \frac{16}{\eps} \;}.
    $$
    Then $N$ is an $\eps$-net for $H$ with probability at least $1-\BadProb$.
\end{theorem}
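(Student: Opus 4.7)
The plan is to follow the classical double-sampling argument of Vapnik--Chervonenkis, as refined by Haussler--Welzl and Blumer~et~al., while being careful about the sampling-with-repetition model used in \defref{m:sample} and tracking constants. Let $\Event_1$ be the bad event that $N$ fails to be an $\eps$-net, i.e.\ there exists a range $\range \in \RangeSet$ with $|\range| \geq \eps |H|$ and $\range \cap N = \emptyset$. The goal is to show $\Prob{\Event_1} \leq \BadProb$.

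The first step (double sampling) is to draw an independent second $m$-sample $N'$ and consider the event $\Event_2$ that there exists a heavy range $\range$ with $\range \cap N = \emptyset$ and $|\range \cap N'| \geq \eps m / 2$. I would first show $\Prob{\Event_2 \mid \Event_1} \geq 1/2$. Condition on a fixed heavy range $\range$ witnessing $\Event_1$; then $|\range \cap N'|$ is a sum of $m$ independent Bernoulli random variables, each with success probability $|\range|/|H| \geq \eps$, so its mean is at least $\eps m$. Using Chebyshev's inequality (or a Chernoff bound) combined with the hypothesis that $\eps m \geq 8$, the probability that $|\range \cap N'| < \eps m/2$ is at most $1/2$. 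Thus $\Prob{\Event_1} \leq 2 \Prob{\Event_2}$.

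The second step is the symmetrization trick. The event $\Event_2$ depends on the multiset $N \cup N'$ of $2m$ draws, so we can condition on this multiset $T$ and on which ranges restricted to $T$ witness $\Event_2$. Conditioned on $T$, the partition of $T$ into $N$ and $N'$ is uniform (up to a combinatorial factor coming from ordering), and for a fixed range $\range$ with $k := |\range \cap T| \geq \eps m / 2$ elements, the conditional probability that all $k$ elements fall into $N'$ is at most $2^{-k} \leq 2^{-\eps m / 2}$. By the Sauer--Shelah lemma applied to the projection of $\RangeSet$ onto $T$, the number of distinct ranges is at most $\sum_{j=0}^{\Dim}\binom{2m}{j} \leq (2em/\Dim)^{\Dim}$ when $m \geq \Dim$. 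A union bound then gives
\[
\Prob{\Event_2} \leq \pth{\tfrac{2em}{\Dim}}^{\Dim} 2^{-\eps m / 2}.
\]
Combining with the first step, it remains to verify that the two assumed lower bounds on $m$ force this quantity to be at most $\BadProb/2$. I would split the calculation: the term $m \geq (4/\eps)\log(4/\BadProb)$ ensures that the factor $2^{-\eps m/4}$ absorbs $4/\BadProb$, while the term $m \geq (8\Dim/\eps)\log(16/\eps)$ ensures that the other half $2^{-\eps m/4}$ dominates $(2em/\Dim)^\Dim$; a standard manipulation using $\log(2em/\Dim) \leq \log(16/\eps) + O(1)$ for $m$ in this range then closes the argument.

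The main obstacle I anticipate is not the overall structure, which is standard, but calibrating the constants so that the slightly weaker bounds in the statement (in particular the $8\Dim/\eps$ factor, versus the weaker $O(\Dim/\eps \cdot \log(1/\eps))$ in the original Haussler--Welzl paper) actually come out. This requires invoking the improved Sauer--Shelah estimate $(2em/\Dim)^{\Dim}$ rather than the cruder $(2m)^{\Dim}$, and carefully splitting the exponent $\eps m/2$ into two halves that separately dominate $4/\BadProb$ and $(2em/\Dim)^{\Dim}$. A secondary minor point is that our sampling model allows repetitions and thus produces a sample of size \emph{at most} $m$; since adding points to a set $N$ can only make it ``more'' of an $\eps$-net, the argument applies verbatim to the multiset of $m$ independent draws, and the conclusion transfers to the resulting set.
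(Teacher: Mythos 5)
The paper does not prove \thmref{epsilon:net}; it states it as a black box, citing Haussler--Welzl~\cite{hw-ensrq-87} for the theorem and Blumer~et~al.~\cite{Blumer} for the improved dependence on $\Dim$ (with pointers to \cite{pa-cg-95} and the matching lower bound in \cite{Komlos:1992}). So there is no ``paper's own proof'' to compare against. Your proposal is the classical Vapnik--Chervonenkis/Haussler--Welzl double-sampling-plus-symmetrization argument, and the outline is correct in all essential respects: the Chebyshev step for $\Prob{\Event_2 \mid \Event_1}\ge 1/2$ (valid here because the first lower bound on $m$ forces $\eps m \ge 4\log 4 = 8$), the symmetrization bound $\binom{2m-k}{m}/\binom{2m}{m}\le 2^{-k}$ conditioned on the multiset of $2m$ draws, the Sauer--Shelah estimate $\sum_{j\le\Dim}\binom{2m}{j}\le (2em/\Dim)^\Dim$, and the split of the exponent $\eps m/2$ into two halves, one absorbing $4/\BadProb$ and the other dominating $(2em/\Dim)^\Dim$. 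Your remark that the improved $(2em/\Dim)^\Dim$ estimate (rather than the cruder $(2m)^\Dim$) is what makes the stated constant $8\Dim/\eps\cdot\log(16/\eps)$ come out is exactly the Blumer~et~al.\ refinement the paper alludes to. Two small notes: (i) you should state explicitly that, for the symmetrization step, it suffices that the $2m$ i.i.d.\ draws are exchangeable, so conditioning on the multiset makes the split into $N$ and $N'$ a uniformly random partition of positions (the same device the paper uses, in a different context, in the proof of \lemref{double:sample}); (ii) in the union bound it is the projections $\range\cap T$ with $|\range\cap T|\ge \eps m/2$ that contribute, and the bound $2^{-k}\le 2^{-\eps m/2}$ applies to exactly those. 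Neither issue is a gap, just a place to be explicit when writing out the details.
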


\begin{remark}
    \remlab{epsilon:net:s}%
    A variant of the $\eps$-net theorem also holds for spaces with primal
    shattering dimension $\Dim_0$, except that the sample size
    has to be slightly larger.  Specifically, for the assertion of
    \thmref{epsilon:net} to hold, the required sample size is
    $\Theta\bigl( \frac{1}{\eps} \log \frac{1}{\BadProb} +
    \frac{\Dim_0}{\eps}\log \frac{\Dim_0}{\eps} \bigr)$. That is, $\Dim_0$
    also appears in the second logarithmic factor. See \cite{h-gaa-11} for details.
\end{remark}

\subsection{The Clarkson-Shor framework}
\seclab{e:d:l:setup}

We present the framework in a more general and abstract setting than what we really need,
in the hope that some of the partially novel features of the analysis will find
applications in other related problems.

Let $H$ be a set of $m$ objects, which are embedded in some space $E$.
We consider here a general setup where each subset $I \subseteq H$ defines a
decomposition of $E$ into a set $\CD{I}$ of canonical \emph{cells}, such that the following
properties hold.

\medskip

\begin{compactenum}[\;(a)]
        \item There exists an integer $b>0$ such that, for each $I \subseteq H$ and
        for each cell $\cell \in \CD{I}$, there exists a subset $J \subseteq I$ of
        size at most $\cDim$ such that $\cell \in \CD{J}$. A smallest such set $J$
        is called a \emph{defining set} of $\cell$, and is denoted as $\DefSet{\cell}$.
        Note that in degenerate situations $\DefSet{\cell}$ may not be uniquely defined,
        but it is uniquely defined if we assume general position. In this paper we allow
        degeneracies, and when $\DefSet{\cell}$ is not unique, any of the possible
        defining sets can be used.
        The minimum value of $\cDim$ satisfying this property is called the
        \emph{combinatorial dimension} of the decomposition.

        \smallskip%
        \item For any $I\subseteq H$ and any cell $\cell\in \CD{I}$, an object $f \in H$
        \emph{conflicts} with $\cell$, if $\cell$ is not a cell of
        $\CD{\DefSet{\cell} \cup \brc{f}}$; that is, the presence of $f$
        prevents the creation of $\cell$ in the decomposition corresponding to
        the augmented set. The set of objects in $H$ that conflict with $\cell$
        is called the \emph{conflict list} of $\cell$, and is denoted by $\KillSet{\cell}$.
        We always have $\DefSet{\cell}\cap \KillSet{\cell} = \emptyset$ for every defining set $\DefSet{\cell}$ of $\cell$.
    \end{compactenum}

\begin{figure}[b]
    \centerline{%
       \begin{tabular}{c|c|c|c}
         {\includegraphics[page=1,width=0.22\linewidth]%
         {figs/vertical}}%
         &
           {\includegraphics[page=2,width=0.22\linewidth]{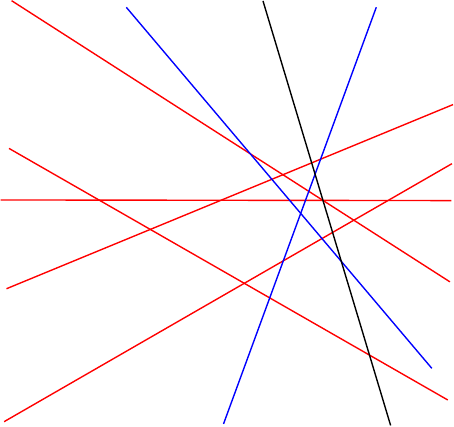}}%
         &
           {\includegraphics[page=3,width=0.22\linewidth]{figs/vertical}}%
         &
           {\includegraphics[page=4,width=0.22\linewidth]{figs/vertical}}%
         \\
         (A)
         &
           (B)
         &
           (C)
         &
           (D)
       \end{tabular}%
    }
    \caption{(A) Lines. (B) Vertical decomposition of the red
       lines. (C) Defining set for a vertical trapezoid $\sigma$, and its
       conflict list (the blue lines). (D) An alternative defining set
       for $\sigma$.}%
    \figlab{2dtrap}%
\end{figure}

\begin{example}
   Consider the case where $H$ is a set of $m$ lines in the plane,
    and the plane decomposition that a subset $I\subseteq H$ induces
    is the standard \emph{vertical decomposition} of $\Arr(I)$, as
    discussed in the introduction. Each cell in $\CD{H}$ is a
    \emph{vertical trapezoid}, defined by a subset of at most $b = 4$
    input lines. See \figref{2dtrap}.  The conflict list
    $\KillSet{\cell}$ of a vertical trapezoid $\cell$ is the set of
    lines in $H$ that intersect the interior of $\cell$. In degenerate
    situations, we may have additional lines that pass through
    vertices of $\cell$ without crossing its interior. Such lines are
    not included in $\KillSet{\cell}$, but may be part of an
    alternative defining set of $\cell$, see \figref{2dtrap} (D).
    %
\end{example}

We assume that the decomposition satisfies the following two conditions, sometimes
referred to as the \emph{axioms} of the framework.

\smallskip
\begin{compactenum}[\;(i)]
    \item 
    For any $R\subseteq H$ and for any $\cell \in \CD{R}$, we have $\DefSet{\cell} \subseteq R$,
    for some defining set $\DefSet{\cell}$ of $\cell$, and $\KillSet{\cell} \cap R = \emptyset$.

    \item 
    For any $R\subseteq H$, if $\DefSet{\cell} \subseteq R$, for some defining set
$\DefSet{\cell}$ of $\cell$, and
    $\KillSet{ \cell }\cap R=\emptyset$, then $\cell \in \CD{R}$.
\end{compactenum}

\smallskip%
When these conditions hold, the decomposition scheme \emph{complies} with the
analysis technique (or the framework) of Clarkson and Shor
\cite{cs-arscg-89} (see also \cite[Chapter 8]{h-gaa-11}).
These conditions do hold for the two decomposition schemes considered in this paper,
where the objects of $H$ are hyperplanes in $E=\Re^d$, and the decomposition is
either bottom-vertex triangulation or vertical decomposition.

\subsection{Double sampling, exponential decay, and cuttings}

\subsubsection{Double sampling}

One of the key ideas in our analysis is \emph{double sampling}.
This technique has already been used in the
original analysis of Vapnik and Chervonenkis from 1968 (see the
translated and republished version \cite{vc-ucfoe-13}) in the
context of VC-dimension.  In the context of the Clarkson-Shor
framework it was used implicitly by Chazelle and Friedman~\cite{cf-dvrsi-90}.
The technique is based on the intuition that two
independent random samples from the same universe (and of the same target size)
should look similar, in the specific sense that the sizes of their intersections
with any subset of the universe, of a suitable size, should not differ much from one another.

We need the following simple technical lemma.
\begin{lemma}
    \lemlab{double:sample}%
    Let $R_1$ and $R_2$ be two independent $\rho$-samples from the same universe
    $H$, and consider the merged sample $R = R_1 \cup R_2$. Let $B \subseteq H$ be
    a subset of $m$ elements. If $\rho\ge 2m$ then
    \begin{align*}
      \Prob{ \Bigl. B \subseteq R_1 \bigm\vert{ B \subseteq R}} \geq \frac{1}{3^m} .
    \end{align*}
\end{lemma}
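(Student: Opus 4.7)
The plan is to prove the lemma via a combination of exchangeability and a direct combinatorial bound on random subset intersections. I begin by noting that $\brc{B \subseteq R_1} \subseteq \brc{B \subseteq R}$, so the desired conditional probability equals $\Prob{B \subseteq R_1}/\Prob{B \subseteq R}$. View the two samples as arising from a single sequence $X_1,\ldots,X_{2\rho}$ of i.i.d.\ draws from $H$, with $R_1 = \brc{X_1,\ldots,X_\rho}$ and $R = \brc{X_1,\ldots,X_{2\rho}}$. By exchangeability of the $X_i$, conditional on $R$ (equivalently, on the multiset of drawn values together with its position labels), the assignment of positions to $R_1$ versus $R_2$ is uniform over all $\binom{2\rho}{\rho}$ splits; that is, $R_1$ has the same conditional distribution as $\brc{X_i : i \in S}$, where $S$ is a uniform random $\rho$-subset of $\brc{1,\ldots,2\rho}$, independent of the draws.

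Next, I condition on the full sequence $X_1, \ldots, X_{2\rho}$ under the event $B \subseteq R$. For each $b \in B$, let $P_b = \brc{i \in \brc{1,\ldots,2\rho} : X_i = b}$; the conditioning guarantees that every $P_b$ is non-empty, and the $P_b$'s are pairwise disjoint. The event $B \subseteq R_1$ now translates into: the random $\rho$-subset $S$ meets every $P_b$. To obtain a clean lower bound, I pick an arbitrary representative $p_b \in P_b$ for each $b \in B$; the event $\brc{p_1, \ldots, p_m} \subseteq S$ is contained in the desired event, and a direct count gives
\[
   \Prob{\brc{p_1, \ldots, p_m} \subseteq S} \;=\; \frac{\binom{2\rho - m}{\rho - m}}{\binom{2\rho}{\rho}} \;=\; \prod_{i=0}^{m-1} \frac{\rho - i}{2\rho - i}.
\]

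Finally, I verify that each factor in this product is at least $1/3$. The inequality $(\rho - i)/(2\rho - i) \geq 1/3$ is equivalent to $\rho \geq 2i$, which holds for every $i \in \brc{0, 1, \ldots, m-1}$ because $\rho \geq 2m$ by assumption. Hence the product, and therefore the conditional probability of $B \subseteq R_1$ given the full sequence, is at least $(1/3)^m$ on the event $B \subseteq R$. Taking expectation over $X_1,\ldots,X_{2\rho}$ conditioned on $B \subseteq R$ yields $\Prob{B \subseteq R_1 \mid B \subseteq R} \geq 1/3^m$, as required. The only conceptually non-trivial step is the exchangeability reduction to a uniform $\rho$-subset of positions; once that is in place, the ``representative'' trick removes any dependence on the individual sizes of the $P_b$'s and the bound becomes a one-line combinatorial estimate.
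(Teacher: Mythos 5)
Your proof is correct and follows essentially the same route as the paper's: you both reduce, via exchangeability, to the probability that a uniformly random $\rho$-subset of the $2\rho$ draw positions covers a set of $m$ representative positions (one per element of $B$), and you both arrive at the product $\prod_{i=0}^{m-1}\frac{\rho-i}{2\rho-i}$ and bound each factor by $1/3$ using $\rho\ge 2m$. The paper phrases the reduction via equivalence classes of shuffles while you phrase it via an independent uniform $\rho$-subset $S$ of positions; these are the same idea. One small imprecision: early on you write ``conditional on $R$ (equivalently, on the multiset of drawn values together with its position labels)'' — these are not equivalent, since the set $R$ carries strictly less information than the sequence of draws; but this does not affect the argument, since the computation you actually carry out correctly conditions on the full sequence.
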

\begin{proof}
    Let $A$ denote the event $B\subseteq R$.
    Each point $\pi$ in our probability space is  an ordered sequence of $2\rho$ draws:
    the $\rho$ draws of $R_1$ followed by the $\rho$ draws of $R_2$.
    The point $\pi$ is in $A$ if   every member of $B$ appears at least once in $\pi$.
    We define two sequences of $2\rho$ draws to be {\em equivalent} if one is a permutation of the other, and partition
    $A$ into the equivalence classes of this relation, so that each class consists of all the $(2\rho)!$ shuffles
    of (the indices of) some fixed sequence. In each class, each member of $B$ appears in $R$ some fixed number
    ($\ge 1$) of times. Fix one such class $A_1$. We estimate the conditional probability
    $\Prob{ \Bigl. B \subseteq R_1 \bigm\vert{ A_1}}$ from below, by deriving a lower bound for the number of shuffles
    in which every member of $B$ appears at least once in $R_1$. To do so, pick an arbitrary sequence $\sigma$
    in $A_1$, and choose $m$ (distinct) indices $i_1,i_2,\ldots,i_m$ such that $\sigma_{i_k}$ is the $k$-th
    element of $B$. (In general, this choice is not necessarily unique, in which case we arbitrarily pick
    one of these choices.) If the shuffle brings all these indices into $R_1$ (the first $\rho$ indices) then the event $B\subseteq R_1$ holds.
    The number of such shuffles is $\binom{\rho}{m} m! (2\rho-m)!$. That is,
\begin{align*}
    \Prob{ \Bigl. B \subseteq R_1 \bigm\vert{ A_1}} & \ge
    \frac{\binom{\rho}{m} m! (2\rho-m)! }{(2\rho)!} \\
    & = \frac{\rho!}{(\rho-m)!} \cdot \frac{(2\rho-m)!}{(2\rho)!} \\
    & = \frac{\rho}{2\rho}\cdot \frac{\rho-1}{2\rho-1}\cdots \frac{\rho-m+1}{2\rho-m+1} \\
    & \ge \left( \frac{\rho-m+1}{2\rho-m+1} \right)^m \ge \frac{1}{3^m} ,
\end{align*}
    where the last inequality follows from our assumption that $\rho\ge 2m$.
    Since this inequality holds for every class $A_1$ within $A$, it also holds for $A$; that is,
    $$
    \Prob{ \Bigl. B \subseteq R_1 \bigm\vert{ A}} \ge \frac{1}{3^m} ,
    $$
    as claimed.
\end{proof}

\begin{remark} \remlab{rem26}
The lower bound in \lemref{double:sample} holds also if instead of conditioning on
the event $A= B\subseteq R$ we condition on an event $A'\subseteq A$, where $A'$ is
closed under shuffles of the draws. That is, if some sequence $\pi$ of $2\rho$ draws
is in $A'$ then every permutation of $\pi$ should also be in $A'$.
We use this observation in the proof of \lemref{exponential:decay} below.
\end{remark}

\subsubsection{The exponential decay lemma}

For an input set $H$ of size $n$, and for parameters $\rho$ and $t \ge 1$,
we say that a cell $\cell$, in the decomposition of a $\rho$-sample from $H$, is
\emph{$\ell$-heavy} if the size of its conflict list is at least $\ell n/\rho$. (Note that $\ell$-heaviness
depends on $\rho$ too; we will apply this notation in contexts where $\rho$ has been fixed.)

\lemref{exponential:decay} below is a version of the exponential decay lemma, which gives
an upper bound on the expected number of $\ell$-heavy cells in the decomposition of a
$\rho$-sample from $H$. The original proof of the lemma was given by Chazelle and Friedman
\cite{cf-dvrsi-90}; our proof follows the one in Har-Peled \cite{h-sppss-16}.
Other instances and variants of the exponential decay lemma can be found, e.g., in \cite{AMS98,CMS93}.

Before proceeding, recall the definition of the global growth function, which
bounds the number of (definable or non-definable) ranges in any subset of $H$ of a given size.
In the rest of this section, we replace it by a ``local'' version, which provides, for each
integer $m$, an upper bound on the number of cells in the decomposition of any subset $I$ of
at most $m$ elements of $H$. We denote this function as $u(m)$; it is simply
$$
u(m) = \max_{I\subseteq H,\;|I|=m} |\CD{I}| .
$$

\begin{lemma}[Exponential decay lemma]%
 \lemlab{exponential:decay}
    Let $H$ be a set of $n$ objects with an associated decomposition scheme of
    combinatorial dimension $\cDim$ that complies with the Clarkson-Shor framework,
    and with local growth function $u(m) \le 2^\alpha m^\beta$, for some parameters $\alpha$ and
    $\beta\ge 1$. Let $R$ be a $\rho$-sample, for some $\rho \ge 4b$, let $\ell>1$ be arbitrary,
    and let $\CDH{R}{\ell}$ be the set of all $\ell$-heavy cells of $\CD{R}$. Then
    $$
        \Ex{\bigl. \cardin{ \CDH{R}{\ell} } } \le 3^b 2^{\alpha-\beta}\rho^\beta e^{-\ell/2} .
    $$
\end{lemma}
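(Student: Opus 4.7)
The plan is to combine the \emph{double sampling} technique of \lemref{double:sample} with the elementary tail estimate that the conflict list of a heavy cell is unlikely to miss a random sample of sufficient size. View the $\rho$-sample $R$ as the union $R = R_1 \cup R_2$ of two independent $(\rho/2)$-samples, which is valid in the with-replacement model. Since $\rho \ge 4b$, we have $|R_1| = |R_2| = \rho/2 \ge 2b$, exactly the hypothesis needed to invoke \lemref{double:sample} on subsets of size at most~$b$.

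First I would symmetrize via double sampling. For any potential cell $\sigma$ with defining set $D_\sigma$ of size $m \le b$, the event $\brc{\sigma \in \CD{R}} = \brc{D_\sigma \subseteq R,\ K_\sigma \cap R = \emptyset}$ is closed under permutations of the $\rho$ draws of $R$, so \lemref{double:sample} together with \remref{rem26} (taking $B = D_\sigma$) yields
\[
\Prob{ D_\sigma \subseteq R_1 \bigm|\, \sigma \in \CD{R} } \;\ge\; 3^{-m} \;\ge\; 3^{-b}.
\]
On this conditional event we also have $K_\sigma \cap R_1 \subseteq K_\sigma \cap R = \emptyset$, so by axiom~(ii) of the Clarkson--Shor framework $\sigma \in \CD{R_1}$. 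Hence $\Prob{\sigma \in \CD{R_1}} \ge 3^{-b}\,\Prob{\sigma \in \CD{R}}$, and summing over all $\ell$-heavy potential cells (those with $|K_\sigma| \ge \ell n/\rho$, which is the same threshold as $(\ell/2)$-heaviness with respect to $R_1$ since $|R_1| = \rho/2$),
\[
\Ex{\cardin{\CDH{R}{\ell}}} \;\le\; 3^{b}\cdot\Ex{\cardin{\CDH{R_1}{\ell/2}}} .
\]

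Second, I would combine the tail bound with the local growth function to estimate $\Ex{\cardin{\CDH{R_1}{\ell/2}}}$. For every $\ell$-heavy $\sigma$,
\[
\Prob{K_\sigma \cap R_1 = \emptyset} \;=\; (1 - |K_\sigma|/n)^{\rho/2} \;\le\; e^{-\ell/2}.
\]
Factoring
\[
\Prob{\sigma \in \CD{R_1}} \;=\; \Prob{K_\sigma \cap R_1 = \emptyset}\cdot \Prob{ D_\sigma \subseteq R_1 \bigm|\, K_\sigma \cap R_1 = \emptyset},
\]
the first factor produces the exponential $e^{-\ell/2}$. For the residual factor, observe that conditioned on $K_\sigma \cap R_1 = \emptyset$ the sample $R_1$ is distributed as a $(\rho/2)$-sample of the restricted universe $H \setminus K_\sigma$, in which $K_\sigma$ is vacuous and the event $D_\sigma \subseteq R_1$ corresponds exactly to $\sigma$ appearing as a cell of the restricted decomposition. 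Aggregating these residual contributions over all heavy $\sigma$ using axioms~(i)--(ii) gives a bound of $u(\rho/2) \le 2^{\alpha-\beta}\rho^\beta$. Multiplying the three factors $3^b$, $e^{-\ell/2}$, and $u(\rho/2)$ yields $\Ex{\cardin{\CDH{R}{\ell}}} \le 3^{b}\cdot 2^{\alpha-\beta}\rho^\beta\cdot e^{-\ell/2}$, as claimed.

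The hardest part will be this final aggregation step. The conditional sample spaces live in different universes $H \setminus K_\sigma$ (one per cell), so the bound $\sum_\sigma \Prob{\sigma \in \CD{R_1}} \le u(\rho/2)$ cannot be applied term-by-term to the conditional probabilities. The delicate point is to charge each heavy potential cell of $H$ injectively to some cell of the decomposition of a $(\rho/2)$-sample, without incurring significant overcounting, so that the total charge remains at most $u(\rho/2)$. Once this accounting is in place the three factors multiply directly to give the stated estimate.
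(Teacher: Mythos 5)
Your proposal gets the framing right---double sampling, the $3^{-b}$ lower bound from \lemref{double:sample}/\remref{rem26}, and the exponential tail $(1-\kappa/n)^{\rho/2}\le e^{-\ell/2}$---but the argument has a genuine gap, which you yourself spotted at the end, and the fix is not just a ``delicate accounting'' but a different route that you stepped off of early.

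The problematic move is in the first step: from $\Prob{D_\sigma\subseteq R_1 \wedge \sigma\in\CD{R}}\ge 3^{-b}\Prob{\sigma\in\CD{R}}$ you pass immediately to $\Prob{\sigma\in\CD{R_1}}\ge 3^{-b}\Prob{\sigma\in\CD{R}}$ and then to $\Ex{|\CDH{R}{\ell}|}\le 3^b\,\Ex{|\CDH{R_1}{\ell/2}|}$. This reduction is valid but lossy---it throws away the event $\sigma\in\CD{R}$, which is exactly where the exponential decay lives---and the resulting quantity $\Ex{|\CDH{R_1}{\ell/2}|}$ is of the same type as what you started with, so bounding it is either circular or requires the whole argument again. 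Your attempt to recover the exponential by factoring $\Prob{\sigma\in\CD{R_1}}=\Prob{K_\sigma\cap R_1=\emptyset}\cdot\Prob{D_\sigma\subseteq R_1\mid K_\sigma\cap R_1=\emptyset}$ then runs into the wall you identified: the residual conditional probabilities live on cell-dependent universes $H\setminus K_\sigma$, so their sum over heavy $\sigma$ has no clean relation to $u(\rho/2)$, and I do not believe the ``injective charging'' you hope for exists in any simple form.

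The correct route (the paper's) keeps the conjunction and factors it the other way. Starting from $\Prob{\sigma\in\CD{R}}\le 3^b\,\Prob{\sigma\in\CD{R_1}\wedge\sigma\in\CD{R}}$, write
$\Prob{\sigma\in\CD{R_1}\wedge\sigma\in\CD{R}}=\Prob{\sigma\in\CD{R}\mid\sigma\in\CD{R_1}}\cdot\Prob{\sigma\in\CD{R_1}}$.
Now use the independence of $R_1$ and $R_2$: conditioned on $\sigma\in\CD{R_1}$ (so $D_\sigma\subseteq R_1\subseteq R$ and $K_\sigma\cap R_1=\emptyset$), the event $\sigma\in\CD{R}$ reduces exactly to $K_\sigma\cap R_2=\emptyset$, so $\Prob{\sigma\in\CD{R}\mid\sigma\in\CD{R_1}}=\Prob{K_\sigma\cap R_2=\emptyset}\le e^{-\ell/2}$. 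The exponential factor is thus supplied by the \emph{fresh, independent} sample $R_2$, not by re-examining $R_1$. What remains is $\sum_{\sigma\in\Family_{\ge\ell}}\Prob{\sigma\in\CD{R_1}}\le\sum_{\sigma\in\Family}\Prob{\sigma\in\CD{R_1}}=\Ex{|\CD{R_1}|}\le u(\rho/2)\le 2^{\alpha-\beta}\rho^\beta$, which is a sum over a \emph{single} probability space and so aggregates cleanly---no cell-dependent universes appear. In short: do not discard $\sigma\in\CD{R}$ after the double-sampling step; instead condition on $\sigma\in\CD{R_1}$ and let $R_2$ produce the $e^{-\ell/2}$.
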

\begin{proof}
Consider $R$ as the union of two independent $(\rho/2)$-samples $R_1$ and $R_2$.
Let $\cell$ be a cell in $\CD{R}$.
By the  axioms of the Clarkson-Shor framework, $R$ contains at least one defining set $D = \DefSet{\cell}$,
and $K \cap R = \emptyset$ where $K = \KillSet{\cell}$.
Let $D_1,\ldots,D_y$ be all the possible defining sets of $\cell$, and let $A_i$, for $i=1,\ldots,y$, be the event that $D_i\subseteq R$ but
$D_j \not\subseteq R$ for $j <i$. Then we have
\begin{align*}
    \Prob{\bigl. \cell \in \CD{R_1} \mid  \cell \in \CD{R}}
   & =  \frac{\Prob{\bigl. \cell \in \CD{R_1} \wedge \cell \in \CD{R}} }{ {\Prob{\cell \in \CD{R}}}}  \\
 & =  \frac{\sum_{i=1}^y\Prob{\bigl. \cell \in \CD{R_1} \wedge A_i \wedge (K \cap R = \emptyset) } }{ {\Prob{\cell \in \CD{R}}}}  \\
& =  \frac{\sum_{i=1}^y\Prob{\bigl. \cell \in \CD{R_1} \mid A_i \wedge (K \cap R = \emptyset) } \Prob{A_i \wedge (K \cap R = \emptyset)} }{ {\Prob{\cell \in \CD{R}}}}  \\
&=%
\sum_{i=1}^y \Prob{\bigl. \cell \in \CD{R_1} \mid{ A_i \wedge (K \cap R = \emptyset) }} \Prob{A_i \mid \cell \in \CD{R} } \\
&\ge %
   \sum_{i=1}^y\Prob{\bigl. D_i \subseteq R_1 \mid{ A_i \wedge (K \cap R = \emptyset) }}\Prob{A_i \mid \cell \in \CD{R} } .
\end{align*}
We can now apply \lemref{double:sample} and the following \remref{rem26};
they apply since (i) each conditioning event
$(D_i\subseteq R) \wedge (\forall_{j<i} (D_j \not\subseteq R)) \wedge (K \cap R = \emptyset)$
is contained in the corresponding event $D_i\subseteq R$,
and is invariant under permutations of the draws, and (ii) $|R_1| = \rho/2 \ge 2b \ge 2|D|$. They imply that
$$
\Prob{\bigl. D_i \subseteq R_1 \mid{ A_i \wedge (K \cap R = \emptyset) }} \ge 1/3^b ,
$$
for $i=1,\ldots,y$. Furthermore, $\sum_{i=1}^y \Prob{A_i \mid \cell \in \CD{R} } = 1$,
as is easily checked, so we have shown that
$$
\Prob{\bigl. \cell \in \CD{R_1} \mid  \cell \in \CD{R}} \ge 1/3^b .
$$
Expanding this conditional probability and rearranging, we get
    \begin{align*}
      \Prob{ \big. \cell \in \CD{R}} \leq
      3^b \Prob{ \big. \cell \in \CD{R_1}\, \wedge\,
      \cell \in \CD{R}  } .%
      \eqlab{one}
    \end{align*}
Now, if $\cell$ is $\ell$-heavy with respect to $R$ (i.e., $\kappa:=\cardin{ \KillSet{\cell}} \geq \ell n/\rho$), then
\begin{equation}
    \Prob{ \bigl. \cell \in \CD{R} \mid{ \cell \in \CD{R_1}}}%
    = \Prob{ \KillSet{\cell} \cap R_2 =\emptyset}
    = (1-\kappa/n)^{\rho/2} \le e^{-\kappa\rho/(2n)} \le e^{-\ell/2} .  \eqlab{eq:expdecay}
\end{equation}
Let $\Family$ denote the set of all possible cells that arise in the decomposition
of some sample of $H$, and let $\Family_{\geq \ell}\subseteq \Family$ be the set of all
cells with $|K(\sigma)| \ge \ell n/\rho$. We have
    \begin{align*}
      \Ex{\bigl. \cardin{ \CDH{R}{\ell} } }%
      &=%
        \sum_{\cell \in \Family_{\geq \ell}} \Prob{ \bigl. \cell \in \CD{R}}
        \leq%
        \sum_{\cell \in \Family_{\geq \ell}} 3^b
        \Prob{ \bigl.  \cell \in
        \CD{R_1} \,\wedge\,  \cell \in \CD{R}}%
      \\ &%
          =%
          3^b \sum_{\cell \in \Family_{\geq \ell}} \underbrace{ \Prob{ \bigl.  \cell
          \in \CD{R} \mid{ \cell \in \CD{R_1}}}}_{\leq
          e^{-\ell/2}} \Prob{\bigl. \cell \in \CD{R_1}}
           \leq%
           3^b e^{-\ell/2} \sum_{\cell \in \Family_{\geq \ell}} \Prob{ \bigl. \cell \in
           \CD{R_1}}%
      \\ &%
        \leq%
        3^b e^{-\ell/2} \sum_{\cell \in \Family} \Prob{ \bigl. \cell \in
        \CD{R_1}}%
        = %
        3^b e^{-\ell/2} \Ex{\bigl. \cardin{\CD{R_1}}}%
        \leq%
        3^b e^{-\ell/2} u\pth{ \bigl. \cardin{{R_1}}} \\
        & \le %
        3^{b} e^{-\ell/2} 2^{\alpha} (\rho/2)^\beta
        \le 3^b 2^{\alpha-\beta} \rho^\beta e^{-\ell/2} ,
    \end{align*}
as asserted.
\end{proof}

\subsubsection{Cuttings from exponential decay}

The following central lemma provides a lower bound on the (expected) size of a sample $R$ that
guarantes, with controllable probability, that the associated space decomposition is a \emph{$(1/r)$-cutting}.
In the general setting in which we analyze the Clarkson-Shor framework, this simply means that each cell
of $\CD{R}$ has conflict list of size at most $n/r$.

\begin{lemma}%
    \lemlab{weak:cutting}%
    Let $H$ be a set of $n$ objects with an associated decomposition scheme of combinatorial dimension $b$
    that complies with the Clarkson--Shor framework, with local growth function
    $u\pth{m} \le 2^\alpha m^\beta$, for some parameters $\alpha$ and $\beta\ge 1$.
    Let $\BadProb \in (0,1)$ be some confidence parameter. Then, for
    \begin{align*}
      \rho \geq
      \max \pth{%
      4r\pth{ b\ln 3 + \pth{ \alpha - \beta }\ln 2 + \ln \frac{1}{\BadProb}} \, , \,
      8r\beta \ln(4 r \beta) } ,
    \end{align*}
    a $\rho$-sample induces a decomposition $\CD{R}$ that is a $(1/r)$-cutting with probability $\geq 1-\BadProb$.
\end{lemma}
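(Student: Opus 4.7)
The plan is to reduce the lemma to the exponential decay lemma (\lemref{exponential:decay}) via a one-line Markov argument, and then to verify that the stated lower bound on $\rho$ is exactly what is needed to drive the resulting expectation below $\BadProb$.

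First I would observe that $\CD{R}$ fails to be a $(1/r)$-cutting if and only if there is a cell $\cell \in \CD{R}$ with $|\KillSet{\cell}| > n/r$, equivalently $|\KillSet{\cell}| > \ell \cdot n/\rho$ for $\ell := \rho/r$. Under the stated bound on $\rho$, we have $\ell > 1$ and $\rho \ge 4b$, so \lemref{exponential:decay} applies and gives
\begin{align*}
  \Ex{\bigl.\cardin{\CDH{R}{\ell}}} \;\le\; 3^b\, 2^{\alpha-\beta}\, \rho^\beta\, e^{-\rho/(2r)} .
\end{align*}
Since $\CDH{R}{\ell}$ is a nonnegative integer-valued random variable, Markov's inequality gives $\Prob{\CDH{R}{\ell} \ne \emptyset} \le \Ex{|\CDH{R}{\ell}|}$, so it suffices to show that the right-hand side is at most $\BadProb$, i.e.\ that
\begin{align*}
  \frac{\rho}{2r} \;\ge\; b\ln 3 + (\alpha-\beta)\ln 2 + \beta\ln\rho + \ln\frac{1}{\BadProb} .
\end{align*}

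I would split the right-hand side into two parts and bound each by $\rho/(4r)$, using the two branches of the $\max$ in the statement. The first branch, $\rho \ge 4r\bigl(b\ln 3 + (\alpha-\beta)\ln 2 + \ln(1/\BadProb)\bigr)$, directly yields $\rho/(4r) \ge b\ln 3 + (\alpha-\beta)\ln 2 + \ln(1/\BadProb)$. For the remaining $\beta\ln\rho$ term, I would use the second branch, $\rho \ge 8r\beta\ln(4r\beta)$, and show that this implies $\rho/(4r) \ge \beta\ln\rho$; substituting $\rho = 8r\beta\ln(4r\beta)$ reduces the inequality to $2\ln(4r\beta) \ge \ln\bigl(8r\beta\ln(4r\beta)\bigr)$, i.e.\ $2r\beta \ge \ln(4r\beta)$, which is immediate for $r,\beta \ge 1$; monotonicity of $x \mapsto x - 4r\beta\ln x$ for $x \ge 4r\beta$ extends this to all larger $\rho$. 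Summing the two halves gives the required inequality.

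The main (mild) obstacle is the second half of this estimate: the term $\beta\ln\rho$ on the right-hand side depends on $\rho$ itself, so the bound on $\rho$ must be chosen self-consistently, and this is precisely why the definition of $\rho$ is phrased as a $\max$ with the explicit transcendental term $8r\beta\ln(4r\beta)$. Once that calibration is verified, combining the two bounds yields $\rho/(2r) \ge b\ln 3 + (\alpha-\beta)\ln 2 + \beta\ln\rho + \ln(1/\BadProb)$, hence $\Ex{|\CDH{R}{\rho/r}|} \le \BadProb$, and the lemma follows from Markov's inequality.
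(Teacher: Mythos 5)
Your proof is correct and follows essentially the same route as the paper: apply the exponential decay lemma with $\ell = \rho/r$, invoke Markov's inequality on the expected count of heavy cells, and then verify that the stated lower bound on $\rho$ makes $3^b 2^{\alpha-\beta}\rho^\beta e^{-\rho/(2r)} \le \BadProb$. The only cosmetic difference is that you bound each of the two pieces of the right-hand side by $\rho/(4r)$ and add, whereas the paper does a case split on whether $\beta\ln\rho$ dominates the remaining terms; both rest on the same two inequalities encoded in the $\max$.
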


\begin{proof}
    By \lemref{exponential:decay} (which we can apply since
    $\rho \ge 4r\pth{ b\ln 3 + \pth{ \alpha - \beta }\ln 2 + \ln \frac{1}{\BadProb}} \ge 4b$),
    the expected number of cells in $\CD{R}$ that are $\ell$-heavy is
    $\Ex{\bigl. \cardin{ \CDH{R}{\ell} } } \le 3^b 2^{\alpha-\beta}\rho^\beta e^{-\ell/2}$.
    We now require that $\ell n/ \rho \leq n /r$, and that the probability to obtain at least one heavy cell
    be smaller than $\BadProb$. We satisfy the first constraint by taking $\ell = \rho /r$, and we
    satisfy the second by making the expected number of heavy cells be smaller than $\BadProb$.
    Solving for this latter constraint, we get
    \begin{align*}
        3^b 2^{\alpha-\beta}\rho^\beta e^{-\rho /(2r)} \leq \BadProb & \iff%
        b\ln 3 + (\alpha-\beta) \ln 2 + \beta \ln \rho -\frac{\rho}{2r} \leq \ln \BadProb \\%
        & \iff \rho \geq 2r \pth{  b\ln 3 +(\alpha-\beta) \ln 2 + \beta \ln \rho +\ln \frac{1}{\BadProb}}.
    \end{align*}
    We assume that $\rho$ satisfies the conditions in the lemma. If the term $\beta \ln \rho$ is smaller than
    the sum of the other terms (in the parenthesis), then the inequality holds since
    \begin{math}
        \rho \geq 4r \pth{ b\ln 3 + (\alpha-\beta)\ln 2 +\ln \frac{1}{\BadProb}}.
    \end{math}
    Otherwise, the inequality holds since $\rho \geq 8r \beta \ln( 4r \beta)$, and for such values of $\rho$ we have
    \begin{math}
        \rho \geq 4r \beta \ln \rho,
    \end{math}
    as is easily checked.\footnote{%
      The general statement is that if $\rho \ge 2x\ln x$ then $\rho \ge x\ln \rho$. We will use this
      property again, in the proof of \lemref{v:c:dim}.}

    It follows that, with this choice of $\rho$, the probability of any heavy cell to exist
    in $\CD{R}$ is at most $\BadProb$ (by Markov's inequality), and the claim follows.
\end{proof}

\subsubsection{Optimistic sampling}%
\seclab{sec:optim}

In this final variant of random sampling, we focus only on the cell in the decomposition that contains
some prespecified point, and show that we can ensure that this cell is ``light'' (with high probability)
by using a smaller sample size. Note that here, as can be expected, the local growth function plays no role.

\begin{lemma}
    \lemlab{optimistic}%
    Let $H$ be a set of $n$ objects with an associated decomposition scheme of combinatorial
    dimension $\cDim$ that complies with the Clarkson-Shor framework,
    and let $R$ be a $\rho$-sample from $H$. Then, for any fixed point $q$ and any $\ell>1$, we have
    \begin{align*}
      \Prob{ \cardin{K(\cell(q, R))} \geq {\ell n}/{\rho}}
      \leq 3^\cDim e^{-\ell/2},
    \end{align*}
    where $\cell = \cell(q, R)$ is the cell in $\CD{R}$ that contains $q$.
\end{lemma}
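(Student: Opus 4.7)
The plan is to mimic the proof of the exponential decay lemma (\lemref{exponential:decay}), but instead of summing over all heavy cells that appear in $\CD{R}$, sum only over those heavy cells that contain the fixed query point $q$. This restriction means that at most one such cell can appear in any given decomposition, so the final ``union bound'' contributes a factor of $1$ rather than a factor that grows like $u(\rho/2)$, which is exactly why the local growth function disappears from the bound.

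More concretely, I would write
\begin{align*}
\Prob{ \cardin{K(\cell(q,R))} \geq \ell n/\rho }
 \;\le\; \sum_{\cell \in \Family_{\ge \ell},\, q\in \cell} \Prob{\cell \in \CD{R}},
\end{align*}
using the fact that $\cell(q,R)=\cell$ implies $\cell\in\CD{R}$ and $q\in\cell$ (employing any fixed tie-breaking convention to handle degenerate cases where $q$ sits on the boundary of several cells). Then, exactly as in the proof of \lemref{exponential:decay}, I would split $R$ into two independent $(\rho/2)$-samples $R_1,R_2$ and use \lemref{double:sample} together with \remref{rem26} to obtain
\begin{align*}
\Prob{\cell \in \CD{R}} \;\le\; 3^{\cDim}\, \Prob{\cell \in \CD{R_1} \wedge \cell\in \CD{R}}.
\end{align*}
Conditioning on $\cell\in\CD{R_1}$, the event $\cell\in\CD{R}$ requires $\KillSet{\cell}\cap R_2=\emptyset$, which, as in \Eqref{eq:expdecay}, has probability at most $(1-\ell/\rho)^{\rho/2}\le e^{-\ell/2}$ since $|\KillSet{\cell}|\ge \ell n/\rho$. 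Thus
\begin{align*}
\Prob{\cell \in \CD{R}} \;\le\; 3^{\cDim}\, e^{-\ell/2}\, \Prob{\cell\in \CD{R_1}}.
\end{align*}

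The key step where things differ from \lemref{exponential:decay} comes now: since all the cells summed over contain the single point $q$, and since at most one cell of $\CD{R_1}$ can contain $q$, the events $\{\cell \in \CD{R_1}\}$ for $\cell \in \Family_{\ge \ell}$ with $q\in\cell$ are pairwise disjoint. Hence
\begin{align*}
\sum_{\cell \in \Family_{\ge \ell},\, q\in \cell} \Prob{\cell\in \CD{R_1}} \;\le\; 1,
\end{align*}
and combining the inequalities yields the claimed bound $3^{\cDim} e^{-\ell/2}$.

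The main obstacle, if any, is purely bookkeeping: making sure that the event $\cell(q,R)=\cell$ is handled cleanly when $q$ lies on a boundary shared by several cells, and making sure that the conditioning event used to invoke \lemref{double:sample} (namely that a particular defining set lies in $R$ while its conflict list avoids $R$) is closed under permutations of the draws, so that \remref{rem26} applies. Both issues are resolved exactly as in the proof of \lemref{exponential:decay}, by summing over the possible defining sets of $\cell$ via an $A_i$-decomposition and by adopting a fixed tie-breaking rule for the cell containing $q$.
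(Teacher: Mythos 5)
Your proof is correct and follows essentially the same approach as the paper's own proof: split $R$ into two independent $(\rho/2)$-samples, apply \lemref{double:sample} and \remref{rem26} to get the factor $3^{\cDim}$, condition on $\cell\in\CD{R_1}$ to extract the $e^{-\ell/2}$ factor as in \lemref{exponential:decay}, and then observe that the events $\{\cell\in\CD{R_1}\}$ over cells $\cell$ containing $q$ are pairwise disjoint, so the final sum is at most $1$. The only cosmetic differences are that the paper writes the first step as an equality rather than an inequality (both are valid, since of the cells containing $q$, at most one can appear in $\CD{R}$), and that you bound $(1-\kappa/n)^{\rho/2}$ via $(1-\ell/\rho)^{\rho/2}$ whereas the paper goes directly through $e^{-\kappa\rho/(2n)}$; these are equivalent.
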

\begin{proof}
We regard $R$ as the union of two independent random $(\rho/2)$-samples $R_1$, $R_2$,
as in the proof of \lemref{exponential:decay}, where it is shown that, for cells $\cell$ that
are $\ell$-heavy, we have
$$
\Prob{ \big. \cell \in \CD{R}} \leq
      3^b \Prob{ \big. \cell \in \CD{R_1}\, \wedge\, \cell \in \CD{R}  } \leq
      3^b e^{-\ell/2} \Prob{ \big. \cell \in \CD{R_1} } .
$$
Let $\Family_{\geq \ell}(q)$ be the set of all
possible canonical cells that are defined by subsets of $H$, contain
$q$, and are $\ell$-heavy (for the fixed choice of $\rho$). We have
\begin{align*}
  \Prob{ \cardin{\bigl. \KillSet{\cell(q, R)}} \geq {\ell n}/{\rho}}
  & = \sum_{\cell \in \Family_{\geq \ell}(q)} \Prob{ \bigl. \cell \in \CD{R}} \\
  & \leq 3^\cDim e^{-\ell/2}   \sum_{\cell \in \Family_{\geq \ell }(q)} \Prob{ \cell \in \CD{R_1}}
      \leq 3^\cDim e^{-\ell/2},
\end{align*}
since the events in the last summation are pairwise disjoint (only one cell in $ \Family_{\geq \ell}(q)$
can show up in $\CD{R_1}$).
\end{proof}

\begin{corollary}
    \corlab{optimistic}%
    Let $H$ be a set of $n$ objects with an associated decomposition scheme of combinatorial
    dimension $\cDim$ that complies with the Clarkson-Shor framework, and let
    $\BadProb \in (0,1)$ be a confidence parameter. Let $R$ be a $\rho$-sample from $H$, for
    $\rho \ge 2r( b\ln 3 + \ln \frac{1}{\BadProb})$. For any prespecified query point $q$, we have
    \begin{math}
        \Prob{ \cardin{\bigl.K(\cell(q, R))} > n/r } \leq \BadProb,
    \end{math}
    where $\cell = \cell(q, R)$ is the canonical cell in $\CD{R}$ that
    contains $q$.
\end{corollary}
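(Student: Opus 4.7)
The plan is to derive the corollary as an immediate consequence of \lemref{optimistic}, by choosing the heaviness parameter $\ell$ in that lemma so that the threshold $\ell n/\rho$ becomes the desired threshold $n/r$. Concretely, I would set $\ell := \rho/r$ and then apply the lemma with this value.

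First I would verify the hypothesis $\ell > 1$ required by \lemref{optimistic}. This is immediate from the assumption $\rho \ge 2r\pth{ b\ln 3 + \ln(1/\BadProb)}$ together with $b \ge 1$, since then $\ell = \rho/r \ge 2b\ln 3 \ge 2\ln 3 > 1$. With this choice of $\ell$, \lemref{optimistic} yields
\begin{align*}
    \Prob{\cardin{K(\cell(q, R))} \geq n/r} \;\le\; 3^{b}\, e^{-\rho/(2r)} .
\end{align*}

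Next I would check that the right-hand side is at most $\BadProb$. Taking logarithms, this is equivalent to $b\ln 3 - \rho/(2r) \le \ln \BadProb$, i.e., $\rho \ge 2r\pth{b\ln 3 + \ln(1/\BadProb)}$, which is precisely the hypothesis of the corollary. Since the event $\{\cardin{K(\cell(q, R))} > n/r\}$ is contained in $\{\cardin{K(\cell(q, R))} \ge n/r\}$, the stated bound follows.

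There is no real obstacle here: the substantive probabilistic content already sits inside \lemref{optimistic}, and the corollary merely calibrates the concrete sample size $\rho$ needed to drive the tail probability below a prescribed confidence level $\BadProb$. The whole argument reduces to the short parameter computation sketched above.
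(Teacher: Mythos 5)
Your proof is correct and follows essentially the same route as the paper's own argument: set $\ell=\rho/r$, apply \lemref{optimistic}, and verify that the resulting bound $3^b e^{-\rho/(2r)}$ is at most $\BadProb$ given the assumed lower bound on $\rho$. The paper's version is terser and omits the (easy) sanity checks that $\ell>1$ and that the event with strict inequality is dominated, but there is no substantive difference in the argument.
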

\begin{proof}
    Put $\ell = \rho/r \ge 2(b\ln 3 + \frac{1}{\BadProb})$. By \lemref{optimistic}, we have
    $\Prob{ \cardin{\bigl.K(\cell(q, R))} > n/r} \le 3^be^{-\ell/2} \le \BadProb$, as claimed.
\end{proof}

\begin{remark}
  The significance of \corref{optimistic} is that it requires a sample size of $O(br)$,
  instead of $\Theta(br\log(br))$, if we are willing to consider only the single cell $\cell$
  that contains $q$, rather than \emph{all} cells in $\CD{R}$. We will use this result in
  \secref{sec:meiser} to obtain a slightly improved point-location algorithm.
\end{remark}

The results of this section are summarized in the table in \figref{var:dims}.

\begin{figure}
    \centerline{
       \begin{tabular}{|l|c|c||}
         \hline
         Technique & Sample size  & parameters \\
         \hline
         \hline
         $\Bigl.\eps$-net & $\ds O\pth{ \delta r \log r}$  & $\delta$: VC-dim.  \\
         \hline
         Shatter dim. & $\Bigl.\ds O\pth{ \delta_0 r  \log (\delta_0 r) }$ &  $\delta_0$: Shatter dim.  \\
         \hline
         \hline
         Combinatorial dim. &
           \begin{minipage}{0.25\linewidth}
               \begin{center}
                   $\Bigl. O\pth{ \bigl. r\pth{ b + \alpha + \beta \log( \beta r )} }$\\
                   \lemref{weak:cutting}
               \end{center}
           \end{minipage}
         &
           \begin{minipage}{0.25\linewidth}
           $b$: combinatorial dim. \\
               Number of cells in the\\
               decomposition is \\
               $(\alpha,\beta)$-growing.
           \end{minipage}
         \\
         \hline
         Optimistic sampling &
           \begin{minipage}{0.25\linewidth}
              \begin{center}
                 $O\bigl(rb \bigr)$\\
                 \corref{optimistic}%
              \end{center}
           \end{minipage}
         &
           \begin{minipage}{0.25\linewidth}
               \smallskip%
               Relevant cell has conflict list of size $\leq n/r$ w.h.p.
               \smallskip%
           \end{minipage}
         \\
         \hline
       \end{tabular}
    }%
    \caption{The different sizes of a sample that are needed to guarantee (with some fixed probability)
       that the associated decomposition is a $(1/r)$-cutting. The top two results and the third
       result are in different settings. In particular, the guarantee
       of the first two results is much stronger---any canonical (definable or non-definable)
       cell that avoids the sample is ``light'', while the third result only guarantees
       that the cells in the generated canonical decomposition of the sample are light.
       The fourth result caters only to the single cell that contains a specified point. \figlab{var:dims}}
\end{figure}

\section{Bottom-vertex triangulation}
\seclab{b:v:t}

We now apply the machinery developed in the preceding section to the
first of the two canonical decomposition schemes, namely, to bottom-vertex triangulations,
or \BVT, for short.

\subsection{Construction and number of simplices}
\seclab{b:v:t:construction}%

Let $H$ be a set of $n$ hyperplanes in $\Re^d$ with a suitably defined generic coordinate frame.
For simplicity, we describe the decomposition for the entire set $H$.
Let $C$ be a bounded cell of $\Arr(H)$ (unbounded cells are discussed later on). We decompose $C$ into
pairwise openly disjoint simplices in the following recursive manner.
Let $w$ be the bottom vertex of $C$, that is, the point of $C$ with the smallest
$x_d$-coordinate (we assume that there are no ties as we use a generic coordinate frame).
We take each facet $C'$ of $C$ that is not incident to $w$, and recursively construct its
bottom-vertex triangulation. To do so, we regard the hyperplane $h'$ containing $C'$ as a copy of $\Re^{d-1}$,
and recursively triangulate $\Arr(H')$ within $h'$, where
$H' = \{h\cap h' \mid h\in H\setminus \{h'\}\}$.  We complete the
construction by taking each $(d-1)$-simplex $\sigma'$ in the resulting
triangulation of $C'$ (which is part of the triangulation of $\Arr(H')$ within $h'$),
and by connecting it to $w$ (i.e., forming the
$d$-simplex ${\rm conv}(\sigma'\cup\{w\})$). Repeating this step over
all facets of $C$ not incident to $w$, and then repeating the whole
process within each (bounded) cell of $\Arr(H)$, we obtain the
bottom-vertex triangulation of $\Arr(H)$. The recursion terminates
when $d=1$, in which case $\Arr(H)$ is just a partition of the line
into intervals and points, which serves, as is, as the desired
bottom-vertex decomposition. We denote the overall resulting
triangulation as $\BT(H)$. See \cite{c-racpq-88,m-ldg-02} for more
details.  The resulting $\BVT$ is a simplicial complex~\cite{Spanier-66}.

To handle unbounded cells, we first add two hyperplanes $\pi_d^-$, $\pi_d^+$ orthogonal
to the $x_d$-axis, so that all  vertices of $\Arr(H)$ lie below $\pi_d^+$ and above $\pi_d^-$.
For each original hyperplane $h'$, we recursively triangulate $\Arr(H')$ within $h'$, where
$H' = \{h\cap h' \mid h\in H  \setminus \{h'\}\}$, and we also recursively triangulate
$\Arr(H^+)$ within $\pi_d^+$, where $H^+ = \{h\cap \pi_d^+ \mid h\in H \}$.
This triangulation is done in exactly the same manner, by adding artificial $(d-2)$-flats, and
all its simplices are either bounded or ``artificially-bounded''(i.e., contained in an artificial bounding flat).
We complete the construction by triangulating the $d$-dimensional cells in the arrangement of $H\cup \{\pi_d^-, \pi_d^+\}$.
The addition of $\pi_d^-$ guarantees that each such cell $C$ has a bottom vertex $w$.
(Technically, since we want to ensure that each cell has a unique bottom vertex, we slightly tilt
$\pi_d^-$, to make it non-horizontal, thereby ensuring this property, and causing no loss of generality.)
We triangulate $C$ by generating a $d$-simplex ${\rm conv}(\sigma'\cup\{w\})$) for each
$(d-1)$-simplex $\sigma'$ in the recursive triangulation of the facets on $\bd C$ which are
not incident to $w$ (none of these facets lies on $\pi_d^-$, but some might lie on $\pi_d^+$).
In the arrangement defined by the original hyperplanes of $H$ only, the unbounded simplices in
this triangulation correspond to simplices that contain features of the artificial hyperplanes
added at various stages of the recursive construction.

\paragraph{Number of simplices.}
We count the number of $d$-simplices in the \BVT{} in a recursive
manner, by charging each $d$-simplex $\cell$ to the unique
$(d-1)$-simplex on its boundary that is not adjacent to the bottom
vertex of $\cell$.

\begin{lemma}
    For any hyperplane $h\in H\cup \{\pi_d^+ \}$, each $(d-1)$-simplex $\cellA$ that is
    recursively constructed within $h$ gets charged at most once by the above scheme.
\end{lemma}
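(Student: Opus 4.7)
The plan is to show that at most one $d$-simplex of $\BT(H)$ can charge $\psi$. Suppose a $d$-simplex $\sigma$ of $\BT(H)$ charges $\psi$. Unwinding the recursive construction, $\sigma = {\rm conv}(\psi \cup \{w\})$, where $w$ is the bottom vertex of the (unique) $d$-cell $C$ of $\Arr(H \cup \{\pi_d^-, \pi_d^+\})$ that contains $\sigma$, and $C$ has a facet $C' \supseteq \psi$ lying in $h$. Moreover, the recursive step that produced $\sigma$ is executed only when $C'$ is not incident to $w$, so we must have $w \notin \bar{C'}$, and in particular $w \notin h$. I would also verify that $w$ is indeed the bottom vertex of $\sigma$: every vertex of $\sigma$ lies in $\bar C$, and the generic coordinate frame ensures that $x_d$ is uniquely minimized over $\bar C$ at the single vertex $w$, so every other vertex of $\sigma$ has strictly larger $x_d$-coordinate.

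Next I would invoke the fact that, by the generic coordinate assumption (and by the slight tilt applied to $\pi_d^-$, and, if needed, to $\pi_d^+$ when it occurs as $h$), the hyperplane $h$ is not parallel to the $x_d$-axis. Thus $h$ can be written as $\{x_d = f(x_1,\ldots,x_{d-1})\}$ for a linear function $f$, splitting $\Re^d \setminus h$ into an upper side $\{x_d > f\}$ and a lower side $\{x_d < f\}$. The facet $C'$ is shared by at most two $d$-cells of the arrangement, one on each side of $h$; call them $C_{\mathrm u}$ and $C_{\mathrm l}$. (When $h = \pi_d^+$, only $C_{\mathrm l}$ exists, and the argument below trivially applies.)

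The crux of the proof is the claim that the bottom vertex $w_{\mathrm u}$ of $C_{\mathrm u}$ always lies in $\bar{C'}$, which, together with the first paragraph, forbids $C_{\mathrm u}$ from producing any $d$-simplex that charges $\psi$. The claim is a direct convex-geometry observation: since every point of $\bar C_{\mathrm u}$ satisfies $x_d \ge f$, the linear function $x_d$ attains its minimum over the polytope $\bar C_{\mathrm u}$ on the face $\bar C_{\mathrm u} \cap h$, which equals the closed facet $\bar{C'}$; the unique vertex of $\bar C_{\mathrm u}$ at which this minimum is attained, namely $w_{\mathrm u}$, is therefore a vertex of $\bar{C'}$. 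Consequently, the only cell that can contribute a $d$-simplex charging $\psi$ is $C_{\mathrm l}$, and since $C_{\mathrm l}$ (and hence its bottom vertex) is uniquely determined by $C'$, at most one such simplex exists. The main subtle point is the geometric claim about $w_{\mathrm u}$, which depends crucially on the non-verticality of $h$; once that is in hand, the uniqueness conclusion is immediate.
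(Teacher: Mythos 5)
There is a genuine gap in the crux of your argument. You claim that the bottom vertex $w_{\mathrm u}$ of the cell $C_{\mathrm u}$ on the upper side of $h$ always lies on $h$ (hence on the closed facet), justifying this by the assertion that since $x_d\ge f$ on $\bar C_{\mathrm u}$, ``the linear function $x_d$ attains its minimum over $\bar C_{\mathrm u}$ on the face $\bar C_{\mathrm u}\cap h$.'' That step conflates two different linear functionals. The inequality $x_d\ge f(x_1,\dots,x_{d-1})$ on $\bar C_{\mathrm u}$ shows that $x_d-f(x_1,\dots,x_{d-1})$ (whose zero set is $h$) is minimized over $\bar C_{\mathrm u}$ precisely on $h$; it says nothing about where $x_d$ alone is minimized, because $f$ is a non-constant linear form for a non-vertical, non-horizontal $h$. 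Concretely, take $d=2$, $h\colon y=10x$, and $C_{\mathrm u}$ the triangle with vertices $(0,0)$, $(1,10)$ (both on $h$) and $(-1,-5)$, which is strictly above $h$ since $-5-10\cdot(-1)=5>0$. The $y$-lowest vertex of $C_{\mathrm u}$ is $(-1,-5)$, which is not on $h$. So the geometric claim fails, and with it your conclusion that the charging cell is always the lower cell $C_{\mathrm l}$; in such a configuration $C_{\mathrm u}$ is the one that would charge $\psi$.

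The correct dichotomy is not upper-versus-lower but \emph{which of the two flanking cells inherits the lowest vertex of their union}. The paper takes the cell $C'$ of $\Arr\bigl((H\cup\{\pi_d^+\})\setminus\{h\}\bigr)$ containing $\psi$---this is exactly $C_{\mathrm u}\cup C_{\mathrm l}$ together with the facet $F$ between them in $h$---and lets $p$ be its (unique, by genericity) lowest vertex. If $p\notin h$, then $p$ lies in $\bar C_1$ for exactly one of the two pieces $C_1,C_2$ and is the bottom vertex of $C_1$. The bottom vertex $q$ of the other piece $C_2$ must lie on $h$: otherwise the segment $pq\subset\bar C'$ would cross $h$ at an interior point $r\in\bar C'\cap h=\bar F\subset\bar C_2$, and since $x_d(p)<x_d(q)$ and $x_d$ is linear along $pq$, we would get $x_d(r)<x_d(q)$, contradicting minimality of $q$ over $\bar C_2$. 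Hence the facet of $C_2$ in $h$ is incident to its bottom vertex and is never charged, leaving at most one charge---from whichever side happens to contain $p$, which need not be the lower one. Your overall strategy (rule out one of the two adjacent cells) is the same as the paper's, but the rule you use to decide which cell to rule out is the wrong one.
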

\begin{proof}
    Consider the cell $C'$ of $\Arr(H \cup \{\pi_d^+ \} \setminus \brc{h})$ that
    contains $\cellA$ in its interior, and let $p$ be the lowest
    vertex of $C'$. Assume first that $h$ does not pass through $p$.
    The hyperplane $h$ splits $C'$ into two cells $C_1, C_2$ of $\Arr(H)$.
    One of these cells, say $C_1$, has $p$ as its bottom vertex. The bottom vertex of the
    other cell $C_2$ must lie on $h$. In other words, $\cellA$ gets
    charged by a full-dimensional simplex within $C_1$ but not within $C_2$.
    If $h$ passes through $p$,  $\cellA$ cannot be charged at all.
\end{proof}

Note that the proof also holds in degenerate situations, where $p$ is incident to more than $d$ hyperplanes of $H$.

The $(d-1)$-simplices that are being charged lie on original hyperplanes or
on $\pi_d^+$. This implies the following lemma.


\begin{lemma}
    \lemlab{simplices:grow}%
(a) The number of $d$-simplices in $\BT(R)$, for any set $R$ of $\rho$ hyperplanes in $d$ dimensions,
is at most $\rho^d$.

\smallskip
\noindent
(b) The total number of simplices of all dimensions in $\BT(R)$ is at most $e\rho^d$.
\end{lemma}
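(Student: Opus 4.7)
My plan is to prove both parts by induction on $d$, using the charging argument of the preceding (unnamed) lemma as the engine.

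For part (a), I would proceed by induction on $d$. The base case $d=1$ is immediate: a line cut by $\rho$ points yields at most $\rho+1$ intervals, which (for $\rho\ge 2$, with the tiny cases verified separately) is bounded by $\rho$, and any lower-order slack can be absorbed into the top-level bound. For the inductive step, the preceding lemma assigns to each $d$-simplex of $\BT(R)$ a unique $(d-1)$-simplex of $\BT(h)$ for some $h\in R\cup\{\pi_d^+\}$. For an original hyperplane $h\in R$, the recursively constructed $\BT(h)$ lives inside $h\cong\Re^{d-1}$ and has as input the $\rho-1$ traces $h'\cap h$, $h'\in R\setminus\{h\}$, so by the inductive hypothesis it contains at most $(\rho-1)^{d-1}$ $(d-1)$-simplices; similarly $\BT(\pi_d^+)$ contributes at most $\rho^{d-1}$. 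Summing gives
\[
   f_d(d,\rho) \;\le\; \rho(\rho-1)^{d-1} + \rho^{d-1} \;\le\; \rho^d,
\]
where the final inequality reduces, after one algebraic step, to $(\rho-1)^{d-1}\le \rho^{d-2}(\rho-1)$, which is elementary.

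For part (b), let $f_k$ denote the number of $k$-dimensional simplices of $\BT(R)$. I plan to show $f_k \le \rho^d/(d-k)!$ and then sum. The key observation is that each $k$-simplex $\tau\in\BT(R)$ lies in a unique $k$-flat --- its affine hull --- and, by the recursive construction, this flat is the intersection of exactly $d-k$ hyperplanes drawn from $R$ (together with the artificial bounding hyperplanes added along the recursion). The number of such $k$-flats is therefore at most $\binom{\rho}{d-k}\le \rho^{d-k}/(d-k)!$. Within each such $k$-flat, the $k$-simplices of $\BT(R)$ that lie in it are precisely the top-dimensional cells of a $k$-dimensional BVT on at most $\rho$ traces, and by part (a) there are at most $\rho^k$ of them. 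Multiplying gives $f_k\le \rho^d/(d-k)!$, and hence
\[
   \sum_{k=0}^{d} f_k \;\le\; \rho^d \sum_{j=0}^{d} \frac{1}{j!} \;\le\; e\,\rho^d.
\]

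The main obstacle in both parts is bookkeeping for the artificial bounding hyperplanes $\pi_j^\pm$ that are introduced at every recursive level. Naively they inflate the "effective" number of hyperplanes and could spoil the clean bounds $\rho^d$ and $e\rho^d$. I expect to handle this by a slightly refined induction that separately tracks the contribution of original and artificial hyperplanes, using the fact that only a constant number of artificial hyperplanes are added per level and that, by construction, the artificial $\pi_j^+$ carries only traces of the original input. With this accounting, the additive slack produced by the artificial hyperplanes is of lower order than the dominant $\rho^d$ term, and the bounds $\rho^d$ and $e\rho^d$ follow.
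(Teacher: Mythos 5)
Your overall route is the same as the paper's: use the charging lemma preceding the statement to reduce counting $d$-simplices to counting $(d-1)$-simplices inside the hyperplanes of $R\cup\{\pi_d^+\}$, set up a recurrence, and then bound it; part (b) is then obtained, exactly as in the paper, by multiplying the bound of (a) by the number $\binom{\rho}{d-j}\le \rho^{d-j}/(d-j)!$ of $j$-flats and summing the series $\sum 1/j!\le e$. The paper differs only in how it closes the recurrence: it writes $B_d(\rho)\le(\rho+1)B_{d-1}(\rho-1)$ with $B_1(\rho)=\rho+1$, unrolls this to the explicit product $(\rho+1)\rho(\rho-1)\cdots(\rho-d+2)$, and then observes that $(1+1/\rho)(1-1/\rho)\cdots\le 1$ so the product is $\le\rho^d$. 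The crucial point is that the ``$+1$'' from the base case and from $\pi_d^+$ is cancelled against the $(\rho-1)$ factors, so the paper never needs a clean base case of the form $B_1(\rho)\le\rho$.

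This is precisely where your proposal has a genuine gap. Your stated base case, ``$\rho+1$ intervals, which (for $\rho\ge 2$) is bounded by $\rho$,'' is false for every $\rho$ --- so the induction never gets started. More importantly, your final paragraph names the real obstacle (the artificial hyperplanes $\pi_j^\pm$ inflate the number of traces at every level), and then asserts that a ``slightly refined induction'' would show the resulting slack is ``of lower order'' and that the bounds ``$\rho^d$ and $e\rho^d$ follow.'' But lower-order positive slack does not give the stated bound $\rho^d$ --- the whole content of the lemma is that the leading constant is exactly $1$ (for (a)) and $e$ (for (b)), and a lower-order surplus already violates it. The paper's device for making the slack vanish is the explicit telescoping cancellation $(\rho+1)(\rho-1)<\rho^2$; your direct inductive step does implicitly use a cousin of that cancellation in verifying $\rho(\rho-1)^{d-1}+\rho^{d-1}\le\rho^d$, but without a valid base case and without actually carrying out the promised refined accounting of the artificial hyperplanes, the argument as written does not close. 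To repair it you should either unroll the recurrence as the paper does, or start the induction at a dimension where you can verify the bound by direct computation and keep explicit track of the artificial traces in the inductive hypothesis.
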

\begin{proof}
    (a) The analysis given above implies that the number of
    $d$-simplices in $\BT(R)$ is at most the overall number of
    $(d-1)$-simplices in the bottom-vertex triangulations within the
    hyperplanes of $R$, plus the number of $(d-1)$-simplices in the
    bottom-vertex triangulations within $\pi_d^+$.  Hence, if we
    denote by $BV_d(\rho)$ the maximum number of $d$-simplices in a
    bottom-vertex triangulation of a set of $\rho$ hyperplanes in
    $\Re^d$, we get the recurrence
    \begin{math}
        B_d(\rho) \le (\rho+1) B_{d-1}(\rho-1),
    \end{math}
    and $B_1(\rho) = \rho+1$. The solution of this recurrence is
    \begin{align*}
      B_d(\rho) &\leq (\rho+1)\rho(\rho-1)\cdots (\rho-d+3)B_1(\rho-d+1) \\
      & = (\rho+1)\rho(\rho-1)\cdots (\rho-d+3)(\rho-d+2) \\
      & = \rho^d \cdot \left( 1+\frac{1}{\rho}\right) \cdot 1 \left( 1-\frac{1}{\rho}\right) \cdots
      \left(1-\frac{d-2}{\rho} \right) \leq \rho^d .
    \end{align*}
\smallskip
\noindent
(b) The number of $j$-flats of $\Arr(R)$ is $\binom{\rho}{d-j}$, and each $j$-simplex
belongs to the triangulation within some $j$-flat of $\Arr(R)$. It follows from (a)
that the number of $j$-simplices in $\BT(R)$ is at most $\binom{\rho}{d-j} \rho^j$.
Summing over all $j$ we get that the total number of simplices in $\BT(R)$ is bounded by
\begin{align*}
  \sum_{j=0}^d \binom{\rho}{d-j} \rho ^j \le
  \sum_{j=0}^d \frac{\rho^d}{(d-j)!} \le e \rho^d \ .
\end{align*}
\end{proof}

In particular, the local growth function for bottom vertex triangulation satisfies
$$
u(m) \le e m^d = 2^{\log e} m^d \ ;
$$
that is, it is $(\alpha,\beta)$-growing, where the corresponding parameters are $\alpha = \log e$ and $\beta = d$.

\begin{remark} \remlab{simp:comb:dim}
    The {\bf combinatorial dimension} of bottom-vertex triangulation is
    $\frac{1}{2} d(d+3)$ \cite{m-ldg-02}.  Indeed, if $C(d)$ denotes
    the combinatorial dimension of bottom-vertex triangulation  in $d$ dimensions, then
    $C(d) = d + 1 + C(d-1)$, where the non-recursive term $d+1$ comes
    from the $d$ hyperplanes needed to specify the bottom vertex of the
    simplex, plus the hyperplane $h'$ on which the recursive
    $(d-1)$-dimensional construction takes place. This recurrence,
    combined with $C(1)=2$, yields the asserted value. Simplices in
    unbounded cells require fewer hyperplanes from $H$ to define them.
\end{remark}

Plugging the parameters $b = \frac12 d(d+3)$, $\alpha = O(1)$, $\beta = d$ into \lemref{weak:cutting}, we obtain:

\begin{corollary}
Let $H$ be a set of $n$ hyperplanes in $\Re^d$
and let
  $\BadProb \in (0,1)$ be some confidence parameter. Then, for
$\rho \geq cr(d^2+d\log r+\ln \frac{1}{\BadProb})$, for a suitable absolute constant $c$,
the bottom-vertex triangulation of
 a $\rho$-sample  is a $(1/r)$-cutting with probability $\geq 1-\BadProb$.
\end{corollary}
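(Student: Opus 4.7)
The plan is a direct substitution into \lemref{weak:cutting}, followed by consolidating the various terms into the stated closed form. First I would record the three parameters: by \remref{simp:comb:dim} the combinatorial dimension is $b=\tfrac12 d(d+3)=\Theta(d^2)$; by \lemref{simplices:grow}(a) the local growth function satisfies $u(m)\le em^d$, so we may take $\alpha=\log e=O(1)$ and $\beta=d$.

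Next I would substitute these values into the two terms of the max in \lemref{weak:cutting}. The first term becomes
\[
4r\!\left(\tfrac12 d(d+3)\ln 3 + (O(1)-d)\ln 2 + \ln\tfrac{1}{\BadProb}\right)
= O\!\left(r\!\left(d^2 + \ln\tfrac{1}{\BadProb}\right)\right),
\]
since the $(O(1)-d)\ln 2$ contribution is absorbed into the $d^2$ term. The second term becomes
\[
8rd\ln(4rd) = 8rd\ln(4r) + 8rd\ln d = O\!\left(rd\log r + rd\log d\right),
\]
and $rd\log d=O(rd^2)$, so this too is $O(r(d^2+d\log r))$. Taking the maximum and choosing a sufficiently large absolute constant $c$, both terms are dominated by $cr(d^2 + d\log r + \ln\tfrac{1}{\BadProb})$, which is exactly the lower bound on $\rho$ assumed in the statement.

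Finally, I would invoke \lemref{weak:cutting} directly: under this lower bound on $\rho$, a $\rho$-sample $R$ induces a decomposition $\CD{R}=\BT(R)$ that is a $(1/r)$-cutting of $\Arr(H)$ with probability at least $1-\BadProb$, as claimed. There is no real obstacle here; the only mild subtlety is verifying that $d\ln d$ can be absorbed into the $d^2$ term when bounding the second max argument, and that the negative $-d\ln 2$ contribution in the first argument can safely be dropped (it only helps). Both are routine.
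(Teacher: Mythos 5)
Your proof is correct and follows exactly the route the paper intends: the corollary is presented immediately after the sentence "Plugging the parameters $b = \frac12 d(d+3)$, $\alpha = O(1)$, $\beta = d$ into \lemref{weak:cutting}", so the paper's (implicit) proof is precisely the substitution you carry out, and your arithmetic — dropping the negative $(\alpha-\beta)\ln 2$ term and absorbing $d\ln d$ into $d^2$ — is the right way to collapse the two arguments of the max into the stated closed form.
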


\subsection{The primal shatter dimension and VC-dimension}
\seclab{vc-bvt}

Let $H$ be a set of $n$ hyperplanes in $\Re^d$.
The range space associated with the bottom-vertex triangulation
technique is $(H,\Sigma)$, where each range in $\Sigma$ is associated
with some (arbitrary, open) $d$-simplex $\sigma$, and is the subset of $H$
consisting of the hyperplanes that cross $\sigma$.  Clearly, the same
range will arise for infinitely many simplices, as long as they have
the same set of hyperplanes of $H$ that cross them.

\begin{lemma}[See \cite{as-aa-00}]%
    \lemlab{num:cells}%
    (a) The number of full dimensional cells in an arrangement of $n$
    hyperplanes in $\Re^d$ is at most
    \begin{math}
        \sum_{i=0}^d \binom{n}{i} \leq 2 \pth{\frac{ne}{d}}^d ,
    \end{math}
where the inequality holds for $n\ge d$.

\smallskip
\noindent
(b) The number of cells of all dimensions is at most
$\sum_{i=0}^d \binom{n}{i}2^i \leq 2 \pth{\frac{2ne}{d}}^d$;
again, the inequality holds for $n\ge d$.
\end{lemma}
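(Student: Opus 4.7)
The plan is to prove both bounds by a double induction on the dimension $d$ and the number $n$ of hyperplanes, using the standard ``add one hyperplane'' argument, and then to derive the stated closed-form estimates via Pascal's identity.

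For part~(a), let $f_d(n)$ denote the maximum number of full-dimensional cells in an arrangement of $n$ hyperplanes in $\Re^d$. Pick any hyperplane $h \in H$, and compare $\Arr(H)$ with $\Arr(H \setminus \brc{h})$: adding $h$ back can only split a $d$-cell $\sigma$ of $\Arr(H\setminus\brc{h})$ into two $d$-cells when $h$ properly crosses $\sigma$, contributing a net increase of one. The number of $d$-cells crossed in this way equals the number of $(d{-}1)$-cells of the arrangement $\{h'\cap h \mid h' \in H\setminus\brc{h}\}$ induced on $h$, which is a $(d{-}1)$-dimensional arrangement of at most $n-1$ hyperplanes. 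This yields the recurrence $f_d(n) \le f_d(n-1) + f_{d-1}(n-1)$, with base cases $f_0(n)=1$ and $f_d(0)=1$. A routine induction using Pascal's identity $\binom{n-1}{i}+\binom{n-1}{i-1}=\binom{n}{i}$ delivers $f_d(n) \le \sum_{i=0}^d \binom{n}{i}$. Degeneracies only reduce the count (they merge or suppress faces), so the recurrence remains a valid upper bound.

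For part~(b), let $g_d(n)$ count cells of all dimensions. When $h$ properly crosses a $k$-cell $\sigma$ of $\Arr(H\setminus\brc{h})$, $\sigma$ is replaced by two $k$-cells plus the new $(k{-}1)$-cell $\sigma\cap h$, for a net gain of exactly two cells per crossing. Letting $X_k$ be the number of $k$-cells of $\Arr(H\setminus\brc{h})$ properly crossed by $h$, each $X_k$ equals the number of $(k{-}1)$-cells of the induced $(d{-}1)$-dimensional arrangement on $h$, so $\sum_{k\ge 1} X_k \le g_{d-1}(n-1)$, giving the recurrence $g_d(n)\le g_d(n-1) + 2 g_{d-1}(n-1)$, with base cases $g_0(n)=g_d(0)=1$. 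The same Pascal-type inductive step, but tracking the factor $2^i$, yields $g_d(n)\le \sum_{i=0}^d \binom{n}{i}2^i$.

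Finally, the closed-form estimates $\sum_{i=0}^d \binom{n}{i} \le 2(ne/d)^d$ and $\sum_{i=0}^d \binom{n}{i} 2^i \le 2(2ne/d)^d$ for $n\ge d$ are standard consequences of the bound $\binom{n}{d}\le (ne/d)^d$ together with the fact that, in the stated range, the tail of the binomial sum is dominated by its top term up to a factor of~$2$ (the second inequality follows by substituting $2n$ for $n$ in the first, since $\binom{n}{i}2^i \le \binom{2n}{i}$). The only subtleties are (i)~handling degeneracies, which we have already noted only sharpen the inequalities, and (ii)~ensuring the recurrences apply uniformly to bounded and unbounded cells, which they do because splitting by a hyperplane is a purely combinatorial operation indifferent to boundedness.
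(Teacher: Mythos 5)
Your proof is correct, but you take a different route for part~(b) than the paper does. For part~(a) the paper simply cites \cite{as-aa-00}; your ``add one hyperplane'' recurrence $f_d(n)\le f_d(n-1)+f_{d-1}(n-1)$ with Pascal's identity is the standard textbook derivation of that bound, so there is nothing to compare there. For part~(b) the paper takes a ``global'' approach: it applies the bound of~(a) within each of the $\binom{n}{d-j}$ intersection $j$-flats, writes the result as a double sum over flats and cells, and then massages the double sum with binomial identities until it collapses to $\sum_{i=0}^d\binom{n}{i}2^i$. You instead derive a direct recurrence $g_d(n)\le g_d(n-1)+2\,g_{d-1}(n-1)$ from the observation that each $k$-cell properly crossed by the new hyperplane contributes a net gain of two cells (two halves plus one wall), and that the properly crossed cells inject into the cells of the induced $(d-1)$-dimensional arrangement on the new hyperplane; the Pascal step then yields the same closed form. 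Your argument is more elementary and self-contained, while the paper's computation reuses~(a) as a black box and mirrors the flat-by-flat counting it also uses elsewhere (e.g., in the proof of Lemma~3.3(b)), which makes the presentation more uniform across the paper. Both handle degeneracies correctly (they only decrease either count). One small inaccuracy in the wrap-up: the estimate $\sum_{i=0}^d\binom{n}{i}\le 2(ne/d)^d$ is not really because ``the tail is dominated by its top term up to a factor of~2''---that fails for $n$ near $d$ (e.g.\ $n=d$ gives $2^d$ versus top term $1$). The standard derivation is $\sum_{i=0}^d\binom{n}{i}(d/n)^d\le\sum_{i=0}^d\binom{n}{i}(d/n)^i\le(1+d/n)^n\le e^d$, valid for $n\ge d$, and in fact it yields the bound without the leading factor of~$2$. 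Your reduction of the second closed form to the first via $\binom{n}{i}2^i\le\binom{2n}{i}$ is fine.
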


\begin{proof}
See \cite{as-aa-00} for (a). For (b), as in the proof of \lemref{simplices:grow}(b), we apply the
bound in (a) within each flat formed by the intersection of some subset of the hyperplanes.
Since there are at most $\binom{n}{d-j}$ $j$-dimensional such flats, and at most $n-d+j$
hyperplanes form the corresponding arrangement within a $j$-flat, we get a total of at most
\begin{align*}
\sum_{j=0}^d \binom{n}{d-j} \sum_{i=0}^j \binom{n-d+j}{i}  & = \sum_{j=0}^d \sum_{i=0}^j \frac{n!}{(d-j)! i! (n-d+j-i)!} \\
& = \sum_{j=0}^d \sum_{k=0}^j \frac{n!}{(d-j)! (j-k)! (n-d+k)!} \\
& = \sum_{j=0}^d \sum_{k=0}^j \frac{n! (d-k)!}{(d-j)! (j-k)! (n-d+k)!(d-k)!} \\
&= \sum_{j=0}^d \sum_{k=0}^j \binom{n}{d-k} \binom{d-k}{d-j} \\
&= \sum_{k=0}^d  \binom{n}{d-k} \sum_{j=k}^d \binom{d-k}{d-j} \\
&= \sum_{k=0}^d  \binom{n}{d-k} 2^{d-k} = \sum_{i=0}^d \binom{n}{i}2^i
\end{align*}
cells of all dimensions. The asserted upper bound on this sum is an immediate consequence of (a).
\end{proof}

\begin{lemma}
    \lemlab{h:s:growth}%
    The global growth function $g_d(n)$ of the range space of hyperplanes and simplices in $\Re^d$ satisfies
    \begin{math}
        g_d(n) \leq 2^{d+2} \pth{\frac{e}{d}}^{(d+1)^2} n^{d(d+1)}.
    \end{math}
    Consequently, the primal shatter dimension of this range space is
    at most $d(d+1)$.
\end{lemma}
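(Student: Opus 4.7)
The plan is to bound the number of distinct ranges $R_\sigma = \{h\in H : h\text{ crosses }\sigma\}$ by exploiting the fact that $R_\sigma$ is determined by the combinatorial position of the $d+1$ vertices of $\sigma$ with respect to $H$. The key structural observation is that a hyperplane $h\in H$ crosses the open $d$-simplex $\sigma = \mathrm{conv}(v_1,\ldots,v_{d+1})$ iff the $d+1$ values $h(v_1),\ldots,h(v_{d+1})$ include both a strictly positive and a strictly negative value; hence $R_\sigma$ depends only on the sign of each of the $n(d+1)$ scalars $h(v_i)$.

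I would then set up the natural parameter space of ordered $d$-simplices as $\Re^{d(d+1)}$ (each $v_i$ contributing its $d$ coordinates). For each pair $(h,i)$, the equation $h(v_i) = 0$ is linear in these parameters and so cuts out a hyperplane in $\Re^{d(d+1)}$. This yields an arrangement of $N := n(d+1)$ hyperplanes in $\Re^D$ with $D := d(d+1)$, and by the observation above the range $R_\sigma$ is constant on each (relatively open) cell of this arrangement. Therefore $g_d(n)$ is bounded above by the number of cells in this arrangement.

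A brief perturbation argument then lets us restrict attention to full-dimensional cells. If some $v_i$ lies on one or several hyperplanes of $H$, the sidedness requirement imposed on $v_i$ by each such hyperplane $h$ (inherited from the signs of $h$ at the remaining vertices, which must remain unchanged) is a single one-sided linear condition on the perturbation direction; these conditions are pairwise compatible, so $v_i$ can be nudged into a neighboring full-dimensional cell of $\Arr(H)$ without altering $R_\sigma$. The task thus reduces to counting full-dimensional cells of the arrangement of $N$ hyperplanes in $\Re^D$, which, by \lemref{num:cells}(a), is at most $\sum_{i=0}^{D}\binom{N}{i} \le 2(Ne/D)^D = 2(ne/d)^{d(d+1)}$. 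This is already of the form $C(d)\cdot n^{d(d+1)}$, giving immediately the primal shatter dimension bound $\delta_0 \le d(d+1)$; a routine algebraic rearrangement (splitting the exponent $(d+1)^2 = d(d+1) + (d+1)$) casts the bound in the exact form stated in the lemma.

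The single delicate step is the perturbation argument: one must verify that the ``desired side'' constraints at a single degenerate vertex $v_i$, one per hyperplane through $v_i$, are always simultaneously realizable by an open cone of perturbation directions. This holds because each such condition is the same as the condition that keeps ``$h \notin R_\sigma$'' intact, namely that $v_i$ lie on the same closed side of $h$ as do the other vertices (or, when $R_\sigma$ already contains $h$ for some independent reason, no constraint at all is imposed). All other steps are standard counting using \lemref{num:cells}.
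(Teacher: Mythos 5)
Your approach is genuinely different from the paper's. You parametrize ordered $d$-simplices as points in $\Re^{d(d+1)}$, cut this space by the $n(d+1)$ hyperplanes $\{h(v_i)=0\}$, and invoke \lemref{num:cells}(a) to count full-dimensional cells of the resulting arrangement. The paper instead observes that the range $R_\sigma$ is determined by the \emph{unordered} set of at most $d+1$ full-dimensional cells of $\Arr(H)$ containing the vertices of $\sigma$, and counts subsets of size at most $d+1$ of the $T\le 2(ne/d)^d$ full-dimensional cells; this gives $\sum_{i\le d+1}\binom{T}{i}\le 2(Te/(d+1))^{d+1}$, which simplifies to exactly the stated constant. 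Your route is more mechanical and avoids the (slightly informal) reasoning about unordered cell tuples; the paper's route is marginally sharper because it quotients by the ordering of the vertices.

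The real issue is your last step. Your bound is $g_d(n)\le 2(ne/d)^{d(d+1)}=2(e/d)^{d(d+1)}n^{d(d+1)}$, and you claim a ``routine algebraic rearrangement'' recovers $2^{d+2}(e/d)^{(d+1)^2}n^{d(d+1)}$. It does not: the ratio of your constant to the lemma's is
\[
\frac{2(e/d)^{d(d+1)}}{2^{d+2}(e/d)^{(d+1)^2}}=\left(\frac{d}{2e}\right)^{d+1},
\]
which exceeds $1$ as soon as $d\ge 6$. For $d\ge 6$ your bound is therefore strictly \emph{weaker} than the one asserted, and the claimed rearrangement is not valid. The weaker constant is harmless for every downstream use (and it still yields $\delta_0\le d(d+1)$), but you have not proven the inequality as stated.

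A smaller note: you invoke ``pairwise compatibility'' of the sidedness constraints on the perturbation direction, but pairwise compatibility of open half-spaces does not imply joint satisfiability. What rescues the step is that all the constraints are satisfied simultaneously by perturbing $v_i$ towards the centroid of the remaining vertices, which lies strictly on the required side of each constraining hyperplane when the simplex is nondegenerate. The paper hides the same reduction in a ``without loss of generality,'' so you are at least more explicit about where the work lies.
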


We can write this bound as $g_d(n) \le 2^\alpha n^\beta$, for $\beta =d(d+1)$ and $\alpha = d+2 - (d+1)^2 \log(d/e)$ ($\alpha$ is negative when $d$ is large).

\begin{proof}
    Let $\sigma$ be a $d$-simplex, and let $C_1,\ldots,C_{d+1}$ denote
    the cells of $\Arr(H)$ that contain the $d+1$ vertices of
    $\sigma$; in general, some of these cells may be
    lower-dimensional, and they need not be distinct. Moreover, the
    order of the cells is immaterial for our analysis.  As is easily
    seen, the range associated with $\sigma$ does not change when we
    vary each vertex of $\sigma$ within its containing cell. Moreover,
    since crossing a simplex means intersecting its interior, we may
    assume, without loss of generality, that all the cells $C_i$ are
    full-dimensional.  It follows that the number of possible ranges
    is bounded by the number of (unordered) 
    tuples of at most $d+1$ full-dimensional cells of
    $\Arr(H)$.  The number of such cells in $\Arr(H)$ is
    \begin{math}
        T \leq 2 \pth{\frac{ne}{d}}^d,
    \end{math}
    by \lemref{num:cells}(a).  Hence, using the inequality again, in the
    present, different context, the number of distinct ranges is
    \begin{align*}
      \sum_{i=0}^{d+1} \binom{T}{i}%
      & \leq%
      2\pth{\frac{ T e }{d+1}}^{d+1}
      \leq%
      2\pth{\frac{ 2 \pth{\frac{ne}{d}}^d e }{d+1}}^{d+1} \\
      & =%
      2 \pth{\frac{2e}{d+1}}^{d+1}\pth{\frac{e}{d}}^{d(d+1)} n^{d(d+1)}
      \leq%
      2^{d+2}
      \pth{\frac{e}{d}}^{(d+1)^2}
      n^{d(d+1)}.
    \end{align*}
\end{proof}

\subsubsection{Bounds on the VC-dimension}

\begin{lemma}
    \lemlab{v:c:dim}%
    The VC-dimension of the range space of hyperplanes and simplices in $\Re^d$
    is at least $d(d+1)$ and at most $5 d^2 \log d$, for $d \geq 9$.
\end{lemma}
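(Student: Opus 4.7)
The plan is to establish the two bounds separately.

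For the lower bound $\delta \ge d(d+1)$, I would exhibit an explicit shattered set of $d(d+1)$ hyperplanes, organized into $d+1$ clusters of $d$ hyperplanes each. Pick $d+1$ ``base'' points $p_0,\ldots,p_d$ in general position and widely separated from one another, so that any sufficiently small perturbation of them yields a nondegenerate simplex $\mathrm{conv}(p_0,\ldots,p_d)$. Near each $p_i$, place $d$ hyperplanes in general position all passing through $p_i$; these $d$ hyperplanes partition a small neighborhood of $p_i$ into $2^d$ open orthants. The key observation is that, because each cluster is tightly localized around its corresponding $p_i$, every other base point $p_j$, $j\ne i$, lies on one fixed side of every hyperplane in the $i$-th cluster, and this side is unchanged by the small perturbations we will apply. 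Hence a hyperplane $h$ from the $i$-th cluster crosses the simplex if and only if $p_i$ lies on the side of $h$ opposite to the common side occupied by the other vertices. By independently placing each $p_i$ into any of the $2^d$ local orthants, we can independently prescribe the side of $p_i$ with respect to each hyperplane in its cluster, yielding $(2^d)^{d+1} = 2^{d(d+1)}$ distinct crossing patterns, and thus shattering the $d(d+1)$ hyperplanes.

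For the upper bound $\delta \le 5 d^2 \log d$, I would combine \lemref{h:s:growth} with the basic Sauer--Shelah-type inequality $2^\delta \le g_d(\delta)$ that any shattered set of size $\delta$ must satisfy. Plugging in the bound of \lemref{h:s:growth} and taking logarithms,
\[
\delta \;\le\; (d+2) \;+\; (d+1)^2 \log(e/d) \;+\; d(d+1)\log\delta.
\]
For $d\ge 9$ the term $(d+1)^2\log(e/d)$ is strongly negative and easily absorbs the additive $d+2$, leaving the simpler implicit inequality $\delta \le d(d+1)\log\delta$, or equivalently $\delta/\log\delta \le d(d+1)$. Since $\delta/\log\delta$ is monotonically increasing for $\delta\ge 3$, it suffices to verify that, at $\delta = 5d^2\log d$, the ratio exceeds $d(d+1)$ for every $d\ge 9$. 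The denominator $\log(5d^2\log d) = \log 5 + 2\log d + \log\log d$ is of order $2\log d$, so the ratio is of order $5d^2/2$, which dominates $d(d+1)$ comfortably; a direct numerical check at $d=9$ (where the ratio is roughly $124$, versus $d(d+1)=90$) confirms that the bound is already valid at the stated threshold.

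The main obstacle I anticipate is mostly bookkeeping: the restriction ``$d\ge 9$'' is dictated by the need for the concrete numerical inequality above to hold at the boundary, and verifying it cleanly requires tracking the $\log\log d$ correction and confirming that the strongly negative $(d+1)^2\log(e/d)$ savings indeed absorb the lower-order terms. The lower bound construction, by contrast, is transparent once one views the simplex as having $d+1$ vertices, each carrying $d$ independent local binary choices encoded by the $d$ nearby hyperplanes of its cluster.
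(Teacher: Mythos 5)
Your upper bound follows the paper's own route: combine the growth-function bound of \lemref{h:s:growth} with $2^\delta\le g_d(\delta)$, drop the $(d+1)^2\log(e/d)$ term (which is comfortably negative for $d\ge 2e$), and solve the resulting $\delta\le d(d+1)\log\delta$. Your numerics at $d=9$ match the paper's threshold. No issues there.

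The lower bound, however, has a genuine gap. You assert that \emph{because each cluster is tightly localized around its $p_i$, every other base point $p_j$, $j\ne i$, lies on one fixed side of every hyperplane in the $i$-th cluster}, and the rest of your argument hinges on this. That claim is not a consequence of the construction as you state it. A hyperplane through $p_i$, chosen ``in general position,'' is a global object, and can perfectly well strictly separate $p_1$ from $p_2$, say, no matter how far apart or how ``widely separated'' the $p_j$ are (scaling a bad example by a large factor keeps it bad). If some hyperplane $h\in H_i$ already has the other vertices on both of its sides, then $h$ crosses every small perturbation of the simplex regardless of where you move $p_i$, and that hyperplane can never be excluded from a crossing pattern---so the family is not shattered. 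The fix is exactly the extra condition the paper imposes: the $d$ hyperplanes through $v_i$ must be chosen so that none crosses the reference simplex $\sigma_0$, which the paper achieves by taking them nearly parallel to a \emph{supporting} hyperplane of $\sigma_0$ at $v_i$ (so all other vertices are strictly on one side, a property that then survives the small perturbations). With that condition added, your ``which-side-is-$p_i$-on'' characterization of when $h$ crosses the perturbed simplex is correct and actually a slightly more direct formulation than the paper's segment-from-the-centroid argument; but without it, the construction does not shatter $H$.
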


\begin{proof}
    Using the bound of \lemref{h:s:growth} on the growth function, we
    have that if a set of size $k$ is shattered by simplicial ranges then
    \begin{math}
        2^k \leq g_d(k)%
        \leq%
        2^{d+2} \pth{\frac{e}{d}}^{(d+1)^2} k^{d(d+1)} \leq
        k^{d(d+1)},
    \end{math}
    for $d \geq 2e$.  This in turn is equivalent to
    \begin{math}
        \frac{k}{\ln k} \leq \frac{d(d+1)}{\ln 2}.
    \end{math}
    Using the easily verified property, already used in the proof of \lemref{weak:cutting}, that
    $x/ \ln x \leq u \implies x \leq 2 u \ln u$, we conclude that
    \begin{math}
        k \leq 2 \frac{d (d+1)}{\ln 2} \ln \pth{ \frac{d (d+1)}{\ln 2}
        } \leq %
        5 d^2 \log_2 d,
    \end{math}
    for $d \geq 9$, as can easily (albeit tediously) be verified.

    \begin{figure}[t]
        \centerline{%
           \begin{tabular}{ccc}
             \includegraphics[page=1]{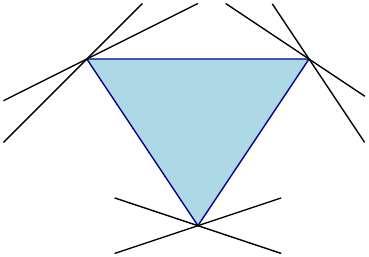}
             &
               \qquad\qquad
             &%
               \includegraphics[page=2]{figs/shatter}\\
             (A)
             &
             &
               (B)
           \end{tabular}%
        }
        \caption{Illustration of the proof of \lemref{v:c:dim}.
          (A) The construction. (B) Turning a subset into a simplex range.}
        \figlab{shatter:h:s}
    \end{figure}
    For the lower bound, let $\sigma_0$ denote some fixed regular
    simplex, and denote its vertices as $v_1,v_2,\ldots,v_{d+1}$.  For
    each $v_i$, choose a set $H_i$ of $d$ hyperplanes that are
    incident to $v_i$, so that (i) none of these hyperplanes crosses
    $\sigma_0$, (ii) they are all nearly parallel to some `ground
    hyperplane' $h^{(i)}$ that supports $\sigma_0$ at $v_i$, and the
    angles that $h^{(i)}$ forms with the vectors $\vec{v_iv_j}$, for
    $j\ne i$, are all at least some positive constant $\alpha$ (that
    depends on $d$ but is bounded away from $0$), and (iii) the
    hyperplanes of $H_i$ are otherwise in generic position (except
    for being concurrent at $v_i$). The set $H:=\bigcup_j H_j$ consists
    of $d(d+1)$ hyperplanes, and we claim that $H$ can be shattered by simplices.

    Indeed, let $c_0$ denote the center of mass of $\sigma_0$. In a
    suitable small neighborhood of $v_i$, the hyperplanes of $H_i$
    partition space into $2^d$ ``orthants'' (all of which are tiny,
    flattened slivers, except for the one containing $\sigma_0$ and
    its antipodal orthant), and for each subset $H'_i$ of $H_i$ there
    is a unique orthant $W(H'_i)$, such that for any point
    $q\in W(H'_i)$, the segment $c_0q$ crosses all the hyperplanes of
    $H'_i$, and no hyperplane of $H_i\setminus H'_i$, see \figref{shatter:h:s}.
    It is easily seen that the same property also holds in the following
    more general context.  Perturb $\sigma_0$ by moving each of its
    vertices $v_i$ to a point $q_i$, sufficiently close to $v_i$, in
    any one of the orthants incident to $v_i$. Then, assuming that the
    hyperplanes of each $H_i$ are sufficiently `flattened' near the
    ground hyperplane $h^{(i)}$, the perturbed simplex
    $\sigma = {\rm conv}(\{q_1,\ldots,q_{d+1}\})$ is such that, for
    each $i$, a hyperplane of $H_i$ crosses $\sigma$ if and only if it
    crosses the segment $c_0q_i$.

    All these considerations easily imply the desired property. That
    is, let $H'$ be any subset of $H$, and put $H'_i := H'\cap
    H_i$. For each $i$, choose a point $q_i$ near $v_i$ in the
    incident orthant that corresponds to $H'_i$. Then $H'$ is
    precisely the subset of the hyperplanes of $H$ that are crossed by
    the simplex whose vertices are $q_1,\ldots,q_{d+1}$, showing that
    $H$ is indeed shattered by simplices.
\end{proof}

In particular, we have established a fairly large gap between the
combinatorial dimension $\frac12d(d+3)$ and the VC-dimension (which is
at least $d(d+1)$).  The gap exists for any $d\ge 2$, and increases as
$d$ grows---for large $d$ the combinatorial dimension is about  half
the VC-dimension.


%



\section{Vertical decomposition}
\seclab{v:de}

In this section we study the combinatorial dimension, primal shatter
dimension, and VC-dimension of vertical decomposition. In doing so, we
gain new insights into the structure of vertical decomposition, and
refine the combinatorial bounds on its complexity; these insights are
needed for our analysis, but they are of independent interest, and
seem promising for further applications.

\subsection{The construction}
\seclab{v:d}

Let $H$ be a set of $n$ hyperplanes in $\Re^d$.  The construction of
the vertical decomposition (\VD{} for short) of $\Arr(H)$ works by
recursion on the dimension, and is somewhat more involved than the
bottom-vertex triangulation. Its main merit is that it also applies to
decomposing arrangements of more general (constant-degree algebraic)
surfaces, and is in fact the only known general-purpose technique for
this generalized task (although we consider it here only in the
context of decomposing arrangements of hyperplanes).  Moreover, as the
analysis in this paper shows, this technique is the ``winner'' in
obtaining faster point-location query time (with a few caveats, noted
later).

Let $H$ be a collection of $n$ hyperplanes in $\Re^d$ with a suitably
defined generic coordinate frame. We can therefore assume, in
particular, that there are no vertical hyperplanes in $H$, and, more
generally, that no flat of intersection of any subset of hyperplanes
is parallel to any coordinate axis.  The \emph{vertical decomposition}
$\V(H)$ of the arrangement $\Arr(H)$ is defined in the following
recursive manner (see~\cite{cegs-sessr-91,sa-dsstg-95} for the general
setup, and \cite{GHMS-95,koltun} for the case of hyperplanes in four
dimensions, as well as the companion paper~\cite{ES17}).  Let the coordinate system be $x_1, x_2, \ldots, x_d$,
and let $C$ be a $d$-dimensional cell in $\Arr(H)$.  For each
$(d-2)$-face $g$ on $\bd{C}$, we erect a $(d-1)$-dimensional
\emph{vertical wall} passing through $g$ and confined to $C$; this is
the union of all the maximal $x_d$-vertical line-segments that have
one endpoint on $g$ and are contained in $C$.  The walls extend
downwards (resp., upwards) from faces $g$ on the top boundary (resp.,
bottom boundary) of $C$ (faces on the ``equator'' of $C$, i.e., faces
that have a vertical supporting hyperplane, have no wall erected from
them within $C$).
This collection of walls subdivides $C$ into convex vertical prisms,
each of which is bounded by (potentially many) vertical walls, and by
two hyperplanes of $H$, one appearing on the bottom portion and one on
the top portion of $\bd{C}$, referred to as the \emph{floor} and the
\emph{ceiling} of the prism, respectively; in case $C$ is unbounded, a
prism may be bounded by just a single (floor or ceiling) hyperplane of
$H$.
More formally (or, rather, in an alternative, equivalent formulation),
this step is accomplished by projecting the bottom and the top
portions of $\bd{C}$ onto the hyperplane $x_d=0$, and by constructing
the \emph{overlay} of these two convex subdivisions (of the projection
of $C$).  Each full-dimensional (i.e., $(d-1)$-dimensional) cell in
the overlay, when lifted back to $\Re^d$ and intersected with $C$,
becomes one of the above prisms.

Note that after this step, the two bases (or the single base, in case
the prism is unbounded) of a prism may have arbitrarily large
complexity, or, more precisely, be bounded by arbitrarily many
hyperplanes (of dimension $d-2$).  The common projection of the two
bases is a convex polyhedron in $\Re^{d-1}$, bounded by at most $2n-1$
hyperplanes,\footnote{We will shortly argue that the actual number of
   hyperplanes is only at most $n-1$.} where each such hyperplane is
the vertical projection of either an intersection of the floor
hyperplane $h^-$ with another original hyperplane $h$, or an
intersection of the ceiling hyperplane $h^+$ with some other $h$; this
collection might also include $h^-\cap h^+$.

In what follows we refer to these prisms as \emph{first-stage} prisms.
Our goal is to decimate the dependence of the complexity of the prisms
on $n$, and to construct a decomposition so that each of its prisms is
bounded by no more than $2d$ hyperplanes. To do so, we recurse with
the construction at the common projection of the bases onto $x_d=0$.
Each recursive subproblem is now $(d-1)$-dimensional.

Specifically, after the first decomposition step described above, we
project each prism just obtained onto the hyperplane $x_d = 0$,
obtaining a $(d-1)$-dimensional convex polyhedron $C'$, which we
vertically decompose using the same procedure described above, only in
one lower dimension.  That is, we now erect vertical walls within $C'$
from each $(d-3)$-face of $\bd{C'}$, in the $x_{d-1}$-direction.
These walls subdivide $C'$ into $x_{d-1}$-vertical prisms, each of
which is bounded by (at most) two facets of $C'$, which form its floor
and ceiling (in the $x_{d-1}$-direction), and by some of the vertical
walls. We keep projecting these prisms onto coordinate hyperplanes of
lower dimensions, and produce the appropriate vertical walls.  We stop
the recursion as soon as we reach a one-dimensional instance, in which
case all prisms projected from previous steps become line-segments,
requiring no further decomposition.
We now backtrack, and lift the vertical walls (constructed in lower
dimensions, over all iterations), one dimension at a time, ending up
with $(d-1)$-dimensional walls within the original cell $C$; that is,
a $(d-i)$-dimensional wall is ``stretched'' in directions
$x_{d-i+2}, \ldots, x_{d}$ (applied in that order), for every
$i=2, \ldots, d-1$. See \figref{vd-in-3d} for an illustration of the construction.

\begin{figure}[b]
  \centerline{%
    \begin{tabular}{c|c|c}
      {\includegraphics[page=1,width=0.22\linewidth]%
        {figs/3d_prism}}%
      &
      {\includegraphics[page=2,width=0.22\linewidth]{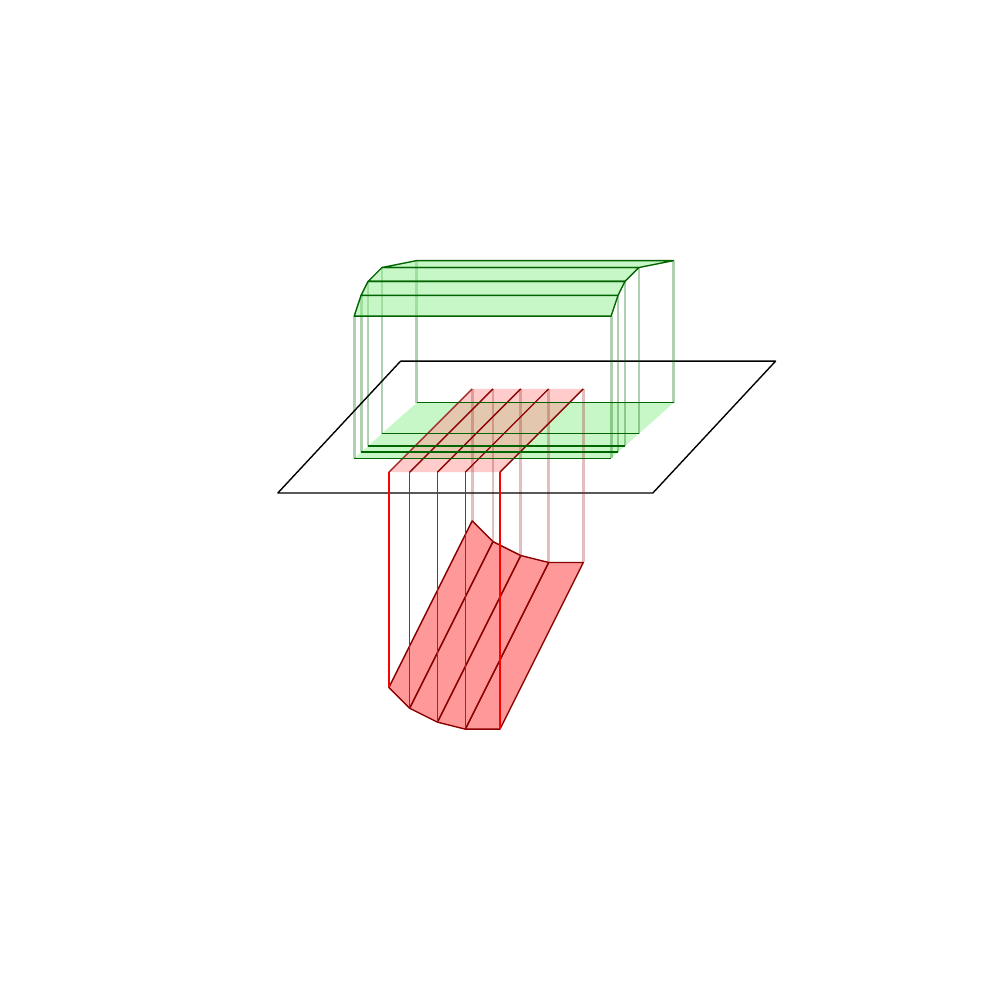}}%
      &
      {\includegraphics[page=4,width=0.22\linewidth]{figs/3d_prism}}%
      \\
      (A)
      &
      (B)
      &
      (C)
       \end{tabular}%
  }
  \caption{}%
  \figlab{vd-in-3d}%
\end{figure}

%
%

Each of the final cells is a ``box-like'' prism, bounded by at most
$2d$ hyperplanes. Of these, two are original hyperplanes (its floor
and ceiling in $\Re^d$), two are hyperplanes supporting two
$x_d$-vertical walls erected from some $(d-2)$-faces, two are
hyperplanes supporting two $x_{d-1}x_d$-vertical walls erected from
some $(d-3)$-faces (within the appropriate lower-dimensional
subspaces), and so on.

We apply this recursive decomposition for each $d$-dimensional cell
$C$ of $\Arr(H)$, and we apply an analogous decomposition also to each
lower dimensional cell $C'$ of $\Arr(H)$, where the appropriate
$k$-flat that supports $C'$ is treated as $\reals^k$, with a 
suitable generic choice of coordinates. The union of the resulting
decompositions is the vertical decomposition $\VD(H)$.

\begin{lemma} \lemlab{def:hyp} Consider the prisms in the vertical
    decomposition of a $d$-dimensional cell $C$ of $\Arr(H)$. Then (i)
    Each final prism $\cell$ is defined by at most $2d$ hyperplanes of
    $H$. That is, there exists a subset $\DefSet{\cell} \subseteq H$
    of size at most $2d$, such that $\cell$ is a prism of
    $\VD(\DefSet{\cell})$.

    \smallskip
    \noindent
    (ii) $\DefSet{\cell}$ can be enumerated as
    $(h_1^-,h_1^+,h_2^-,h_2^+,\ldots,h_d^-,h_d^+)$ (where some of
    these hyperplanes may be absent), such that, for each
    $j=d,d-1,\ldots,1$, the floor (resp., ceiling) of the projection
    of $\cell$ processed at dimension $j$ is defined by a sequence of
    intersections and projections that involve only the hyperplanes
    $(h_1^-,h_1^+,h_2^-,h_2^+,\ldots,h_{d-j}^-,h_{d-j}^+)$ and
    $h_{d-j+1}^-$ (resp., $h_{d-j+1}^+$).
\end{lemma}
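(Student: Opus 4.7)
I would prove both parts (i) and (ii) together by induction on the dimension $d$, tracking what happens at each level of the recursive vertical-decomposition construction. The base case $d=1$ is immediate: a final prism in a one-dimensional arrangement is an interval bounded by at most two points (``hyperplanes'' in $\Re^1$), and these serve as $h_1^-, h_1^+$.

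For the inductive step, let $\cell$ be a final prism of $\VD(H)$ inside the $d$-dimensional cell $C$, and let $\widehat{\cell}$ be the first-stage prism in $C$ that contains $\cell$ (the one produced before we recurse into lower dimension). By the construction, $\widehat{\cell}$ has a unique floor hyperplane from $H$ supporting a facet on the bottom boundary of $C$ and a unique ceiling hyperplane from $H$ supporting a facet on the top boundary; call these $h_1^-$ and $h_1^+$. These are the only original hyperplanes of $H$ that define $\widehat{\cell}$ at dimension $d$, matching the $j=d$ case of part (ii) (where $d-j=0$, so no previously-introduced hyperplanes are used). I would then observe that the vertical walls of $\widehat{\cell}$ are erected from $(d-2)$-faces of $\bd C$, so, when $\widehat{\cell}$ is projected onto $\{x_d=0\}$, the resulting $(d-1)$-dimensional polyhedron $\sigma'$ is bounded by hyperplanes of the form $\pi(h_1^-\cap h)$ or $\pi(h_1^+\cap h)$ for various $h\in H$ (plus possibly $\pi(h_1^-\cap h_1^+)$).

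Next I would apply the inductive hypothesis to the $(d-1)$-dimensional vertical decomposition of $\sigma'$, viewed as a cell in an arrangement of hyperplanes living in $\{x_d=0\}\simeq\Re^{d-1}$. The induction supplies an enumerated defining sequence $(g_1^-, g_1^+, \ldots, g_{d-1}^-, g_{d-1}^+)$ of at most $2(d-1)$ hyperplanes in $\Re^{d-1}$ for the projected prism. Because every such $g_i^\pm$ has the form $\pi(h_1^\epsilon\cap h)$ for some $\epsilon\in\{-,+\}$ and some $h\in H$ (or degenerates to $\pi(h_1^-\cap h_1^+)$, in which case no new hyperplane is needed), I can read off the promised $h_{i+1}^\pm\in H$ as the original hyperplane $h$ that is used in the definition of $g_i^\pm$. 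Appending these to $h_1^-, h_1^+$ yields a defining set of size at most $2d$, establishing part (i). For part (ii), the inductive clause for $g_i^\pm$ (at dimension $(d-1)-i+1$ in the $(d-1)$-dimensional recursion) says that the floor/ceiling at that step is defined from $g_1^-,\ldots,g_{i-1}^+$ together with $g_i^-$ or $g_i^+$; translating via the map $g_k^\epsilon\mapsto$ (one intersection-and-projection step starting from $h_{k+1}^\epsilon$ and the earlier $h_m^\pm$'s), I obtain precisely the claimed dependence on $(h_1^-,h_1^+,\ldots,h_{d-j}^-,h_{d-j}^+)$ together with $h_{d-j+1}^-$ or $h_{d-j+1}^+$ at dimension $j$.

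The main obstacle I anticipate is bookkeeping, not conceptual: one has to verify that each $g_i^\pm$ in the $(d-1)$-dimensional recursion genuinely introduces at most one new hyperplane of $H$ beyond those already used in the sequence of intersections and projections defining the earlier $g_k^\pm$'s. This rests on the observation that a facet of the projected polyhedron $\sigma'$ is the projection of an intersection of the current floor-or-ceiling with a \emph{single} other hyperplane of $H$, so the ``chain'' produced when we unravel the recursion picks up exactly one original hyperplane per floor or ceiling introduced. Degeneracies (vertices lying on many hyperplanes, or floors/ceilings being absent in unbounded cells) need a short separate remark: in such cases the defining set is simply smaller than $2d$, and we leave the corresponding $h_i^\pm$ slots empty, which is consistent with the statement of the lemma.
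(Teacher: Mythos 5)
Your proof is correct and takes essentially the same approach as the paper, which also tracks, level by level through the recursion, that each new floor/ceiling introduces at most one new original hyperplane beyond those already accumulated. The only difference is organizational: the paper phrases this as a forward pass with a loop invariant on $j=d,d-1,\ldots,1$ (maintaining the growing set $D_j$ and the invariant that every element of the current collection $H_j$ involves at most one new original hyperplane), whereas you phrase it as an explicit induction on $d$, applying the inductive hypothesis to the $(d-1)$-dimensional decomposition of the projected polyhedron and then translating the $(d-1)$-dimensional defining sequence $(g_i^\pm)$ back to original hyperplanes via $g_i^\pm \mapsto h_{i+1}^\pm$. The ``main obstacle'' you flag (each $g_i^\pm$ contributes at most one new $h\in H$) is precisely the invariant the paper proves, so the two arguments coincide in substance.
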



\begin{proof}
    We first establish property (ii) using backward induction on the
    dimension of the recursive instance.  For each dimension
    $j=d,d-1,\ldots,1$, we prove that when we are at dimension $j$, we
    already have a set $D_j$ of (at most) $2(d-j)$ original defining
    hyperplanes (namely, original hyperplanes defining the walls
    erected so far, in the manner asserted in part (ii)), and that
    each (lower-dimensional) hyperplane in the current collection
    $H_j$ of $(j-1)$-hyperplanes is obtained by an interleaved
    sequence of intersections and projections, which are expressed in
    terms of some subset of the defining hyperplanes and (at most)
    \emph{one additional original hyperplane}. This holds trivially
    initially, for $j=d$. For $j=d-1$ we have two defining hyperplanes
    $h_1^-$, $h_1^+$ in $D_{d-1}$, which contain the floor and ceiling
    of the prism, respectively.  The collection $H_{d-1}$ is obtained
    by intersecting $h_1^-$ and $h_1^+$ with the remaining hyperplanes
    of $H$ (including the intersection $h_1^-\cap h_1^+$), and by
    projecting all these intersections onto the $(d-1)$-hyperplane
    $x_d=0$. Then a new pair of hyperplanes $h_2^-$ and $h_2^+$ (with shortest
    vertical distance) are chosen, and thus the floor (resp., ceiling) of the projection
    of $\cell$ is defined is defined by a sequence of intersections and projections that
    involve $h_1^-, h_1^+, h_2^-, h_2^+$. 
    So the inductive property holds for $j=d-1$. In general,
    when we move from dimension $j$ to dimension $j-1$ we choose a new
    floor $g_{j+1}^-$ and a new ceiling $g_{j+1}^+$ from among the hyperplanes in $H_j$,
    gaining two new (original) defining hyperplanes $h_{j+1}^-$ and $h_{j+1}^+$.
    We add these new defining hyperplanes 
    to $D_j$ to form $D_{j-1}$,
    and intersect each of the floor $g_{j+1}^-$ and ceiling $g_{j+1}^+$ with the other
    hyperplanes in $H_j$, and project all the resulting
    $(j-2)$-intersections onto the $(j-1)$-hyperplane $x_j=0$, to
    obtain a new collection $H_{j-1}$ of $(j-2)$-hyperplanes. Clearly,
    the inductive properties that we assume carry over to the new sets
    $D_{j-1}$ and $H_{j-1}$, so this holds for the final setup in
    $d=1$ dimensions. Since each step adds at most two new defining
    hyperplanes, one for defining the floor and one for defining the
    ceiling, the claim in (ii) follows.  Property (i) follows too
    because the above construction will produce $\cell$ when the input
    consists of just the hyperplanes of $\DefSet{\cell}$.
\end{proof}

An analogous lemma holds for the vertical decomposition of lower
dimensional cells of $\Arr(H)$, except that if the cell $C'$ lies in a
$(d-k)$-flat, we have to replace the original hyperplanes by the
intersection of the $k$ hyperplanes defining the flat with all other
hyperplanes, and start the recursion in dimension $d-k$. The following
corollary is an immediate consequence.

\begin{corollary}
    \corlab{prisms:comb:dim}%
    The combinatorial dimension of the vertical prisms in the vertical
    decomposition of hyperplanes in $\Re^d$ is $b=2d$.
\end{corollary}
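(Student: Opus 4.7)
The plan is to observe that the upper bound $b \le 2d$ is immediate from \lemref{def:hyp}(i), which was already established: every final prism $\cell$ in the construction is a prism of $\VD(\DefSet{\cell})$ for some $\DefSet{\cell} \subseteq H$ with $\cardin{\DefSet{\cell}} \le 2d$. The only remaining content of the corollary is to verify that this bound is tight, i.e., to exhibit a configuration of hyperplanes in which some prism genuinely requires $2d$ defining hyperplanes, so that the combinatorial dimension (the maximum defining-set size over all definable prisms) is exactly $2d$ rather than strictly smaller.

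The first step is to unwind the construction described in \secref{v:d} and track where each of the up to $2d$ hyperplanes in the enumeration $(h_1^-,h_1^+,h_2^-,h_2^+,\ldots,h_d^-,h_d^+)$ of \lemref{def:hyp}(ii) is actually used. At recursion level $j$ (going from dimension $d$ down to dimension $1$), the pair $(h_{d-j+1}^-,h_{d-j+1}^+)$ is selected as the floor/ceiling of the projected cell in the $x_j$-direction; removing either hyperplane would change the floor or the ceiling at that level and hence alter $\cell$ itself. So the plan is to construct a generic example in which at every recursive level the floor and the ceiling are supplied by \emph{distinct}, never-before-used original hyperplanes, and in which each of these hyperplanes is the unique one producing the corresponding face at that level.

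A clean way to build such an example is to take $2d$ hyperplanes in sufficiently general position, chosen so that the vertical decomposition of their arrangement produces a prism $\cell$ whose full recursive projection sequence employs all $2d$ hyperplanes: at level $d$ pick two hyperplanes whose vertical distance is smallest within the cell; project, and at level $d-1$ arrange for two fresh hyperplanes (one from each of the two projected curves of intersections) to be the $x_{d-1}$-nearest pair in the projected polyhedron; continue analogously down to dimension $1$. By genericity, no prior hyperplane can substitute for the new floor/ceiling at any given level, and thus $\DefSet{\cell}$ has size exactly $2d$. Combined with the upper bound, this yields $b = 2d$.

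The main subtlety — really the only one — is checking that the inductive construction at each level can be arranged independently, so that the $2d$ hyperplanes chosen across levels are all distinct and truly essential; this is a standard genericity/perturbation argument, and once it is laid out, the combinatorial dimension bound follows immediately from \lemref{def:hyp}. I do not expect any deeper obstacle, since the upper bound has already done all the structural work.
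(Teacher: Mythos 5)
Your upper-bound argument coincides exactly with the paper's: the corollary is stated as an ``immediate consequence'' of \lemref{def:hyp}, and that lemma delivers precisely $|\DefSet{\cell}| \le 2d$. Where you diverge from the paper is that you go on to supply a tightness argument: you observe that the definition of combinatorial dimension (\secref{e:d:l:setup} sets it as the \emph{minimum} $b$ that works) nominally requires showing $2d-1$ does not suffice, and you sketch a generic configuration of $2d$ hyperplanes realizing a prism whose defining set has size exactly $2d$. The paper omits this direction entirely, and for good reason: in every place where the combinatorial dimension is subsequently used (\lemref{weak:cutting}, \corref{optimistic}, \thmref{vdsize}, and the point-location analysis), only the upper bound $b \le 2d$ enters; a smaller true value would only improve the bounds. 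So your extra work is logically responsive to the literal wording of the corollary but is unnecessary for the paper's purposes. One small imprecision in your sketch: at recursion level $d-1$ the new floor and ceiling are projections of intersections of the form $h_2^{\pm}\cap h_1^{?}$, and by \lemref{unique_charge} each new hyperplane contributes only one of its two candidate intersections, determined by which side of it the cell lies on; so it is not that ``one comes from each of the two projected curves'' but rather that two fresh originals $h_2^-,h_2^+$ enter the defining set. The genericity claim itself is standard and plausible, but if you intend to include it you should either exhibit an explicit configuration (the $d=2$ trapezoid with four lines generalizes readily) or argue that the set of degenerate configurations has measure zero in the space of $2d$-tuples of hyperplanes.
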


\begin{remark} \remlab{cd:of:vd} Although this is marginal in our
    analysis, it is instructive to note that, even though it only
    takes at most $2d$ hyperplanes to define a prism, expressing how
    the prism is constructed in terms of these hyperplanes is
    considerably more space consuming, as it has to reflect the
    sequence of intersections and projections that create each
    hyperplane that bounds the prism (each of the $2d$ bounding
    hyperplanes carries a ``history'' of the way it was formed, of
    size $O(d)$). A naive representation of this kind would require
    $O(d^2)$ storage per prism. We will bypass this issue when
    handling point-location queries in \secref{sec:meiser}.
\end{remark}

\subsection{The complexity of vertical decomposition}
\seclab{vd-comp}

Our first step is to obtain an upper bound on the complexity of the
vertical decomposition, that is, on the number of its prisms.  This
will also determine the local growth function in the associated
Clarkson--Shor framework. Following the presentation in the
introduction, we analyze the complexity within a single cell of
$\Arr(H)$, and then derive a global bound for the entire
arrangement. As it turns out, this is crucial to obtain a good
dependence on $d$.  In contrast, the traditional global analysis, as
applied in \cite{cegs-sessr-91,sa-dsstg-95}, yields a significantly
larger coefficient, which in fact is too large for the purpose of the
analysis of the point-location mechanism introduced in this paper.

\subsubsection{The analysis of a single cell}

Let $C$ be a fixed $d$-dimensional cell of $\Arr(H)$. With a slight
abuse of notation, denote by $n$ the number of its facets (that is,
the number of hyperplanes of $H$ that support these facets), and
consider the procedure of constructing its vertical decomposition, as
described in~\secref{v:d}.  As we recall, the first stage produces
vertical prisms, each having a fixed floor and a fixed ceiling. We
take each such prism, whose ceiling and floor are contained in two
respective hyperplanes $h^+$, $h^-$ of $H$, project it onto the
hyperplane $x_d=0$, and decompose the projection $C_{d-1}$
recursively.

The $(d-2)$-hyperplanes that bound $C_{d-1}$ are projections of
intersections of the form $h \cap h^+$, $h \cap h^-$, for
$h \in H\setminus \{h^+,h^-\}$, including also $h^+\cap h^-$, if it
arises.  In principle, the number of such hyperplanes is at most
$2n-1$, but, as shown in the following lemma, the actual number
is smaller.


\begin{lemma}
  \lemlab{unique_charge}
  Let $\cell$ be a first-stage prism,
  whose ceiling and floor are contained in two respective
  hyperplanes $h^+$, $h^-$.  Then, for each hyperplane $h \in H$,
  $h \neq h^+, h^-$, then only one of $g^+ := h^+ \cap h$ or
  $g^- := h^- \cap h$ can appear on $\bd\cell$.  It is $g^+$ if $C$
  lies below $h$, and $g^-$ if $C$ lies above $h$.
  As a result, the projection $C_{d-1}$ of $\cell$ onto
  $x_d = 0$ has at most $n-1$ facets.
\end{lemma}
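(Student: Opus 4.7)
The plan is to leverage the simple fact that $C$ is an open cell of $\Arr(H)$, so any hyperplane $h \in H$ with $h \neq h^+, h^-$ cannot cross the interior of $C$. Combined with the genericity of the coordinate frame (no hyperplane is vertical), $C$ lies strictly on one side of $h$ in the $x_d$-direction: either $C$ lies below $h$ or $C$ lies above $h$. The whole argument then pivots on this dichotomy.

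Next, I would unpack what it means for $g^\pm = h^\pm \cap h$ to project to a facet of $C_{d-1}$. The walls of the first-stage prism $\cell$ arise as vertical extrusions from $(d-2)$-faces of $\bd C$ that sit within the portion of $\bd C$ relevant to $\cell$. These faces split into ``top walls'' that come from $(d-2)$-faces of $\bd C$ lying on $h^+$ (the ceiling of $\cell$) and ``bottom walls'' that come from $(d-2)$-faces of $\bd C$ lying on $h^-$ (its floor). A facet of $C_{d-1}$ of the form $g^+ = h^+ \cap h$ can therefore arise only as the projection of a top wall, which happens precisely when $h$ supports a facet of $C$ adjacent to the $h^+$-facet, i.e.\ when $h$ bounds $C$ locally from above. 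Symmetrically, $g^-$ can arise only if $h$ bounds $C$ locally from below.

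The key step is now immediate: if $C$ lies below $h$ globally, then $h$ can only support an upper facet of $C$, so $g^-$ cannot appear on $\bd\cell$, and only $g^+$ may; analogously, if $C$ lies above $h$, only $g^-$ may appear. This is exactly the ``$g^+$ if $C$ is below $h$, $g^-$ if $C$ is above $h$'' statement. For the count, each of the $n-2$ hyperplanes $h \in H\setminus\{h^+,h^-\}$ contributes at most one facet of $\bd C_{d-1}$; the only additional possible contributor is the $(d-2)$-flat $h^+\cap h^-$ itself, so $\bd C_{d-1}$ has at most $(n-2)+1 = n-1$ facets.

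The main obstacle I anticipate is handling degeneracies cleanly: several hyperplanes could meet in a common $(d-2)$-flat, or $h^\pm$ could be tangent to $h$ along lower-dimensional subspaces of $\bd C$. However, since the argument hinges only on the \emph{global} sidedness of $C$ relative to $h$, these degeneracies can, at worst, cause a potential facet to \emph{fail} to appear; they cannot cause both $g^+$ and $g^-$ to materialize simultaneously, so the bound is unaffected. I would state the argument for the generic situation and then note that the inequality is preserved in the degenerate case because any defining set chosen in the sense of \lemref{def:hyp} still obeys the same sidedness constraint.
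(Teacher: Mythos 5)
Your proof follows essentially the same route as the paper: both decompose $\bd C_{d-1}$ into contributions from $(d-2)$-faces of the ceiling facet $f^+$ and of the floor facet $f^-$, observe that a hyperplane $h \neq h^\pm$ can meet $f^+$ in a (non-equator) ridge only if $C$ lies below $h$ and meet $f^-$ in a ridge only if $C$ lies above $h$, and then count $n-2$ such hyperplanes plus the possible $h^+\cap h^-$ facet. The only soft spot, which the paper's proof glosses over in exactly the same way, is the ``i.e.'' identifying ``$h$ adjacent to $f^+$'' with ``$h$ bounds $C$ from above'' (one should note that the equator ridges of $f^+$ erect no walls, so only adjacencies with other upper facets produce walls of $\cell$), but this matches the paper's level of detail.
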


\begin{figure}[h]
    \centering%
    \includegraphics{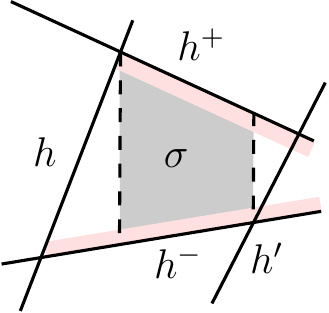}
    \figlab{unique_charge}
\end{figure}

\begin{proof}
    Let $\sigma^+$ (resp., $\sigma^-$) denote the ceiling (resp.,
    floor) of $\sigma$.  Let $f^+$ (resp., $f^-$) denote the facet of
    $C$ that contains $\sigma^+$ (resp., $\sigma^-$).  By
    construction, the vertical projection of $\sigma$ (onto $x_d=0$)
    is the intersection of the projections of $f^+$ and $f^-$. By
    construction, for each $(d-2)$-face of $f^+$, the $(d-2)$-flat
    that supports it is either an intersection of $h^+$ with another
    hyperplane that lies above $C$,
    or the intersection $h^+\cap h^-$.  Symmetric properties hold for
    the floor $f^-$. See Figure~\figref{unique_charge} for an illustration.
    This observation is easily seen to imply the
    assertions of the first part of the lemma.

    Regarding the second part of the lemma, since for a hyperplane $h$,
    only one of $g^+ := h^+ \cap h$ or $g^- := h^- \cap h$, $h \neq h^+, h^-$,
    can appear on $\bd\cell$,
    this contributes at most $n-2$ facets to $C_{d-1}$. 
    Together with the possible facet obtained from $h^+ \cap h^-$, we get the
    asserted bound $n-1$.
%
%
\end{proof}



Using \lemref{unique_charge}, 
we derive a recurrence relation for bounding the
complexity of the vertical decomposition of a single full-dimensional
cell $C$, as follows.

\begin{lemma}
    \lemlab{v:d:cell}%
    Let $K(d,n)$ denote the maximum number of (final) prisms in the
    vertical decomposition of a convex polyhedron in $\Re^d$ with at
    most $n$ facets. 
    Then
    \begin{equation}
        \eqlab{k:fact}%
        K(d,n) \leq \frac{1}{4^{d-2}}
        \pth{\frac{n!}{(n-d+2)!}}^2 (n-d+2)
        \leq \frac{n^{2d-3}}{4^{d-2}} .
    \end{equation}
\end{lemma}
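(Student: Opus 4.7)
The plan is to derive a recurrence in $d$ for $K(d,n)$ by analyzing the first stage of the vertical decomposition. For a $d$-dimensional convex polyhedron $C$ with at most $n$ facets, the first stage creates prisms, each with a unique ceiling and floor among the facets of $C$; each such prism is then projected to $x_d=0$ and recursively decomposed in $\Re^{d-1}$. Thus $K(d,n)\le \sum_{\cell} K(d-1, n_{\cell})$, where the sum runs over first-stage prisms $\cell$ and $n_{\cell}$ denotes the number of facets of the projection of $\cell$.

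The first key step is to bound the number of first-stage prisms. In the generic coordinate frame, every facet of $C$ has outward normal with nonzero $x_d$-component, so the facets partition into $a$ top facets and $b$ bottom facets with $a+b\le n$. Each first-stage prism is identified by its ceiling--floor pair $(f^+,f^-)$: by convexity of $C$, the set of points whose vertical line meets $f^+$ above and $f^-$ below is the intersection of the vertical projections of $f^+$ and $f^-$ (both convex in $\Re^{d-1}$), hence forms a single connected prism. Consequently, the number of first-stage prisms is at most $ab\le n^2/4$ by the AM--GM inequality.

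The second key step is to apply \lemref{unique_charge}, which tells us that the projection of each first-stage prism to $x_d=0$ is a convex polyhedron in $\Re^{d-1}$ with at most $n-1$ facets. Using monotonicity of $K(d-1,\cdot)$, this yields the recurrence
$$K(d,n) \;\le\; \frac{n^2}{4}\, K(d-1, n-1),\qquad d\ge 3,$$
with base case $K(2,n)\le n$, since a convex polygon with $n$ edges is decomposed into at most $n-1$ trapezoids by vertical segments through its interior vertices. Unfolding this recurrence telescopes to
$$K(d,n) \;\le\; \frac{1}{4^{d-2}}\bigl(n(n-1)\cdots(n-d+3)\bigr)^2 (n-d+2) \;=\; \frac{1}{4^{d-2}}\pth{\frac{n!}{(n-d+2)!}}^2 (n-d+2),$$
which is the stated bound; the weaker $n^{2d-3}/4^{d-2}$ then follows from $n-i\le n$.

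The main obstacle is the bound $ab$ on the number of first-stage prisms, which requires showing that distinct prisms correspond to distinct $(f^+,f^-)$ pairs, and that distinct realized pairs cannot exceed $ab$ in number. A secondary care-point is the base case: the tighter $K(2,n)\le n$ bound (rather than the $O(n^2)$ that the same two-level recurrence would naively produce in $\Re^2$) is precisely what makes the leading constant $1/4^{d-2}$ come out correctly.
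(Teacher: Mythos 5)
Your proof is correct and takes essentially the same route as the paper: bound the number of first-stage floor--ceiling pairs by $n^- n^+ \le n^2/4$, use \lemref{unique_charge} to show each projected cell has at most $n-1$ facets, combine into the recurrence $K(d,n)\le (n^2/4)K(d-1,n-1)$ with base case $K(2,n)\le n$, and unfold. The one piece you spell out in more detail than the paper does is the justification that the region of floor--ceiling pair $(f^+,f^-)$ is a single connected prism (convexity of the two projections), which the paper takes as immediate from the construction.
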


\begin{proof}
    We first note that the number of pairs $(h^-,h^+)$ that can
    generate the floor and ceiling of a first-stage prism is at most
    $n^-n^+$, where $n^-$ (resp., $n^+$) is the number of facets on
    the lower (resp., upper) boundary of $C$. Since the maximum value
    of $n^-n^+$ is $n^2/4$, we obtain, by \lemref{unique_charge},
    \begin{equation}
        \eqlab{krec}%
        K(d,n) \leq \frac{n^2}{4} K(d-1,n-1) ,
    \end{equation}
    and (as is easily checked) $K(2,n) \leq n$. Solving the recurrence,
    we get
\begin{align*}
{\displaystyle K(d,n) \leq \frac{1}{4^{d-2}}
   \pth{\frac{n!}{(n-d+2)!}}^2 (n-d+2)},
\end{align*}
which is bounded by ${\displaystyle \frac{n^{2d-3}}{4^{d-2}}}$.
\end{proof}

\subsubsection{Vertical decomposition of the entire arrangement}

Next we prove the following theorem using \lemref{v:d:cell}. Note that
here we consider prisms of the vertical decomposition of every cell of
$\Arr(H)$, of any dimension.



\begin{theorem}
    \thmlab{vdsize}%
    For $n \ge 2d$, the complexity of the vertical decomposition of an arrangement of
    $n$ hyperplanes in $\Re^d$ is
    \begin{math}
        {\displaystyle O\pth{ \frac{4^d}{d^{7/2}} n^{2d} }},
    \end{math}
    with an absolute, $d$-independent constant of proportionality.
\end{theorem}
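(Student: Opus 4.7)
The plan is to combine \lemref{v:d:cell} with the Clarkson--Shor framework: by \corref{prisms:comb:dim} each prism of $\VD(H)$ has a defining set of at most $2d$ hyperplanes, so the total number of prisms is bounded by $\binom{n}{2d}$ times $\max_{|D|=2d}|\VD(D)|$, with contributions from smaller defining sets being strictly lower-order terms.

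To bound $|\VD(D)|$ for $|D|=2d$, I would sum \lemref{v:d:cell} over all cells of $\Arr(D)$. The dominant contribution comes from the $d$-dimensional cells; by \lemref{num:cells} (with $n$ set to $2d$) their number is $\sum_{i=0}^{d}\binom{2d}{i}\le 4^d$, which is the source of the exponential factor $4^d$. Each such cell has at most $2d$ facets and so contributes at most $K(d,2d)\le (2d)^{2d-3}/4^{d-2}$ prisms. The $4^d$ from the cell count partially cancels against the $1/4^{d-2}$ inside $K(d,2d)$, so that $|\VD(D)|=O\bigl((2d)^{2d-3}\bigr)$. Multiplying by $\binom{n}{2d}\le n^{2d}/(2d)!$ and applying Stirling's formula $(2d)!\sim \sqrt{4\pi d}(2d/e)^{2d}$ then extracts the factor $n^{2d}/d^{7/2}$, the $d^{7/2}$ coming from the $(2d)^{3}\cdot \sqrt{d}$ produced by Stirling.

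The main obstacle will be tracking the exponential-in-$d$ constants with enough precision to achieve the $4^d$ base claimed in the statement: a routine application of the scheme outlined above produces an exponential factor with a larger base (from Stirling's $e^{2d}$), and squeezing it back down to $4^d$ requires using the detailed form of \lemref{v:d:cell}'s bound, $K(d,n)\le (n!/(n-d+2)!)^2(n-d+2)/4^{d-2}$, together with the tight Zaslavsky-style count on $d$-cells of the sub-arrangement, so that all the cancellations line up correctly. Cells of lower dimension in $\Arr(H)$ are handled by an analogous, strictly smaller, argument applied within each $k$-flat of $\Arr(H)$, and the resulting contributions are absorbed into the leading term.
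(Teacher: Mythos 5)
Your proposal follows essentially the paper's route: reduce, via the Clarkson--Shor axioms, to counting prisms of $\VD(D)$ over all defining sets $D$ of size at most $2d$; bound $|\VD(D)|$ by the number of full-dimensional cells of $\Arr(D)$ times the per-cell bound of \lemref{v:d:cell}; multiply by $\binom{n}{2d}$ and apply Stirling; and handle lower-dimensional prisms by summing over flats. You also correctly locate the numerical subtlety: the weakened bound $K(d,2d)\le(2d)^{2d-3}/4^{d-2}$ leaves an $e^{2d}$ factor, whereas the factorial form $(d+2)\bigl((2d)!/(d+2)!\bigr)^2/4^{d-2}$, once divided by the $(2d)!$ hidden in $\binom{n}{2d}$, produces the central binomial coefficient $\binom{2d}{d}\sim 4^d/\sqrt{\pi d}$. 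Note that this, and not the $\le 4^d$ cell count (which cancels completely against the $4^{-(d-2)}$, up to a constant), is the true source of the $4^d$ in the statement; your opening sentence of the second paragraph attributes it to the cell count, which your own cancellation remark then contradicts.

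One step does fail as written: the claim that defining sets of size strictly less than $2d$ contribute only lower-order terms breaks down at the lower end of the admissible range. At $n=2d$ one has $\binom{n}{2d}=1$ while $\binom{n}{d}\sim 4^d/\sqrt{d}$, so the smaller defining sets dominate rather than being negligible. Since the theorem is asserted for all $n\ge 2d$, that range needs its own argument; the paper bounds $\sum_{b\le 2d}\binom{n}{b}2^b$ by $3^n$ and checks directly that $3^n\le \sqrt{d}\,(e/d)^{2d}n^{2d}$ when $2d\le n<4d$. Your sketch, as given, only establishes the theorem for $n$ bounded away from $2d$, say $n\ge 4d$.
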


\begin{proof}
    Let $H$ be a set of $n$ hyperplanes in $\Re^d$, and let $\VD(H)$
    denote the vertical decomposition of the entire arrangement
    $\Arr(H)$.  The explicit construction of $\VD(H)$ is in fact
    equivalent to taking each cell $C$ of $\Arr(H)$ and applying to it
    the vertical decomposition procedure outlined above. A naive
    implementation of this reasoning gives a somewhat inferior bound
    on the overall number of prisms (in terms of its dependence on
    $d$; see the remark following the proof), so we use instead the
    following somewhat indirect argument.

    We first count the number of $d$-dimensional prisms in $\VD(H)$.
    Following~\corref{prisms:comb:dim}, each such prism in $\VD(H)$ is
    defined in terms of $b \leq 2d$ hyperplanes of $H$. In addition,
    it is easily verified that the vertical decomposition scheme
    complies with the Clarkson--Shor framework (see
    \secref{e:d:l:setup}). In particular, a prism $\cell$ that is
    defined by a subset $H_0$ of $b \leq 2d$ hyperplanes will
    appear as a prism of $\VD(H_0)$. By \eqqref{k:fact}, the overall
    number of $d$-dimensional prisms in the vertical decomposition of
    a single $d$-dimensional cell of $\Arr(H_0)$ is at most
    \begin{align*}
      O\pth{ \frac{d+2}{4^d} \pth{ \frac{(2d)!}{(d+2)!} }^2} ,
    \end{align*}
    with an absolute constant of proportionality, independent of $d$.
    Multiplying this by the number of $d$-dimensional cells of
    $\Arr(H_0)$, which is
    \begin{math}
        \sum_{j=0}^d \binom{b}{j} \leq 2^{b}
    \end{math}
    (see \lemref{num:cells}(a)), we get a total of
    \begin{align*}
      O\pth{ 2^b \pth{ \frac{d+2}{4^d}} \pth{ \frac{(2d)!}{(d+2)!} }^2}
    \end{align*}
    prisms. Finally, multiplying this bound by the number
    $\binom{n}{b}$ of subsets of $H$ of size $b$, and summing over
    $b=0,\ldots,2d$, we get a total of
  \begin{align*}
  O\pth{ \pth{ \sum_{b=0}^{2d} \binom{n}{b}2^b }\cdot
     \frac{d+2}{4^d} \pth{ \frac{(2d)!}{(d+2)!} }^2 }
  \end{align*}
  prisms. As is easily checked, for $n\ge 4d$, the sum $\sum_{b=0}^{2d} \binom{n}{b}2^b$
  is proportional to its last element, i.e., it is
  \begin{align*}
    O\pth{ \binom{n}{2d} 4^d} =
    O\pth{ \frac{n^{2d}}{(2d)!} 4^d} =
    O\pth{ \sqrt{d}\pth{ \frac{2e}{2d} }^{2d} n^{2d} } \ ,
  \end{align*}
  using Stirling's approximation.
  When $2d \le n < 4n$ we proceed as follows.
  We first observe that by the Multinomial Theorem it follows that
  \begin{align*}
    \sum_{b=0}^{2d} {n \choose b} 2^b = \sum_{b=0}^{2d} {n \choose b} \left(\sum_{i=1}^b {b \choose i} \right) = 3^n .
  \end{align*}
%
  We claim that
  ${\displaystyle 3^n \leq \sqrt{d}\pth{ \frac{2e}{2d} }^{2d} n^{2d}\approx \frac{n^{2d}}{(2d)!}}$
  for $2d \le n\le 4d$. Indeed, putting $x=n/(2d)$, this is equivalent to asserting that
  $\frac{3^x}{x} \leq d^{1/(4d) } 2e$ for $1 \le x\le 2$, which does indeed hold.
  Hence, the overall number of prisms is
  \begin{align*}
  O\pth{ \frac{n^{2d}}{(2d)!} 4^d\cdot \frac{d+2}{4^d} \pth{
        \frac{(2d)!}{(d+2)!} }^2} = O\pth{ \frac{n^{2d}}{d^3} \cdot
     \binom{2d}{d} } = O\pth{ \frac{4^d}{d^{7/2}} n^{2d} } .
  \end{align*}
  So far we have accounted only for full-dimensional prisms, in the
  decomposition of full-dimensional cells of $\Arr(H)$. The number of
  prisms of all dimensions in $\VD(H)$, where prisms of dimension $j$
  arise in the vertical decomposition within some $j$-flat formed by
  the intersection of a corresponding subset of $d-j$ hyperplanes of
  $H$,
  is easily seen to be bounded by:
  
  \begin{align*}
    O\pth{ \sum_{j=0}^d \binom{n}{d-j} \cdot \frac{4^j}{j^{7/2}} n^{2j} }
    = O\pth{ \sum_{j=0}^d \frac{n^{d-j}}{(d-j)!} \cdot \frac{4^j}{j^{7/2}} n^{2j} }
  \end{align*}
  \begin{align*}
    = O\pth{ \sum_{j=0}^d \frac{n^{d+j}}{(d-j)!} \cdot
      \frac{4^j}{j^{7/2}} } = O\pth{ \frac{4^d}{d^{7/2}} n^{2d} } .
  \end{align*}

\end{proof}



As a matter of fact, the preceding proof also implies the following
stronger statement.
\begin{corollary} \corlab{vd:all:p}
  Let $H$ be a set of $n \ge 2d$ hyperplanes in $\Re^d$.
  The overall number of prisms that can arise in the
  vertical decomposition of the arrangement of any subset of $H$ is
  ${\displaystyle O\pth{ \frac{4^d}{d^{7/2}} n^{2d} }}$, with an
  absolute, $d$-independent constant of proportionality.
\end{corollary}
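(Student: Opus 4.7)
The plan is to observe that the argument used in the proof of \thmref{vdsize} already bounds precisely the quantity we need; only a small reinterpretation is required. First, I would recall that vertical decomposition complies with the Clarkson--Shor framework (as noted in \secref{vd-comp}), and that its combinatorial dimension is $b \le 2d$ by \corref{prisms:comb:dim}. Consequently, every prism $\cell$ arising in $\VD(H')$ for some subset $H' \subseteq H$ admits a defining subset $\DefSet{\cell} \subseteq H'$ of size at most $2d$, and by Clarkson--Shor axiom (ii) the prism $\cell$ itself already appears in $\VD(\DefSet{\cell})$.

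Hence the collection of all prisms that can possibly arise, as $H'$ ranges over all subsets of $H$, is contained in
\begin{equation*}
\bigcup_{\substack{H_0 \subseteq H \\ |H_0| \le 2d}} \VD(H_0),
\end{equation*}
and its total cardinality is at most $\sum_{b=0}^{2d}\binom{n}{b} \cdot |\VD(H_0)|_{\max}$, where $|\VD(H_0)|_{\max}$ denotes the maximum complexity of the vertical decomposition of an arrangement of $b$ hyperplanes in $\Re^d$. The main step is then to invoke the computation already carried out in the proof of \thmref{vdsize}: there, the bound on $|\VD(H)|$ was obtained by summing exactly this quantity, after first bounding the number of $d$-dimensional prisms arising in a single cell of $\Arr(H_0)$ by \lemref{v:d:cell}, then multiplying by the number of cells of $\Arr(H_0)$ via \lemref{num:cells}, then by $\binom{n}{b}$, summing over $b \le 2d$, and finally accounting for prisms of all lower dimensions (which arise inside $j$-flats formed by intersections of $d-j$ hyperplanes).

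I expect no real obstacle: the only subtlety is conceptual, namely recognizing that the counting scheme in the proof of \thmref{vdsize} is in fact an upper bound on the total number of \emph{definable} prisms (in the Clarkson--Shor sense), not merely on the number of prisms that actually appear in $\VD(H)$ itself. Once this is spelled out, the same chain of inequalities yields the bound $O(4^d n^{2d}/d^{7/2})$ with an absolute constant of proportionality, and the corollary follows.
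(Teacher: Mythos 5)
Your proof is correct and matches the paper's approach exactly: the paper itself states only that "the preceding proof also implies the following stronger statement," and your proposal simply makes explicit what that means, namely that the counting in the proof of \thmref{vdsize} already sums, over all subsets $H_0$ of size at most $2d$, the complexity of $\VD(H_0)$, which by the Clarkson--Shor axioms (and \corref{prisms:comb:dim}) upper-bounds the number of all definable prisms rather than just those appearing in $\VD(H)$.
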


\begin{remark}
    The bound in \thmref{vdsize} is a significant improvement over the
    previous upper bound of \cite{cegs-sessr-91} in terms of the
    dependence of its coefficient on $d$, from $2^{O(d^2)}$ to less
    than $4^d$.  We emphasize, however, that we pay a small price in
    terms of the dependence on $n$, which is $n^{2d}$ in the new
    bound, instead of $n^{2d-4}$ in~\cite{koltun} (and only $O(n^d)$
    if one uses instead bottom-vertex triangulation, by
    \lemref{simplices:grow}). The previous analysis of the complexity
    of the vertical decomposition in~\cite{cegs-sessr-91} is global
    (unlike ours, which is local in each
    cell), 
    and yields a recurrence relation for the entire number of prisms
    in the vertical decomposition of $n$ hyperplanes in $d$
    dimensions.  Denoting this function by $V(n,d)$, it is shown
    in~\cite{cegs-sessr-91} that $V(n,d) \leq n^2 V(2n,d-1)$, which has
    the solution $V(n,d) \leq 2\cdot 2^{d(d-1)} n^{2d-1}$ (when we stop
    at $d=1$, in which case $V(n,d) \leq n+1 \leq 2n$).  The dependence
    of this bound on $n$ can be reduced to $n^{2d-4}$ (by a refined,
    and rather complicated, analysis in four dimensions, and using it
    as the base case for this recurrence, see~\cite{koltun}) but not
    the dependence on $d$.  In contrast, the coefficient in our bound
    is only singly exponential in $d$. As will follow later, this is a
    crucial property for improving the query cost in Meiser's point
    location algorithm.
    We pay a small cost (it is small when $d$ is large) in that the power of $n$ is
    larger by a small constant. It would be interesting to establish a
    bound that is both singly exponential in $d$ and has the smaller
    power $2d-4$ of $n$.

    We also note that it is an open problem whether the complexity of
    $\VD(H)$ is really asymptotically larger than $O(n^d)$. If it were
    not, the preceding discussion would become vacuous.
\end{remark}

\thmref{vdsize} implies that the local growth function of $\VD$
satisfies $u(m) \leq 2^\alpha m^\beta$, with $\alpha = 2d$ and
$\beta=2d$. That is, we have $\alpha, \beta, b = 2d$. Substituting
these values into \lemref{weak:cutting}, we obtain the following
result.

\begin{corollary}
    Let $H$ be a set of $n$ hyperplanes in $\Re^d$ and let
    $\BadProb \in (0,1)$ be some confidence parameter.  Then, for
    $\rho \geq cr(d \log (rd)+\ln \frac{1}{\BadProb})$, for some
    suitable absolute constant $c$, the vertical decomposition of a
    $\rho$-sample is a $(1/r)$-cutting with probability
    $\geq 1-\BadProb$.
\end{corollary}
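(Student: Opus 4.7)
The plan is to invoke \lemref{weak:cutting} directly with the three Clarkson--Shor parameters identified for vertical decomposition earlier in this section. From \corref{prisms:comb:dim} we have the combinatorial dimension $b = 2d$, and from \thmref{vdsize} (together with \corref{vd:all:p}, which ensures we can legitimately use the bound as a \emph{local} growth function for any subset) we have $u(m) \le C \cdot (4^d/d^{7/2})\, m^{2d}$ for an absolute constant $C$. Writing this in the $(\alpha,\beta)$-form required by the framework, we take $\beta = 2d$ and $\alpha = 2d + O(1)$ (absorbing the $C/d^{7/2}$ factor into the additive $O(1)$, since it is at most a constant for all $d\ge 1$).

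Next I would substitute these three values into the hypothesis of \lemref{weak:cutting}, which demands
\[
   \rho \;\ge\; \max\!\Bigl(\, 4r\bigl(b\ln 3 + (\alpha-\beta)\ln 2 + \ln(1/\BadProb)\bigr),\; 8r\beta\ln(4r\beta)\,\Bigr).
\]
With $b = 2d$, $\alpha-\beta = O(1)$, and $\beta = 2d$, the first term simplifies to $4r\bigl(2d\ln 3 + O(1) + \ln(1/\BadProb)\bigr) = O\bigl(r\bigl(d + \ln(1/\BadProb)\bigr)\bigr)$, while the second term becomes $8r \cdot 2d \cdot \ln(8rd) = O(rd\log(rd))$. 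The maximum of the two is dominated by $O\bigl(r\bigl(d\log(rd) + \ln(1/\BadProb)\bigr)\bigr)$, which is precisely the bound stated in the corollary for a suitable absolute constant $c$.

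There is no real obstacle here: the statement is a mechanical specialization of the general cutting lemma, and all of the substantive work has already been done in establishing the combinatorial dimension (trivial from \lemref{def:hyp}) and, more importantly, in proving the refined $O(4^d n^{2d}/d^{7/2})$ bound on the local growth function (\thmref{vdsize}), whose singly-exponential dependence on $d$ is exactly what keeps the $(\alpha-\beta)\ln 2$ contribution to the first term at $O(1)$ rather than $\Omega(d^2)$. If one had used the older $2^{O(d^2)} n^{2d-1}$ bound from \cite{cegs-sessr-91} instead, the first term would blow up to $\Omega(rd^2)$, which would defeat the point of the whole analysis; so the corollary really is a payoff of the local, cell-by-cell analysis carried out earlier.
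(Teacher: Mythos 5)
Your proof is correct and follows exactly the paper's route: substitute $b=2d$ (from \corref{prisms:comb:dim}), $\beta=2d$, and $\alpha = 2d+O(1)$ (from \thmref{vdsize}) into \lemref{weak:cutting} and simplify; the paper simply rounds $\alpha$ down to $2d$ outright, which changes nothing since only $\alpha-\beta=O(1)$ matters. Your closing remark about why the singly-exponential coefficient in \thmref{vdsize} is essential is a fair observation that the paper also emphasizes, though it is not part of the proof itself.
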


\subsection{The shatter dimension and VC-dimension of prisms}


So far we have only considered ``definable'' prisms that arise in the
vertical decompositions of samples from $H$. In this subsection we
extend them to arbitrary ``similarly looking'' vertical prisms, a
notion to be formally defined momentarily, use these prisms to define
a natural range space on $H$, and analyze the VC-dimension and the
primal shatter dimension of this space.

\paragraph{Parameterizing a prism.}

Let $\cell$ be a vertical prism that arises in the vertical
decomposition of a set of hyperplanes in $\Re^d$, such that it is
defined by exactly $2d$ of these hyperplanes; the case of fewer
defining hyperplanes is discussed later on. It easily follows from the
construction that $\cell$ can be represented as the intersection
region of $2d$ halfspaces of the form
\begin{align}
  \eqlab{para:prism}
  b^-_1
  &
    \quad\leq\quad x_1\quad  \leq\quad b^+_1 \nonumber \\
  a^-_{2,1}x_1 + b^-_2
  &
    \quad\leq\quad x_2 \quad \leq\quad%
    a^+_{2,1}x_1 + b^+_2 \nonumber \\
  a^-_{3,1}x_1 + a^-_{3,2}x_2 + b^-_3
  &
    \quad\leq\quad x_3\quad  \leq\quad
    a^+_{3,1}x_1 +  a^+_{3,2}x_2 + b^+_3 \\
  &
    \quad\qquad\vdots
  & \nonumber \\
  a^-_{d,1}x_1 + \cdots + a^-_{d,d-1}x_{d-1} + b^-_d
  &
    \quad\leq%
    \quad x_d \quad
    \leq\quad%
    a^+_{d,1}x_1 + \cdots + a^+_{d,d-1}x_{d-1} + b^+_d
    ,
    \nonumber
\end{align}
for suitable parameter $a^\pm_{i,j}$ and $b^\pm_j$. The construction
produces these inequalities in reverse order: the last pair of
inequalities define the floor and ceiling of $\cell$, and each
preceding pair of inequalities define the floor and ceiling in the
corresponding lower-dimensional projection of $\cell$. When the number
of hyperplanes defining $\cell$ is smaller than $2d$, some of these
inequalities are absent.

Let $\Sigma$ denote the set of all prisms of the form
\eqqref{para:prism} (including those prisms defined by fewer than $2d$
inequalities). We define the range space $(H,\Sigma)$ on $H$, so that,
for each $\cell\in \Sigma$, the range associated with $\cell$ is the
set of hyperplanes of $H$ that cross (the interior of) $\cell$.

Note that the (maximum) overall number of parameters that define a
prism $\cell\in \Sigma$ is $D=d(d+1)$.  We can therefore represent a
vertical prism as a point in $\Re^D$ (or, rather, in the portion of
$\Re^D$ consisting of points that represent nonempty prisms).  Let $h$
be a fixed hyperplane in $\Re^d$. Let $K_h$ denote the region in
$\Re^D$ consisting of all (valid) points that represent prisms that
are crossed by $h$. The boundary $S_h$ of $K_h$ is the locus of all
(points representing) prisms for which $h$ is a supporting hyperplane.

Consider a vertical prism $\cell\in\Sigma$. In general, $\cell$ has
$2^d$ vertices, each of which is obtained by choosing one inequality
(the left or the right) out of each of the $d$ pairs in
\eqqref{para:prism}, turning each chosen inequality into an equality,
and solving the resulting linear system. We \emph{label} each vertex
$q$ of $\cell$ by the sign sequence $(\eps_1,\ldots,\eps_d)$, where,
for each $i$, $\eps_i$ is $-1$ (resp., $+1$) if the left (resp.,
right) inequality of the $i$\th pair is the one involved in the
construction of $q$.

The following analysis fixes $h$, with its normal vector $v$ (actually
two oppositely directed vectors), and constructs $K_h$ in two
stages. It first constructs a partition of $\Re^D$ into cells, so
that, for a fixed cell, and for all the prisms $\cell$ whose
representing points lie in the cell, the two vertices of $\cell$
supported by hyperplanes parallel to $h$ have fixed labels. These
cells are independent of the coordinates that represent the free
coefficients $b_j^{\pm}$.  In the second stage, we take each of these
``cylindrical'' cells, and extract from it the portion that represents
prisms that are crossed by $h$ (these are prisms for which $h$ lies in
between the two hyperplanes that are parallel to $h$ and support
$\cell$ at the two vertices just mentioned). The union of these
regions is the desired $K_h$.

\paragraph{Partitioning by the labels of the supported vertices.}
As already said, in what follows, the hyperplane $h$, and its pair of
normal directions $v$ and $-v$, are fixed, and the prism $\cell$ is
regarded as a parameter.

Let $q$ be the vertex of $\cell$ with label $\lambda$. We express
algebraically the property that $q$ is the contact vertex of a
supporting hyperplane of $\cell$ with outward normal direction $v$, as
follows. Let $f_1,\ldots,f_d$ be the $d$ facets of $\cell$ incident to
$v$. The corresponding $d$ (not necessarily normalized) outward normal
directions $w_1,\ldots,w_d$ of the hyperplanes of \eqqref{para:prism}
that support these facets are given by
\begin{align*}
  w_1 & = \pm(1,0,\ldots,0) \\
  w_2 & = \pm(-a^\pm_{2,1}, 1, 0,\ldots,0) \\
  w_3 & = \pm(-a^\pm_{3,1}, -a^\pm_{3,2}, 1,0,\ldots,0) \\
      & \vdots \\
  w_d & = \pm(-a^\pm_{d,1}, \ldots, -a^\pm_{d,d-1}, 1) ,
\end{align*}
where, for each $j$, the sign of $w_j$ (and the corresponding set of
coefficients $a_{j,i}^{\pm}$) is the sign of the corresponding
component of $\lambda$.

%

The vertex $q$ is the contact vertex of a supporting hyperplane of
$\cell$ with outward normal direction $v$ if and only if $v$ is in the
convex cone generated by $w_1,\ldots,w_d$. The algebraic
characterization of the latter condition is that $v$
can be expressed as a \emph{nonnegative} linear combination
$v = \sum_{j=1}^d \beta_j w_j$ of $w_1,\ldots,w_d$ (i.e., such that
$\beta_j \geq 0$ for each $j$).
The triangular structure of the coefficients of the $w_j$'s makes it
easy to express the $\beta_j$'s in terms of the parameters
$a^\pm_{j,i}$. It is simplest to reason by using Cramer's rule for
solving the resulting system, noting that the determinant in the
denominator is $\pm 1$ (since the resulting matrix is triangular with
$\pm 1$ on the diagonal).  The condition that all the $\beta_j$'s be
nonnegative is therefore a system of $d$ algebraic inequalities in the
parameters $a^\pm_{i,j}$, each of which involves a polynomial (namely,
the determinant in the corresponding numerator) of degree strictly
smaller than $d$. Let $T_q(v)$ denote the ``feasible region'', which
is the solution of these inequalities in $\Re^{D'}$, for
$D' = d(d-1)/2$, which is the subspace of the parameter space $\Re^D$
that is spanned by the coordinates $a_{j,i}^\pm$ that participate in
the (in)equalities that define $q$.

We repeat the above analysis to each of the $2^d$ possible labels of
vertices $q$ of a prism, with the fixed direction $v$.  Each label
$\lambda$ uses a different set of $D'$ coordinates from among the
$a_{j,i}^\pm$.  Let $D_0 = 2D' = d(d-1)$ be the total number of
coordinates $a_{j,i}^\pm$.  Extend each region $T_q(v)$ in all the
complementary, unused, $D'$ coordinates into a suitable cylindrical
region within $\Re^{D_0}$.


\begin{remark}
    Note that so far we only handle bounded prisms, namely those
    defined by exactly $2d$ inequalities (as in \eqqref{para:prism}).
    It is easy to extend the analysis to prisms
    that use fewer inequalities: There are $3^{d}$ subsets of
    inequalities, each of which corresponds to a different representation of $\sigma$
    (for each $j$, we take from the $j$\th pair the left inequality, the right inequality, or neither
    of them), and we repeat the ongoing analysis to each of them, more
    or less verbatim, except that the dimension of the parametric
    space that represent prisms is smaller.
    Multiplying the bounds
    that we are going to get by $3^{d}$ will not affect the asymptotic
    nature of the analysis---see below.
\end{remark}

We claim that the regions $T_q(v)$, over all possible vertices $q$,
form a decomposition of $\Re^{D_0}$ into $2^d$ pairwise openly
disjoint regions. It is indeed a cover because each point
$\zeta\in \Re^{D_0}$ is the projection (in which the coordinates
$b_j^\pm$ are ignored) of infinitely many valid prisms $\sigma$; this
holds because there is always a solution to the system
\eqqref{para:prism} if we choose $b_j^-$ and $b_j^+$ far apart from
one another, for each $j$.  Any such prism $\sigma$ has at least one
vertex $q$ supported by a hyperplane with an outward normal direction
$v$, and therefore $\zeta$ must belong to $T_q(v)$.  Moreover, if
these prisms have more than one such vertex, then it is easily seen
that $\zeta$ cannot lie in the interior of any region $T_q(v)$.

Denote the resulting subdivision of $\Re^{D_0}$ as $\M_h^+$.

To complete this part of the construction, we apply the same
construction to the opposite normal vector $-v$, and obtain a second
subdivision of $\Re^{D_0}$, which we denote by $\M_h^-$. We then apply
both steps to each hyperplane $h\in H$, and obtain a collection of
$2n$ subdivisions $\{\M_h^+, \M_h^- \mid h\in H\}$ of $\Re^{D_0}$.

\paragraph{Partitioning by the positions of the supporting hyperplanes.}
So far the analysis has only focused on the parameters $a_{j,i}^\pm$,
and ignored the parameters $b_i^\pm$. This was sufficient in order to
classify the family of prisms into subsets depending on the discrete
nature of the two vertices supported by each input hyperplane. Now we
want to bring the parameters $b_i^\pm$ into the game, thereby further
classifying the prisms by distinguishing, for each input hyperplane
$h$, between those prisms crossed by $h$ and those that $h$
misses. This proceeds as follows.

Fix again a hyperplane $h\in H$, and its two directed normal vectors
$v^+$, $v^-=-v^+$, which give rise to the respective subdivisions
$\M_h^+$, $\M_h^-$. In the second stage of the construction, we
construct the desired region $K_h$ in $\Re^D$, by ``lifting'' (and
then overlaying) $\M_h^+$ and $\M_h^-$ to $\Re^D$, as follows.

Fix $v\in\{v^+,v^-\}$, and assume, without loss of generality, that
$v=v^+$.  We take each cell $T_q(v)$ of $\M_h^+$, consider the
cylinder $T_q^*(v)$ over $T_q(v)$ in the remaining $2d$ dimensions
that encode the coordinates $b_j^\pm$ (i.e.,
$T_q^*(v) = T_q(v)\times\Re^{2d}$), and partition $T_q^*(v)$ by the
surface $T_q^0(v)$, which is the locus of all (points encoding valid)
prisms for which $h$ passes through the vertex $q$.  Concretely,
recall that the label of $q$ determines the $d$ inequalities, one out
of each pair in \eqqref{para:prism}, which we turn into equalities,
and solve the resulting linear system (a trivial step, using backward
substitution), to obtain $q$ itself (in terms of the $D$ parameters
defining $\cell$).  The equation defining $T_q^0(v)$ is then obtained
by substituting this $q$ into the linear equation defining $h$.

Write $T_q^0(v)$ as the zero set $F_{q,v}=0$ of a suitable function
$F_{q,v}$, which, by the triangular structure of \eqqref{para:prism},
is a polynomial (in $D/2$ out of the $D$ coordinates) of degree
$d$. The surface $T_q^0(v)$ partitions $T_q^*(v)$ into two regions. In
one of them, $h$ passes above its parallel copy that supports the
corresponding prism and has outward normal vector $v$, and in the
other region $h$ passes below that hyperplane.

We apply a similar construction for $h$ and its opposite normal vector
$v^-=-v$, and then repeat the whole procedure for all $h\in H$ (and
for all labels of vertices $q$ of the prism).

\paragraph{The resulting $D$-dimensional arrangement.}
In the final step of the construction, we form the \emph{overlay} $\O$
of the subdivisions $\M_h^+$, $\M_h^-$, over all $h\in H$, and obtain
a subdivision of $\Re^{D_0}$ into regions, so that each region $Q$ has
a fixed sequence of labels,
$(\lambda_1^+,\lambda_1^-,\ldots,\lambda_n^+,\lambda_n^-)$, so that,
for each prism represented by a point in $Q$, and for each $j$, the
two hyperplanes that support the prism and are parallel to the $j$\th
hyperplane $h_j$ of $H$ touch it at the two vertices with labels
$\lambda_j^+$, $\lambda_j^-$.

For each region $Q$, we draw the $2n$ surfaces
$T_{\lambda_j^+}^0(v_j)$, $T_{\lambda_j^-}^0(-v_j)$, where $v_j$ is
the positive normal vector of $h_j$, and form their arrangement within
the corresponding cylinder $Q^*=Q\times R^{2d}$. We take the union of
these arrangements, over all the regions $Q$, and obtain a final
subdivision $\Xi$ of $\Re^D$ into cells, so that in each cell, the
prisms represented by its points have a fixed conflict list of
hyperplanes that cross them.

\paragraph{The complexity of the subdivision.}
We next bound the complexity of $\Xi$. Let us first bound the number
of vertices of $\Xi$. Each such vertex is formed by the intersection
of $D$ surfaces, each of which is either a vertical cylindrical
surface obtained by lifting a boundary surface between two regions
$T_q(v)$, $T_{q'}(v)$ (such a surface is obtained by turning an
inequality $\beta_j\ge 0$ in the definition of $T_q(v)$ into an
equality), or a surface of the form $T_q^0(v)$. That is, to obtain a
vertex of $\Xi$, we need to choose (with repetitions) $D$ hyperplanes
from $H$, say $h_1,h_2,\ldots,h_D$, choose for each of them the
corresponding outward normal direction $v_1,v_2,\ldots,v_D$ (there are
two choices for each $v_i$), associate with each of them a respective
label of a vertex $q_1,q_2,\ldots,q_D$ of the prism (again, possibly
with repetitions), and, for each $j$, choose either one of the $d$
surfaces defining $T^*_{q_j}(v_j)$ or the surface
$T_{q_j}^0(v_j)$. The number of choices is thus
\begin{equation}
    \eqlab{m:h:comp}%
    X = n^D\cdot 2^D \cdot (2^d)^D \cdot
    (d+1)^D = 2^{(d+1+\log(d+1))D} n^D .
\end{equation}

Finally, by \Bezout's theorem, each such $D$-tuple of surfaces
intersect in at most $d^D$ vertices (recalling that the degree of each
surface is at most $d$).  Altogether, the overall number of vertices
of $\Xi$ is at most
\begin{align*}
  X d^D = X 2^{D\log d}
  =  2^{(d+1+2\log(d+1))D} n^D
  = 2^{d(d+1)^2+2d(d+1)\log(d+1)} n^{d(d+1)} = 2^{O(d^3)}n^{d(d+1)} .
\end{align*}
It is easy to see that this bound on the number of vertices also
controls the number of cells of $\Xi$; that is, the number of cells is
also $2^{O(d^3)}n^{d(d+1)}$.
This is therefore an upper bound on the number of subsets of $H$ that
can be crossed by a prism. That is, the global growth function of our
range space $(H,\Sigma)$ satisfies
\begin{align*}
  g_\Sigma(n) = 2^{O(d^3)}n^{d(d+1)} .
\end{align*}
Hence,
the primal shatter dimension is then at most $d(d+1)$.

However, the only upper bound that we can get on the VC-dimension is
$O(d^3)$. This follows from the standard reasoning, where we denote by
$x$ the largest cardinality of a set of hyperplanes that can be
shattered by vertical prisms of $\Sigma$, and then use the inequality
\begin{align*}
2^x \leq 2^{O(d^3)} x^{d(d+1)} , \qquad\text{or}\qquad x = O(d^3) +
d(d+1)\log x = O(d^3) .
\end{align*}

\medskip We remark that the factor $2^{O(d^3)}$ that appears in the
bound for $g_\Sigma(n)$ arises from the factor $\pth{2^d}^D$ in
\eqqref{m:h:comp}. This raises a challenging open question, of whether
all $D$-tuples of vertices of a prism can be the contact vertices of
$D$ given hyperplanes. This can also be stated as a question of
whether the complexity of the overlay $\O$ can really be so high. At
the moment, we do not have any solid conjecture to offer.

\paragraph{Lower bound.}

We next show that the VC-dimension of hyperplanes and vertical prisms,
that is, of the range space $(H,\Sigma)$, is at least $1+d(d+1)/2$.
That is, we construct a set $H$ of $1+d(d+1)/2$ hyperplanes in
$\Re^d$, and a set $\C\subset\Sigma$ of $2^{1+d(d+1)/2}$ vertical
prisms in $\Re^d$, each having a representation as in
\eqqref{para:prism}, such that the prisms in $\C$ shatter $H$. That
is, for every subset $S\subseteq H$ there is a prism $\tau(S)\in \C$
for which $S$ is the subset of the hyperplanes of $H$ that intersect
(the interior of) $\tau(S)$.

All the prisms in $\C$ will be sufficiently small perturbations of the
unit cube $U = [0,1]^d$, and each hyperplane in $H$ will be a
supporting hyperplane of $U$, touching it at a unique corresponding
vertex. To simplify and unify the construction, we will only specify
the vertex $v$ at which the hyperplane supports $U$, and the
hyperplane itself, denoted as $h_v$, will be the hyperplane that
passes through $v$ and is perpendicular to the segment $vo$,
connecting $v$ with the center $o = (1/2,1/2,\ldots,1/2)$ of $U$.

\paragraph{Constructing $H$.}
We begin with the construction of $H$, or rather of the set $P$ of
vertices of $U$ that are supported by the hyperplanes in $H$, as just
defined. Let $e_i$ denote the unit vector in the $x_i$-direction, for
$i=1,\ldots,d$.  The set $P$ is the union of disjoint sets
$P_1,\ldots,P_d$, where $P_1 = \{{\bf 0},e_1\}$, where ${\bf 0}$ is
the origin, and for $i\ge 2$,
$P_i = \{e_i\} \cup \{e_j + e_i \mid j < i \}$, that is, $P_i$ consists of $e_i$
and $i-1$ vectors, which are sums of pairs of the form $e_j + e_i$.
Intuitively, except for $P_1$, which is somewhat special, $P_i$ lies in the
$(i-1)$-dimensional flat $x_i=1$,
$x_{i+1} = x_{i+2} = \cdots = x_d = 0$, and consists of the (suitably
shifted) origin and of the $i-1$ unit vectors within that subspace.
Since $|P_1|=2$ and $|P_i| = i$ for $i\ge 2$, it follows that
$|P| = 1+d(d+1)/2$.  Note that $P$ consists exactly of all the
vertices of $U$ with at most two nonzero coordinates. We denote by
$H_i$ the subset of $H$ that corresponds to the vertices in $P_i$, for
$i=1,\ldots,d$.

\paragraph{Constructing $\C$.}
We define a sequence of perturbation parameters
$\eps_1,\eps_2,\ldots,\eps_d$, such that
$\eps_i = \sqrt{2(d+1)} \eps_{i-1}$, for $i=2,3,\ldots,d$, and
$\eps_1$ is chosen sufficiently small so as to guarantee that $\eps_d$
is also sufficiently small, say smaller than some prescribed
$\eps_0\ll 1$. By definition, we have
$\eps_d < \pth{2(d+1)}^{d/2}\eps_1$, so we choose
$\eps_1 = \frac{\eps_0}{(2(d+1))^{d/2}}$, to obtain the desired
property.

Now let $S$ be an arbitrary subset of $H$. We will construct a
vertical prism $\tau=\tau(S)$, such that
$S = \{h\in H \mid h\cap{\rm int}(\tau)\ne\emptyset\}$. The
construction of $\tau$ proceeds inductively on the dimension, and
produces a sequence of prisms $\tau_1,\tau_2,\ldots,\tau_d=\tau$, such
that, for each $i$, $\tau_i$ is of dimension $i$, in the sense that it
is contained in the flat
$\{ x\in\Re^d \mid x_j = 0 \text{ for all } i < j \leq d \}$.
Specifically, $\tau_i$ is obtained from $\tau_{i-1}$ by
\begin{align*}
\tau_i = \Bigl\{(x_1,x_2,\ldots,x_i,0,\ldots,0) \mid
(x_1,\ldots,x_{i-1},0,\ldots,0) \in\tau_{i-1}, \text{ and } 0\le x_i
\le \sum_{j<i} a^+_{i,j}x_j + b^+_i \Bigr\} ,
\end{align*}
for suitable coefficients $a^+_{i,j}$ and $b^+_i$. Clearly, the final
prism $\tau_i$ has (a special form of) the representation given in
\eqqref{para:prism}.

Let $S_j = S \cap H_j$, for $j=1,\ldots,d$. We will enforce, by
induction, that for each $i$, the prism $\tau_i$ satisfies the
following two conditions for every $j\le i$.
\begin{enumerate}
    \item \itemlab{co1} Every vertex $p\in P_j$ is $\eps_j$-close to
    some vertex of $\tau_i$.
    \item \itemlab{co2}
    $S_j = \{h\in H_j \mid h\cap{\rm int}(\tau_i)\ne\emptyset\}$,
    where ${\rm int}(\tau_i)$ denotes the relative interior of the
    $i$-dimensional prism $\tau_i$.
\end{enumerate}
We begin the construction of $\tau$ by constructing $\tau_1$, which is
simply $\pm\eps_1\le x_1\le 1\pm\eps_1$, where the signs are chosen so
that Condition \itemref{co2} holds for $j=i=1$ (Condition
\itemref{co1} obviously holds).

Suppose now that $i>1$ and that we have already constructed
$\tau_{i-1}$, which satisfies Conditions 1 and 2.  We construct
$\tau_{i}$ as follows.

%

For each $q\in P_i$ let $q'$ be
the vertex obtained from $q$ by replacing the $1$ in its $i$\th
coordinate by $0$.  Clearly, $q'\in P_j$ for some $j < i$. Since
$\tau_{i-1}$ satisfies Condition \itemref{co1}, there exists a
(unique) vertex $w$ of $\tau_{i-1}$ that lies at distance at most
$\eps_{j}$ from $q'$. Let $z$ denote the point at which the
$x_i$-parallel line through $w$ meets $h_q$. Write $z$ as
$w+(1+t)e_i$, where, as above, $e_i$ is the unit vector in the
$x_i$-direction and $t$ is a real parameter, and note that the
coordinates $x_{i+1},\ldots,x_d$ of $z$ are all $0$. Since $z\in h_q$,
we have $(z-q)\cdot (q-o) = 0$.  That is, we have
\begin{align*}
  (w+(1+t)e_i - q'-e_i)\cdot (q-o) & = ((w-q') + te_i)\cdot (q-o) = 0 , \quad\text{or} \\
  te_i\cdot (q-o) & = (q'-w) \cdot (q-o) .
\end{align*}
We have $q-o = (\pm 1/2,\pm 1/2,\ldots,\pm 1/2)$, so
$te_i\cdot (q-o) = \pm t/2$. We thus get
\begin{align*}
|t|/2 = \left| (q'-w) \cdot (q-o)\right| \leq \|q'-w\|\cdot \|q-o\| \le
\frac{\sqrt{d}}{2} \eps_{i-1} .
\end{align*}
That is, $|t| \leq \sqrt{d}\eps_{i-1}$. It thus follows that
\begin{align*}
\|z-q\| = \|w+(1+t)e_i - q'-e_i\| \leq \|w-q'\| + |t| \leq \pth{1 +
    \sqrt{d} }\eps_{i-1} .
\end{align*}
We now replace $z$ by $z_q:=z\pm se_i$, where
\begin{align*}
s = \pth{ \sqrt{2(d+1)}-(1+\sqrt{d})}\eps_{i-1} ;
\end{align*}
it is easily checked that $s>0$ for $d\ge 2$. The sign of $se_i$ is
positive (resp., negative) if $h_q\in S_i$ (resp., $h_q\not\in S_i$).
Applying this procedure to each vertex in $P_i$, we get $i$ points
$z_q$, for $q\in P_i$, such that we have
\begin{align*}
  \|z_q-q\| & \leq \pth{1 + \sqrt{d} }\eps_{i-1} + s , \quad\text{or} \\
  \|z_q-q\| & \leq \sqrt{2(d+1)} \eps_{i-1} = \eps_i ,
\end{align*}
for each $q$.

Let $\pi_i$ denote the hyperplane that passes through the $i$ points
$z_q$, for $q\in P_i$, and is parallel to all coordinates
$x_{i+1},\ldots,x_d$. Write its equation as
$x_i = \sum_{j<i} a^+_{i,j}x_j + b^+_i$ (it is easily checked that
$\pi_i$ is not $x_i$-parallel, assuming that the $\eps_j$'s are
sufficiently small).  We then define $\tau_i$ to be the prism
\begin{align*}
\tau_i = \Bigl\{(x_1,x_2,\ldots,x_i,0,\ldots,0) \mid
(x_1,\ldots,x_{i-1},0,\ldots,0) \in\tau_{i-1}, \text{ and } 0\le x_i
\le \sum_{j<i} a^+_{i,j}x_j + b^+_i \Bigr\} ,
\end{align*}
It is easy to verify that Conditions \itemref{co1} and \itemref{co2}
hold for $\tau_i$, assuming that the $\eps_j$'s are sufficiently
small. In conclusion, we obtain:

\begin{theorem}
    \thmlab{v:c:dim:prisms}%
    The VC-dimension of the range space $(H,\Sigma)$ of hyperplanes
    and vertical prisms in $\Re^d$ is at least $1+d(d+1)/2$ and at
    most $O(d^3)$.
\end{theorem}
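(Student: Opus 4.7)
The plan is to prove the two bounds separately, parametrizing each prism $\sigma \in \Sigma$ by a point in $\Re^D$ with $D = d(d+1)$ via the coefficients $a^\pm_{i,j}$ and $b^\pm_i$ appearing in \eqqref{para:prism}. For each $h \in H$, I would analyze the region $K_h \subseteq \Re^D$ consisting of parameter points representing prisms crossed by $h$, and bound the number of cells in the arrangement of their boundaries: each such cell corresponds to a fixed conflict list, so this controls the global growth function $g_\Sigma(n)$.

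For the upper bound, I would decompose the analysis of $K_h$ into two stages. First, within the subspace of $\Re^D$ indexed by the linear coefficients $a^\pm_{i,j}$, partition according to which of the $2^d$ vertices of the prism is the contact point of the supporting hyperplane parallel to $h$ (for each of the two opposite normal directions of $h$). The combinatorial type of this contact vertex is determined by requiring $v$ (the normal of $h$) to lie in the cone spanned by the incident facet normals, which by Cramer's rule and the triangular structure of \eqqref{para:prism} becomes a system of polynomial inequalities of degree at most $d$. Second, within each such cylindrical region, separate the prisms crossed by $h$ from those missed by $h$ using an additional surface of the form $F_{q,v}=0$, again of degree at most $d$ in the $b^\pm_i$ parameters. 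Taking the overlay of these arrangements over all $h \in H$ and counting vertices by choosing $D$ defining surfaces and applying \Bezout (each intersection contributes at most $d^D$ points) yields $X \cdot d^D = 2^{O(d^3)} n^{d(d+1)}$ vertices, the extra exponential in $d$ arising from the factors $2^D$ (two normals per hyperplane), $(2^d)^D$ (vertex label choices), and $(d+1)^D$ (choice of defining surface per label). This gives $g_\Sigma(n) = 2^{O(d^3)} n^{d(d+1)}$, and solving $2^x \le 2^{O(d^3)} x^{d(d+1)}$ via the standard inequality $x/\ln x \le u \Rightarrow x \le 2u \ln u$ yields VC-dimension at most $O(d^3)$.

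For the lower bound, I would exhibit an explicitly shattered family. Let $H$ consist of the $1 + d(d+1)/2$ supporting hyperplanes of the unit cube $U = [0,1]^d$ at the vertices $P = \bigcup_i P_i$ having at most two nonzero coordinates, where each $h_v$ is perpendicular to the segment from $v$ to the center of $U$. For an arbitrary $S \subseteq H$, I would inductively construct a prism $\tau_i$ of the form \eqqref{para:prism} which is a small perturbation of $U \cap \{x_{i+1} = \cdots = x_d = 0\}$, maintaining that each vertex of $P_j$ for $j \le i$ is $\eps_j$-close to a vertex of $\tau_i$, and that the prism is crossed by precisely those hyperplanes of $\bigcup_{j\le i} H_j$ that belong to $S$. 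At step $i$, the ceiling of $\tau_i$ is the unique hyperplane passing through the $i$ perturbed images $z_q$ of the vertices $q \in P_i$, each displaced from $q$ inward or outward by an amount calibrated so that $h_q$ intersects the interior of $\tau_i$ iff $h_q \in S$. Choosing the perturbation scales as a geometric progression $\eps_i = \sqrt{2(d+1)}\,\eps_{i-1}$ guarantees that the induction carries through, with the total displacement remaining within the allowed tolerance.

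The main obstacle is the upper bound: the factor $2^{O(d^3)}$ appears precisely because, when bounding vertices of the overlay, we must independently assign a vertex label to each of the $D$ participating hyperplanes, giving the $(2^d)^D$ contribution. Collapsing this factor would require understanding which combinations of $D$ vertex-hyperplane incidences can actually be realized simultaneously on a single prism — a combinatorial realizability question that does not seem to admit an easy bound, and which leaves the true order of the VC-dimension (between $\Theta(d^2)$ and $\Theta(d^3)$) open.
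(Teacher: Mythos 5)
Your proposal matches the paper's proof essentially step by step: the same parameterization of prisms by points in $\Re^D$ with $D=d(d+1)$, the same two-stage construction of $K_h$ (partitioning first by contact-vertex labels in the $a^{\pm}_{i,j}$ coordinates and then by the sign of $F_{q,v}$ in the $b^{\pm}_i$ coordinates), the same \Bezout-based vertex count yielding $g_\Sigma(n) = 2^{O(d^3)} n^{d(d+1)}$, and the same inductive cube-perturbation construction (with $P_i$ and $\eps_i = \sqrt{2(d+1)}\,\eps_{i-1}$) for the lower bound. You have even identified the same bottleneck---the $(2^d)^D$ factor from independently assigning a vertex label to each of the $D$ surfaces---and the same open question that the paper flags as to whether these incidences are simultaneously realizable.
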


An interesting open question is to tighten this gap. We conjecture
that the VC-dimension is quadratic, or nearly quadratic in $d$.




\section{Point location in an arrangement of hyperplanes}%
\seclab{sec:meiser}

Let $H$ be a set of $n$ hyperplanes in $\Re^d$. We wish to preprocess
$H$ into a data structure for point location in $\Arr(H)$. There are
several variants that depend on how we want the output to a query to be.
We  consider here two variants. In the first one we return, for a query point $q$,
the (possibly lower-dimensional) cell of $\Arr(H)$ that contains it, which we
represent by the sign pattern of $q$ with respect to the entire set $H$.
In the second variant, referred to as \emph{vertical ray shooting}, we return
the first hyperplane that is hit by the upward-directed ray emanating from $q$
(in the positive $x_d$-direction), including the case where $q$ lies on one or several
hyperplanes of $H$, in which case we return one of them.
The goal is to make the query as efficient as possible, in its dependence on both
$n$ and $d$, at the cost of large (but not too large) storage and preprocessing time.

We begin in \secref{meiser-bvt} by reviewing Meiser's algorithm~\cite{m-plah-93} (see also the introduction) in
some detail, and by providing a rigorous analysis of the tradeoff
between the query time and storage; the analysis of this tradeoff, as presented in
\cite{m-plah-93}, is sketchy and suffers from several technical difficulties.
We also discuss the improvement of Meiser's algorithm, due to Liu~\cite{l-nplah-04},
which reduces and tightens the dependence of the storage size on $n$ (but not on $d$---this
dependence is actually worse than what we obtain here). In \secref{meiser-bvt} we
only consider the variant where we return the (sign pattern of the)  cell of
$\Arr(H)$ containing the query point.

We then improve the query time, in \secref{meiser-vd}, by replacing bottom-vertex triangulation
by vertical decomposition, using the Clarkson--Shor random sampling analysis
(which is based on the combinatorial dimension), as presented in \secref{sect2}.
This saves about a factor of $d$ in the query time, and constitutes, in our opinion, one of the main
contributions of this paper.

A second, smaller improvement, by a $\log d$ factor, is obtained by using the optimistic
sampling technique, presented in \secref{sec:optim}, which makes do with a slightly
smaller sample size. This leads to a slightly faster processing of a single recursive
step (on the size of $H$) of the query. This refined approach is presented in \secref{s:opt}.

Finally, in \secref{sec:low}, we consider the case of low-complexity hyperplanes, and present a simpler variant of our data structures for this case. It has comparable bounds on the storage and preprocessing, but the query is faster by roughly a factor of $d$.

\subsection{Meiser's algorithm, with an enhancement}%
\seclab{meiser-bvt}


Essentially, the only known algorithm for point location, in which the query time is \emph{polynomial}
in $d$ and in $\log n$, is due to Meiser~\cite{m-plah-93}
(it has been improved in a follow-up study by Liu~\cite{l-nplah-04}).



Meiser's preprocessing stage constructs the following data structure. It draws a
$\rho$-sample $R\subset H$, of actual size at most $\rho$, where the specific choices of $\rho$ are
discussed later, and constructs the bottom-vertex triangulation $\BT(R)$, as
described in \secref{b:v:t:construction}. By its recursive nature, this construction
also produces the bottom-vertex triangulation within every flat, of any dimension $1\le j\le d-1$,
that is the intersection of some $d-j$ hyperplanes of $R$.

The construction  identifies each cell
of $\Arr(R)$, of any dimension, by its sign pattern with respect to $R$.  We store the sign patterns,
of all the cells of all dimensions, in a ternary trie, denoted by $T=T_R$, whose leaves correspond
to the cells. We represent each cell $C$ by a structure containing, among other features, its bottom
vertex $w_C$, which is represented by the $d$ hyperplanes that define it (or, in case of degeneracies,
the $d$ hyperplanes of smallest indices that define it).
Each leaf $\xi$ of $T$ stores a pointer to the structure of its cell $C_\xi$.


Let $C= C_\xi\in \Arr(R)$ be the cell associated with a leaf $\xi$ of the trie $T$. We store
at $\xi$ (or rather at the structure associated with $C_\xi$, which is accessible from $\xi$) a
secondary tree $Q_C$ that ``encodes'' the bottom vertex triangulation $\BT(C)$ of the cell $C$,
as follows. The root of $Q_C$ corresponds to the cell $C$ itself, and each internal node of $Q_C$
corresponds to some (lower dimensional) cell on (i.e., a face of) $\bd C$.
(There could be several nodes that correspond to the same subcell in $Q_C$.)
A node $v$ in $Q_C$, which corresponds to some $j$-dimensional face $C'_v$ of $C$,
has a child for each $(j-1)$-dimensional face on $\bd C'_v$ that does not contain $w_{C'_v}$.
We index each child of $v$ by the hyperplane that supports the corresponding $(j-1)$-dimensional
face of $C'_v$ (but does not support $C'_v$ itself). In this manner, each leaf $v$ of $Q_C$ is
associated with a vertex of $\bd C$. We store at $v$ the simplex of $\BT(C)$, which is spanned
by the lowest vertices $w_{C'_u}$ of the cells associated with the nodes $u$ on the path to $v$,
including the vertex associated with $v$. It is easily verified that all 
simplices of $\BT(C)$ (whose dimension is equal to that of $C$)
are obtained in this manner (and only these simplices).
Repeating this construction for every cell $C$ of $\Arr(R)$, we obtain all the simplices of $\BT(R)$.

The algorithm then constructs the conflict list $\KillSet{\cell}$ for each simplex $\cell\in\BT(R)$,
in brute force, by checking for each hyperplane $h\in H$ whether it separates the vertices of $\cell$.
The actual procedure that implements this step is presented later.

We process each of these conflict lists recursively. That is, for each simplex $\cell$, we draw
a $\rho$-sample $R'$ from $\KillSet{\cell}$, compute $\BT(R')$, and construct the corresponding
trie $T_{R'}$ and the trees $Q_C$ for each cell $C\in \Arr(R')$. We store $R'$ and the data structures
associated with it, at the leaf $v_\cell$ representing $\cell$ in the tree $Q_C$ of the cell
$C \in \Arr(R)$ containing $\cell$. (We do not need to keep the conflict lists themselves in
the final data structure, because the queries do not access them.)

At each recursive step, when we construct, for each simplex $\cell\in\BT(R)$, the conflict list
$\KillSet{\cell}$, we also get the (fixed) sign pattern of all the points in $\cell$ with respect
to all the hyperplanes that do not cross $\cell$, and we temporarily store this subpattern at $v_\cell$.

The recursion bottoms out when the conflict list of the current simplex $\cell$ is of size smaller
than $\rho$. We call such a simplex $\cell$ a {\em leaf-simplex} (of the entire structure).
A leaf-simplex does not have a recursive structure associated with it. We only construct the
arrangement $\Arr(\KillSet{\cell})$ of the conflict list $\KillSet{\cell}$ of $\cell$ (whose
size is at most $\rho$), and store at $v_\cell$ a trie structure $T_{\KillSet{\cell}}$ over
the cells of $\Arr(\KillSet{\cell})$.  We attach to each cell $C$ (of any dimension) of
$\Arr(\KillSet{\cell})$ its fixed sign pattern with respect to $\KillSet{\cell}$.

Finally, we extend the sign pattern of each cell $C$ at the bottom of recursion
to a sign pattern with respect to all hyperplanes in $H$, as follows. We iterate over all
simplices along the path to the corresponding leaf-simplex $\cell$ in our overall hierarchical
structure, retrieve the subpatterns stored at them, and merge them into a single sign pattern
with respect to all of $H$. It is easy to verify that each hyperplane $h\in H$ appears
exactly once in this collection of subpatterns (for a fixed ``leaf-cell'' $C$)---it either happens
at the (unique) node at which $h$ has stopped belonging to the corresponding conflict list (it could
be one of the hyperplanes that define, and in particular touch, the present simplex), or else $h$ is in $\KillSet{\cell}$.
We store the merged sign pattern at the leaf of $T_{\KillSet{\cell}}$ corresponding to $C$. Alternatively, depending on the application
at hand, we might only store a link to some data structure (or just data) associated
with $C$. This storage is permanent, and is part of the
output structure, whereas the sign patterns of intermediate simplices
are discarded after their recursive processing terminates.

We note that an alternative approach, which saves on preprocessing, is to keep the intermediate
partial sign patterns at their respective nodes, and concatenate the patterns along the search
path of a query point $q$, during the query processing step, to obtain the complete sign pattern
at $q$. In this approach the resulting sign pattern is given as a list rather than a vector, which is provided by the previous approach.

\subsubsection{Answering a query}

A point-location query with a point $q$ is processed as follows. We
retrieve the initial random sample
$R$, and explicitly construct the simplex $\cell=\cell_q$ of $\BT(R)$
that contains $q$.  To do this, we first compute the sign pattern of
the cell $C_q$ of $\Arr(R)$ that contains $q$, namely, the side of each hyperplane
$h\in H$ that contains $q$, including the case where $q$ lies on $h$.
(If $q$ does lie on one or several hyperplanes of $R$ then $C_q$ is lower-dimensional.)
It is straightforward to do this in $O(d\rho)$ time. We then search the  trie $T_R$
with this sign pattern, and identify $C_q$ and  its bottom vertex $w=w_{C_q}$.


\begin{figure}[h]
  \centering
  \includegraphics{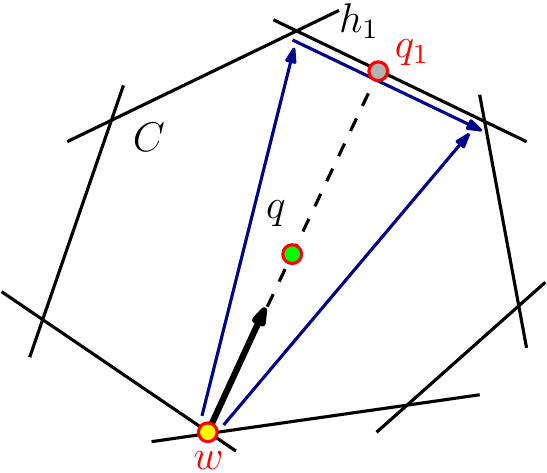}%
  \caption{The recursive construction of the simplicial cell containing $q$.}
  \figlab{BVT}
\end{figure}

We next perform ray shooting along the line $wq$, from $q$ in the direction away
from $w$, and find the first hyperplane $h_1$ of $R$ hit by this ray.
In case $C_q$ is lower-dimensional, the hyperplane $h_1$ is the first \emph{new} hyperplane
that the ray hits, namely a hyperplane not containing $q$. We assume, for simplicity of
presentation, that there are no ties, that is, there is a unique (new) hyperplanes first
hit by $\vec{wq}$. This can be achieved, for example, by perturbing $q$ slightly within $C_q$.
Let $q_1$ denote the point at which $h_1$ is hit. Then we compute the sign pattern of $q_1$
with respect to $R$, and search $T_R$ again with this pattern, to identify the cell $C_{q_1}$
containing $q_1$, and its bottom vertex $w_{C_{q_1}}$. We keep collecting the vertices of $\cell$
in this manner, one vertex per dimension. After $d$ recursive steps (on the dimension),
we obtain all the vertices of $\cell$ (and the cells of progressively decreasing dimension
of which they are bottom vertices), in overall time $O(d\cdot d \rho) = O(d^2 \rho)$.
See~\figref{BVT}.

Note that the sequence of ray shootings, as just described, identifies the path in $Q_{C_q}$
that leads to the leaf $v_\cell$ associated with $\cell$ in $Q_{C_q}$. Indeed, the shootings
identify the sequence of faces of $C_q$, of progressively decreasing dimensions
$d-1$, $d-2$, $\ldots$, $0$, which support the corresponding faces of $\sigma$, and
this sequence of faces identifies the path in $Q_{C_q}$ that we need to follow---at each
visited node we go to the child associated with the new hyperplane that supports the
next face in the sequence. At $v_\cell$ we find the recursive structure associated with
$\cell$ and continue the query recursively in this structure. The recursion terminates when
we reach a leaf-simplex $\cell$. We then locate the cell of $\Arr(\KillSet{\cell})$
containing $q$, from its sign pattern with respect to $\KillSet{\cell}$, which we compute
in brute force in $O(d\rho)$ time, and return the overall sign pattern that is stored at this leaf.

\subsubsection{The cost of a query}
Each step of the main recursion (on the size of the input) takes $O(d^2\rho)$ time.
This bound has already been noted for the cost of the ray shootings, and it also dominates
the cost of the search for the leaf representing the simplex of $\BT(R)$ containing $q$.
Thus the overall cost of a query is $O(d^2\rho)$ times the number of recursive steps
in the main hierarchy.

\begin{figure}[t]

    \centerline{%
       \begin{tabular}{|c|c|c|c|}
         \hline
         Parameter / method & Value $\Bigl.$ & Sample size & Using \\
         \hline\hline%
         VC-dim.  &
           \MCY{0.15}{$O(d^2 \log d)$\newline
           \lemref{v:c:dim}%
           }
         &
           \MCY{0.2}{%
           $O( r d^2 \log r \log d)$%
           }
         &
           \MCY{0.15}{%
           $\eps$-net theorem\newline
           (\thmref{epsilon:net})%
           }
         \\%
         \hline%
         Shatter dim.  &
           \MCY{0.15}{%
           \smallskip%
           $O(d^2)$\newline
           \lemref{h:s:growth}%
           }
         &
           \MCY{0.15}{%
           $O( r d^2 \log( rd ) )$%
           }
         &
           \remref{epsilon:net:s}%
         \\
         \hline
         \hline
         Combinatorial dim.
         &%
           \MCY{0.15}{%
           $b=d(d+3)/2$
           \newline%
           \remref{simp:comb:dim}%
           \newline%
           $\alpha = O(1)$, $\beta=d$
           \newline%
           \lemref{simplices:grow}%
           }%
         & $O(rd^2)$ & \lemref{weak:cutting} \\
         \hline
         Optimistic sampling
         & As above & $\Bigl.O(rd^2)$ & \corref{optimistic} \\
         \hline
       \end{tabular}%
    }
    \caption{The different sizes of the sample $R$, needed to ensure that the corresponding $\BVT(R)$
       in $\Re^d$ is, with high probability, a $(1/r)$-cutting, according to the different tools at our disposal.
       The probability depends on the (absolute) constant parameters in the bounds.}
    \figlab{simplices:cutting}

\end{figure}

Ignoring for the moment the issue of the storage and preprocessing costs (namely, their
dependence on $\rho$), the main issue is how large should $\rho$ be to make the query efficient.
On one hand, we would like
to make $\rho$ as small as possible, so that the cost $O(d^2\rho)$ of each recursive step is small,
but we cannot take $\rho$ to be too small, for then we lose control over the recursion depth,
as the problem size might then not decrease sufficiently, or not at all.
The different sample sizes that are needed to ensure (with high probability) that the resulting
bottom-vertex triangulation is a $(1/r)$-cutting for $\Arr(H)$, according to the
different sampling theories at our disposal, as reviewed and developed in \secref{sect2},
are summarized in the table in \figref{simplices:cutting}.
As the table shows, the various techniques for ensuring the sample quality
do not really differ that much from one another. Still,
the smallest sample size that ensures the $(1/r)$-cutting property is $\rho=O(rd^2)$,
using \lemref{weak:cutting} (or its optimistic sampling variant \corref{optimistic}
which, in the present context, does not make  a difference).
If the sample does not yield a $(1/r)$-cutting, we simply take another sample
and repeat the whole step. In an expected small number of steps we
get a sample that satisfies this property.

Assuming that the $(1/r)$-cutting property does indeed hold throughout the hierarchical structure,
the number of recursive steps used by the query is $\lceil \log_r (n/\rho) \rceil = \lceil \frac{\log (n/\rho)}{\log r} \rceil$, making the query cost
\begin{equation} \eqlab{q:bvt}
  Q(n) = O\pth{\frac{d^2\rho \log (n/\rho)}{\log r}}
  = O\pth{\frac{d^{4}r  \log n}{\log r}}
  = O\pth{ d^{4}  \log n},
\end{equation}
if we take $r=2$ (the best choice for making the query fast).

\begin{remark}
    In the original formulation of the algorithm, Meiser's analysis uses the
    VC-dimension, which is $O(d^2\log d)$, instead of the combinatorial dimension,
    and thereby incurs an (unnecessary) extra factor of $\log d$ in the query cost.
    Another issue is that using the $\eps$-net theorem (\thmref{epsilon:net}),
    instead of the combinatorial
    dimension approach, also requires an extra $\log r$ factor (this however does not
    arise if we use the primal shatter dimension instead). This is not an issue if we
    choose $r=2$, as we just did, but it becomes an issue for larger
    values of $r$, which are needed when we want to reduce the storage required
    by the algorithm; see below for details.
\end{remark}

\subsubsection{Storage}

Consider a sample $R$ and the storage required for the trie $T_R$ and the trees
$Q_C$ for each cell $C$ in $\Arr(R)$, but excluding the storage required for the recursive (or leaf)
substructures associated with the simplices of $\BT(R)$.
The space taken by $T_R$ is
$O\pth{\rho K_d(\rho)}$, where $K_d(\rho)$ is the maximum number of cells, of all
dimensions, in an arrangement of $\rho$ hyperplanes in $\Re^d$, which is
 $O(\rho^d)$ by \lemref{num:cells}. That is $T_R$ requires $O(\rho^{d+1})$ storage.
The space taken by the trees $Q_C$ is $O\pth{dS_d(\rho)}$, where
$S_d(\rho)$ is the maximum number of simplices, of all
dimensions, in the bottom vertex triangulation of  an arrangement of $\rho$ hyperplanes in $\Re^d$, which is
 $O(\rho^d)$ by \lemref{simplices:grow}.
It follows that the total space required for the data structures associated with $R$ is
$O(\rho^{d+1})$.

In addition, we do construct the conflict list $\KillSet{\cell}$ of each simplex $\cell$ of $\BT(R)$,
and the sign pattern of $\cell$ with respect to the hyperplanes of $H\setminus \KillSet{\cell}$,
but we keep them only temporarily, while the recursive processing of $\cell$ takes place.
Once this processing  is over, we discard $\clX{\cell}$, as the queries that reach $\cell$ make no use of
it---they only access the random sample chosen from $\clX{\cell}$ and its associated
recursive data structure, which we do store at the leaf corresponding to $\cell$.
The sign pattern of $\cell$ with respect to the hyperplanes of $H\setminus \KillSet{\cell}$
is integrated in the sign patterns of the cells of the arrangements of  the leaf-simplices generated by the
recursive recursive processing of $\cell$.

If $\cell$ is a leaf-simplex (of the whole structure) we store with it the trie $T_{\KillSet{\cell}}$ which we use to identify the cells in
the arrangement $\Arr(\KillSet{\cell})$.
At each leaf of $T_{\KillSet{\cell}}$
 we store
the sign pattern, with
respect to $H$, of the cell of $\Arr(H)$ that contains all the query points
in $\cell$ that reach this leaf.
(As discussed earlier, this storage can be saved in applications that do not need explicit access
to these sign patterns.)

Let $S_d(n)$ denote the maximum overall storage used by the algorithm
for an instance involving $n$ hyperplanes in $d$ dimensions, ignoring the storage used for
the sign patterns of the cells of the arrangements of the conflict lists of leaf-simplices.
We take, as above, $\rho = crd^2$, for some suitable absolute constant $c$, and redo the sampling until
the size of each recursive subproblem is at most $n/r$, and we get the following recurrence for $S_d(n)$.

\begin{align*}
  S_d(n) &\leq
  \begin{cases}
    a \rho^{d+1}  + e \rho^d S_d(n/r) , & \text{for $n > \rho $} \\
    a \rho^{d+1} & \text{for $n \le \rho$}  ,
  \end{cases}
\end{align*}
where $a$ is a suitable absolute constant (independent of $d$). Unfolding the recursion, we get
$$
S_d(n) \le a\rho^{d+1} \pth{ 1 + e\rho^d + \cdots + e^{j} \rho^{jd} } \le a'\rho^{d+1}(e\rho^d)^j,
$$
where $a'$ is another absolute constant, and
$j = \lceil \log_r (n/\rho) \rceil = \lceil \log (n/\rho)/\log r  \rceil$ is the depth of the recurrence.
Substituting this value of $j$ (and neglecting the rounding), we get that
\begin{equation} \label{storage-bdt}
S_d(n) = O\pth{\rho^{d+1} (e\rho^d)^{(\log (n/\rho))/\log r}   } = O\pth{ \rho^{d+1} (n/\rho)^{(d\log \rho+ \log e)/\log r} } =
O\pth{n^{(d\log \rho+ \log e)/\log r} } .
\end{equation}
In particular, for $r=2$ (so that $\rho=2cd^2$), we get the storage bound $O\pth{ n^{2d \log d + O(d)} }$.
The different bounds for storage and query time achievable according to this scheme are
depicted in the table in \figref{meiser:b:v:t}.
Note that, to get the near-ultimate bound $O(n^{d+\eps})$ for the storage
(for $n\gg d$, ignoring the coefficient that depends on $d$), which is
slightly larger but close to the bound $O(n^d)$ asserted (apparently wrongly) in Meiser~\cite{m-plah-93},
and established in Liu~\cite{l-nplah-04},
the query time bound becomes super-exponential in $d$. As a matter of fact, even to get
bounds like $n^{d+\kappa}$, for a controlled (constant) value of $\kappa$, we already get
super-polynomial bounds for the query cost.

\begin{figure}
    \centerline{%
       \begin{tabular}{|c|c||c|}
         \hline
         & Query time & Storage \\
         \hline
         $\Bigl. r$ & $O\bigl( {\frac{d^{4}r  }{\log r} \log n} \bigr)$%
         & $O\bigl( n^{d \log (crd^2)/ \log r + \log e/\log r} \bigr)$ \\%
         \hline
         \hline
         $\Bigl.2$ & $O\pth{d^{4} \log n}$ & $O\pth{ n^{2d \log d + O(d)} }$  \\%
         \hline
         $\Bigl. d$ & $O\bigl( \frac{d^{5}}{\log d}\log n \bigr) $ & $O\bigl(  n^{(3 + o_d(1))d} \bigr)$  \\%
         \hline%
         $\Bigl.d^{1/\eps}$ & $O\bigl({\frac{\eps d^{4+1/\eps}}{\log d}  \log n}\bigr)$%
         & $O\pth{n^{d + 2\eps d + o_d(1)}} $  \\%
         \hline
         $\Bigl.d^{d}$ & $O\bigl({\frac{ d^{d+3}}{\log d}  \log n}\bigr)$ &
           $O\bigl( n^{d + 2 +o_d(1)} \bigr)$  \\%
         \hline
         $\Bigl.d^{2d/\eps}$ & $O\bigl({\frac{\eps d^{2d/\eps+3}}{\log d} \log n }\bigr)$%
         & $O\bigl(n^{d + \eps + o_d(1)}\bigr)$  \\%
         \hline
       \end{tabular}
    }%
    \caption{Query time and storage for Meiser's data structure (using \BVT). The notation $o_d(1)$ in the various exponents means a term that depends on $d$ and tends   to $0$ as $d$ increases.}
    \figlab{meiser:b:v:t}%
\end{figure}

In addition, when the answer to a query is the sign pattern of the corresponding cell, we need
to add $O(n)$ storage at each leaf of the trie $T_{\KillSet{\cell}}$ of each leaf-simplex $\cell$,
for storing the sign
pattern (with respect to $H$) that serves as the output to all queries that reach that leaf.
This will increase the space by an additional factor of $n$.
Technically, unless $r$ is huge, this does not affect the asymptotic form of
the bound in Equation (\ref{storage-bdt}).

We also mention the follow-up study by Liu~\cite{l-nplah-04}, which has improved
the storage cost in terms of its dependence on $n$, using Chazelle's hierarchical
cutting technique (see~\cite{c-c-05}),
to the optimal value $O(n^d)$, except that the constant of proportionality is still
super-exponential in terms of $d$. Specifically, catering only to the case of the
best possible query time, Liu achieves query time\footnote{%
  Here the notation $\tilde{O}$ hides a polylogarithmic factor in $d$.}
of $\tilde{O}(d^5(\log{n} + d))$, but the constant hidden in the storage bound $O(n^d)$ is $d^{O(d^3)}$.
Although our storage bounds are weaker than Liu's, in terms of their dependence on $n$, the
dependence on $d$ in the coefficient of proportionality is significantly smaller---it actually
\emph{decreases} to $0$ as $d$ increases.

\subsubsection{Preprocessing} \seclab{bvt-preproc}

The nonrecursive preprocessing of a subproblem in the main recursion tree, with an associated $\rho$-sample $R$,
consists of (i) computing $\Arr(R)$, storing the sign patterns of its cells in the trie $T_R$, and computing
the bottom vertex of each cell, (ii) constructing $\BT(R)$ and the corresponding trees $Q_C$ for each cell
$C$ of $\Arr(R)$, and (iii) constructing the conflict list of each simplex (and the partial sign pattern
for hyperplanes not crossing the simplex).

We perform step (i) using the following simple vertex-based approach. We iterate over the vertices of $\Arr(R)$.
Fix a vertex $v$, and denote by $\delta(v)\ge d$ its \emph{degree}, namely the number of hyperplanes of $R$
incident to $v$. In general position, we have $\delta(v)=d$ and the procedure about to be described becomes much simpler.
To handle the general case, we intersect the $\delta(v)$ hyperplanes incident to $v$ with a hyperplane
$h_v$ parallel to the hyperplane $x_d = 0$ and passing slightly above $v$. We recursively compute
the cells of the $(d-1)$-dimensional arrangement, within $h_v$, of these intersections.
For a cell $C$ incident to $v$, $v$ is the bottom vertex of $C$ if and only if
$C\cap h_v$ is bounded. We thus collect all the bounded cells of the arrangement within $h_v$, of any
dimension, and associate each such cell $C'$ with the corresponding cell $C$ of $\Arr(R)$
(which is one dimension higher). The portion of $C$ between $v$ and $h_v$ is the pyramid
${\rm conv}(C'\cup\{v\})$. The sign patterns of $C'$ and of $C$, with respect to $R$, are identical.
In fact, all the cells within $h_v$ have the same sign pattern with respect to all the hyperplanes
not incident to $v$, and they can differ only in their signs with respect to the incident hyperplanes.

By construction, each cell of $\Arr(R)$, of dimension at least $1$, has a bottom vertex, either a real vertex
of the original arrangement, or an artificial one, on the artificial plane $\pi_d^-$ (see \secref{b:v:t}), so it will
be detected exactly once by the procedure just presented. The only cells of $\Arr(R)$ that do not have an associated
cell on some auxiliary hyperplane $h_v$, are the vertices themselves. We add each vertex $v$ to the output,
together with its sign pattern (which is $0$ at each incident hyperplane).

Let $Z_d(\rho)$ be the maximum time it takes to perform this computation on an arrangement of $\rho$
hyperplanes in $d$ dimensions. It follows from the preceding discussion that the full problem is reduced
to a collection of $(d-1)$-dimensional subproblems, one for each vertex of $\Arr(R)$. To prepare for these
subproblems, we need to construct the vertices, find the hyperplanes incident to each vertex $v$, and
prepare them for the recursive subproblem at $v$ by intersecting them with the hyperplane $h_v$.
To perform all these preparations, we iterate over the $\binom{\rho}{d}$ choices of $d$-tuples of hyperplanes,
compute the intersection vertex of each such tuple, in $O(d^3)$ time, using Gaussian elimination, identify
vertices that arise multiple times (and thereby obtain their degree), and then, for each vertex $v$,
intersect each incident hyperplane with $h_v$, in $O(d)$ time.
With a suitable implementation, the pre-recursive overhead takes
$$
O\left( d^3 \binom{\rho}{d} + d \sum_v \delta(v) \right)
$$
time. We have the following identity:
\begin{equation} \label{rho:d}
\sum_v \binom{\delta(v)}{d} \le \binom{\rho}{d} ,
\end{equation}
as the left-hand side counts the number of $d$-tuples of hyperplanes that have a singleton intersection
(a vertex), while the right-hand side counts all $d$-tuples of hyperplanes.
Using (\ref{rho:d}), the cost of the pre-recursive overhead is easily seen to be
$$
O\left( d^3 \binom{\rho}{d} \right) .
$$
In a post-recursive step, we need to compute the sign pattern of each cell with respect to the entire
set of $\rho$ hyperplanes. We do it by computing, for each vertex $v$, its sign pattern with respect to
all the hyperplanes not incident to $v$, in $O(\rho d)$ time. We then pad up this sequence with the local
sign pattern of each (bounded) cell constructed in the recursive call at $v$. To save on time (and storage),
we do not copy the global sign pattern (involving the non-incident hyperplanes) into the pattern of each
local cell. Instead we keep the global sign pattern as a separate entity, shared by all the local cells,
and just form, and store separately, the local sign pattern for each cell. Only at the bottom of recursion we will construct the
full sign pattern of each cell, as the union of the subpatterns, global and local, from each node along
the path to the corresponding leaf.

With this approach, the post-recursive overhead at $v$, which can also be applied before the recursive call,
only involves the computation of the global sign pattern. Hence the non-recursive overhead at $v$ is
$$
O\left( (d^3 + \rho d) \binom{\rho}{d} \right) .
$$
Note that each cell of $\Arr(R)$ arises exactly once as a local cell above a vertex $v$, namely at the
unique bottom vertex $v$ of the cell.

We thus obtain the following recurrence for $Z_d(\rho)$.
\begin{equation} \label{eq:arr-com}
Z_d(\rho) \le c_0 \left( (d^3 + \rho d) \binom{\rho}{d} \right) +
\max_\delta \sum_v Z_{d-1}(\delta(v)) \ ,
\end{equation}
where $c_0$ is some absolute constant, and where the maximum is taken over all possible assignments
of degrees to vertices (each such assignment must satisfy (\ref{rho:d})).

We claim that $Z_d(\rho) \le cd!\rho \binom{\rho}{d}$, for some absolute constant $c$. We will establish,
by induction on $d$, the refined bound $Z_d(\rho) \le c_d d!\rho \binom{\rho}{d}$, where the coefficients
$c_d$ form an increasing convergent sequence, from which the claim follows.

The case $d=1$ is easy, since there is no further recursion, and we only need to handle $O(\rho)$ vertices and edges,
each taking $O(\rho)$ time (mainly to compute its sign pattern). For $d>1$, the induction hypothesis implies that
\begin{align*}
Z_d(\rho) & \le c_0 \left( (d^3 + \rho d) \binom{\rho}{d} \right) +
\max_\delta \left( c_{d-1}(d-1)! \sum_v \delta(v) \binom{\delta(v)}{d-1} \right) \\
& = c_0 \left( (d^3 + \rho d) \binom{\rho}{d} \right) +
\max_\delta \left( c_{d-1}(d-1)! \sum_v \delta(v) \binom{\delta(v)}{d} \frac{d}{\delta(v) -d + 1} \right) \\
& \le c_0 \left( (d^3 + \rho d) \binom{\rho}{d} \right) +
\max_\delta \left(  c_{d-1}d! \rho \sum_v \binom{\delta(v)}{d} \right) \\
& = c_0 \left( (d^3 + \rho d) \binom{\rho}{d} \right) +
c_{d-1}d! \rho \binom{\rho}{d} \ .
\end{align*}
Hence, by putting
$$
c_d := c_{d-1} + \frac{c_0 (d^3+d)}{d!} \ge
c_{d-1} + \frac{c_0 (d^3+\rho d)}{d!\rho} ,
$$
for all $\rho \ge 1$, we establish the induction step. Clearly,
the coefficients $c_d$ form an increasing convergent sequence, as claimed.

Once we have computed all the cells of $\Arr(R)$ and their sign patterns,
it is straightforward to construct $T_R$ in $O(\rho^{d+1})$ time.

We perform step (ii) by locating the children of each cell $C'$ in brute force, as follows.
(Note that $C'$ may appear multiple times in each and all the trees $Q_C$, but we apply
the following procedure only once for each such cell.)
For each non-zero entry $b$ in the sign pattern of $C'$, we check whether setting $b$ to $0$ yields
a valid cell $C''$ on the boundary of $C'$. More precisely, in case of degeneracies, it might
be necessary to set more signs to $0$ for the resulting cell $C''$. To handle this issue, we
iterate over all choices of $j$ hyperplanes of $R$, for $j=2,\ldots,d$, form the intersection of these
$j$ hyperplanes, and collect all other hyperplanes that vanish identically on the resulting flat.
Hence, after setting $b=0$, we take all zero entries in the sign pattern of $C'$ (including $b$)
and find all the other hyperplanes whose sign should also be set to $0$, along with that of $b$.
It thus takes $O(\rho)$ time
to form the children of a cell $C'$ in any of the trees $Q_C$, which, by \lemref{num:cells},
takes a total of $O(\rho^{d+1})$ time for all cells. Once we know the children of every cell we
can assemble the trees $Q_C$ in time proportional to their size, which is\footnote{%
  The number of distinct cells is only $O(\rho^d)$, but because of their possible repetitions
  in the trees $Q_C$, we simply multiply by $d$ the number of leaves.}
$O(d\rho^{d})$.

In step (iii), we compute, for each vertex $y$ of $\Arr(R)$ and for each hyperplane $h$ of
$H$, the sign of $y$ with respect to $h$. These signs then allow us to decide whether $h$
is in $\KillSet{\cell}$ for each $\cell$ in $\BT(R)$ by checking whether
the sign of $h$ is positive for some of the $d+1$ vertices of $\cell$ and negative for others.
This costs $O(dn\rho^d)$ time and, as can be easily checked, constitutes the dominant part of the preprocessing.
We separate between the hyperplanes in $K(\sigma)$ and those that do nor cross $\sigma$. We (temporarily) store the sign pattern of $\sigma$ with respect to the second subset.

Let $T_d(n)$ denote the maximum expected overall preprocessing time of the algorithm
for an instance involving $n$ hyperplanes in $d$ dimensions. If we take, as above, $\rho = crd^2$,
for some suitable absolute constant $c$, then, with high probability, the size of each recursive
subproblem is at most $n/r$. If this is not the case, we discard the structure (that we have constructed
locally for $R$), take a new random sample, and construct the structure anew from scratch.

All these considerations lead to the following recurrence for $T_d(n)$.
\begin{align*}
  T_d(n) &\leq
  \begin{cases}
    a dn\rho^d  + e \rho^d T_d(n/r)  & \text{for $n > \rho $} \\
    a \rho^{d+1} & \text{for $n \le \rho$}  ,
  \end{cases}
\end{align*}
where $a$ is an suitable absolute constant (independent of $d$).
Unfolding the recurrence, we get
$$
T_d(n) \le adn\rho^d\pth{1 + \frac{e\rho^d}{r}+ \pth{\frac{e\rho^d}{r}}^2 +  \cdots + \pth{\frac{e\rho^d}{r}}^{j-1} }
+ a\left( e\rho^d\right)^j \rho^{d+1} \ ,
$$
for $j= \lceil \log (n/\rho) /\log r \rceil$. Ignoring the rounding, as before, it follows that
$$
T_d(n) = O\pth{ d\rho^{d+1} (n/\rho)^{(d\log \rho + \log e)/\log r} } = O\pth{ n^{(d\log \rho + \log e)/\log r} } \ ,
$$
which is the same as our asymptotic bound on $S_d(n)$. Note again that the coefficient of proportionality
is independent of $d$, and in fact tends to $0$ as $d$ increases.

\subsection{Improved algorithm via vertical decomposition}%
\seclab{meiser-vd}

In this subsection we improve upon Meiser's algorithm by using
vertical decomposition instead of bottom-vertex triangulation. This
allows us to use a smaller sample size, exploiting the smaller
combinatorial dimension of vertical prisms, thereby making the query run
faster.  We pay for it (potentially)\footnote{%
  Since the bounds on the complexity of vertical decomposition are not known to be tight,
  it is conceivable that we do not pay anything extra for storage using this technique.}
in the storage size, due to the discrepancy between the upper bounds on the complexities of
bottom-vertex triangulation (given in \lemref{simplices:grow}) and of vertical decomposition
(given in \thmref{vdsize}). Nevertheless, observing the data in \figref{meiser:b:v:t}, the bounds on the
storage size for \BVT, at least for the case of reasonably fast queries, are already rather
large, making the (possible additional) increase in the storage bound when using vertical
decomposition relatively less significant.
We will also use the enhancement suggested in the previous subsection, which is based
on optimistic sampling (unlike the case for \BVT, here using optimistic sampling does
make a difference), to improve the query time further, by another factor of $\log d$.
This further enhancement faces certain additional technical issues, so we  first
present the standard approach, and only then discuss the improved one.

\subsubsection{The data structure} \label{sec:ds}

The general strategy of the algorithm is similar to Meiser's: We draw a $\rho$-sample
$R\subset H$, construct the arrangement $\Arr(R)$, compute the sign pattern of each cell
(of any dimension) in $\Arr(R)$, and store the cells in a ternary trie $T=T_R$, indexed
by their sign patterns, exactly as in the previous subsection.
We then construct the vertical decomposition within
each cell $C$ of $\Arr(R)$ separately, a decomposition that we denote as $\VD(C)$,
by recursing on the dimension, following the constructive definition of vertical decomposition
given in \secref{v:de}.

Let $C= C_\xi\in \Arr(R)$ be the cell associated with the leaf $\xi$ of the trie $T$. We store
at $\xi$ (or at the structure associated with $C_\xi$, which is accessible from $\xi$) a
secondary tree $Q_C$ that ``encodes'' the  vertical decomposition $\VD(C)$ of the cell $C$ (described in more detail below).
Each leaf of $Q_C$ corresponds to a prism $\sigma\in \VD(C)$.

Once all these structures have been constructed, we take each prism $\cell\in\VD(C)$,
for every cell $C\in \Arr(R)$, compute its conflict list $\clX{\cell}$ (in brute force,
see below), and continue the main recursion (on the size of the input) on $\clX{\cell}$.
During the construction of $\clX{\cell}$, we also obtain the partial sign pattern of all
points in $\cell$ with respect to all the hyperplanes that do not cross $\cell$, and store the resulting partial sign pattern
temporarily at $\sigma$.

The recursion bottoms out when the conflict list is of size smaller than $\rho$. Exactly as
in the structure in \secref{meiser-bvt}, at each such leaf-prism we compute the
arrangement of the at most $\rho$  remaining hyperplanes, and store with each cell $C$ of
this arrangement a pointer to the unique cell of $\Arr(H)$ whose sign pattern is the
union of all partial sign patterns stored along the path to the leaf, including the sign
pattern of $C$ with respect to the hyperplanes of $\clX{\cell}$. This sign pattern is the answer to
all the queries that end up in $C$. The argument that justifies the last property is
identical to the one given for bottom vertex triangulation.

We next describe in more detail the construction of $VD(C)$ and the tree $Q_C$ encoding
it.

The sign pattern of $C$ identifies the set $R_0$ of the hyperplanes of $R$ that contain $C$ (if any),
the set $R^+$ of the hyperplanes of $R$ that pass above $C$ (and can only appear on its upper boundary), and
the set $R^-$ of the hyperplanes of $R$ that pass below $C$ (and can only appear on its lower boundary).
For simplicity, we only consider the full-dimensional case (where $R_0=\emptyset$); the lower-dimensional
instances are handled in  the same way, restricting everything to the flat $\bigcap R_0$ that
supports $C$. We iterate over all pairs $(h^-, h^+)$ of hyperplanes, such that $h^-\in R^-$ and
$h^+ \in R^+$, and for each such pair, we check whether $\bd C$ contains facets $f^- \subseteq h^-$
and $f^+ \subseteq h^+$ that are vertically visible within $C$. To do so, we construct
the following set of halfspaces within the hyperplane $\pi_0:\;x_d=0$.
For each $h\in R^+$, take the halfspace in $h^+$, bounded by $h^+\cap h$,
in which $h^+$ is below $h$ (in the $x_d$-direction), and project it onto $\pi_0$.
Similarly, for each $h\in H^-$, take the projection onto $\pi_0$ of
the halfspace in $h^-$, bounded by $h^-\cap h$, in which $h^-$ is above $h$.
Finally, take the projection of the halfspace in $h^+$ (or in $h^-$), bounded
by $h^+\cap h^-$, in which $h^-$ is lower than $h^+$. Denote the resulting set of halfspaces by $G$.
See Figure~\figref{vertical_visibility}.

The proof of the following lemma is straightforward and hence omitted.

\begin{figure}[h]
  \centering
  \begin{tabular}{cc}
    {\includegraphics{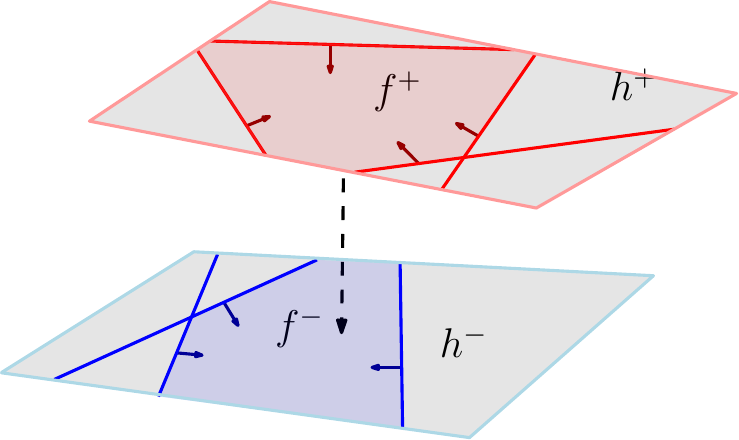} }%
    &\quad%
      {\includegraphics{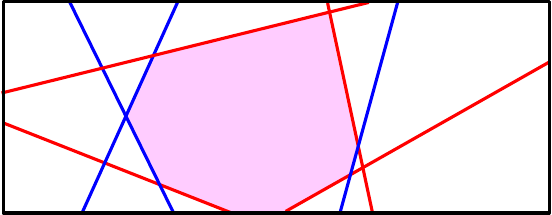} }  \\
    {\small (a)} & \quad\quad
                   {\small (b)}
  \end{tabular}
  \caption{
    (a) The halfspaces on $h^+$ (resp, $h^{-}$) are depicted by the straight lines and the arrows (indicating their direction).
    (b) The resulting set of halfspaces projected onto the hyperplane $\pi_0:\;x_d=0$. Their intersection is depicted by the
    shaded polygon.
  }
  \figlab{vertical_visibility}
\end{figure}

\begin{lemma} \lemlab{lem:feasible}
There are facets $f^-\subseteq h^-$ and $f^+ \subseteq h^+$ on $\bd C$ that are vertically
visible within $C$ if and only if the intersection of the halfspaces in $G$ is nonempty.
\end{lemma}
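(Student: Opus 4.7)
The plan is to identify the halfspaces comprising $G$ with the linear conditions on a point $p\in\pi_0$ that are required for the vertical line $\ell_p$ through $p$ to meet $\bar{C}$ in a segment whose top endpoint lies on $h^+$ and whose bottom endpoint lies on $h^-$; the claim then follows by observing that such a segment exists exactly when both endpoints lie on facets of $\bd C$ that are vertically visible within $C$.

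For $p\in\pi_0$, let $y^+(p)$, $y^-(p)$, and $y_h(p)$ denote the $x_d$-coordinates of the intersections of $\ell_p$ with $h^+$, $h^-$, and with any other $h\in R^+\cup R^-$, respectively. In the full-dimensional case, $C$ is defined as the intersection of open halfspaces bounded by the hyperplanes of $R^+\cup R^-$, so the segment from $(p,y^-(p))$ to $(p,y^+(p))$ lies in $\bar{C}$ if and only if (i) $y^-(p)\le y^+(p)$, (ii) $y^+(p)\le y_h(p)$ for every $h\in R^+\setminus\{h^+\}$, and (iii) $y_h(p)\le y^-(p)$ for every $h\in R^-\setminus\{h^-\}$. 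The lower-dimensional case is treated identically after restricting to the flat $\bigcap R_0$ that supports $C$.

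The next step is to match these three families of inequalities with the three constructions given for $G$. By direct inspection, the projection onto $\pi_0$ of the halfspace in $h^+$ bounded by $h^+\cap h$ in which $h^+$ lies below $h$ is exactly $\{p\in\pi_0\mid y^+(p)\le y_h(p)\}$, and an analogous identification works for the $R^-$-side and for the single ``$h^-$ below $h^+$'' constraint. Hence $p\in\bigcap G$ if and only if (i)--(iii) all hold at $p$, which is in turn equivalent to the vertical segment from $(p,y^-(p))$ to $(p,y^+(p))$ lying in $\bar C$ with its endpoints witnessing membership in $h^-\cap\bar C$ and $h^+\cap\bar C$.

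To conclude: if $\bigcap G\ne\emptyset$, any witness $p$ (together with a full-dimensional neighborhood when the intersection has nonempty interior) lifts to a pair of vertically visible points on facets $f^\pm$ of $\bd C$ supported by $h^\pm$; conversely, if such facets exist, the overlap of their vertical projections is contained in $\bigcap G$. The main technical obstacle is the handling of degenerate configurations, where $\bigcap G$ may be of dimension less than $d-1$ and no full-dimensional facets $f^\pm$ arise from this particular pair $(h^-,h^+)$; in a clean statement one would replace ``facet'' by ``relatively open vertically visible region,'' and this is harmless in the enclosing vertical decomposition construction, which enumerates all ordered pairs $(h^-,h^+)$ and only then assembles the resulting overlay.
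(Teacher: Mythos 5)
The paper states this lemma with the remark ``The proof of the following lemma is straightforward and hence omitted,'' so there is no official proof to compare against. Your argument is correct and is precisely the intended ``straightforward'' proof: you correctly identify each halfspace of $G$ with one of the linear constraints $y^-(p)\le y^+(p)$, $y^+(p)\le y_h(p)$ (for $h\in R^+\setminus\{h^+\}$), $y_h(p)\le y^-(p)$ (for $h\in R^-\setminus\{h^-\}$), and you correctly observe that these are exactly the conditions for the vertical segment over $p$ between $h^-$ and $h^+$ to lie in $\bar{C}$; convexity of $\bar{C}$ gives the converse direction. Your note about degenerate configurations (where $\bigcap G$ may be nonempty but lower-dimensional, so no genuine facets $f^\pm$ arise) is a legitimate observation that the paper glosses over by omitting the proof, and you are right that it is harmless in the surrounding algorithm, which merely enumerates candidate floor/ceiling pairs.
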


To apply this lemma, we construct the halfspaces within $\pi_0$, as prescribed in the lemma.
Denote by $R'$ the set of the $(d-2)$-hyperplanes (within $\pi_0$) that bound these halfspaces,
and note that $|R'|\le \rho-1$. We then apply linear programming (LP for short) to determine
whether the intersection of these halfspaces, denoted as $C'_{h^-,h^+}$, is nonempty.
Assume that $h^-$ and $h^+$ are indeed vertically visible within $C$. We then recurse on
the convex polyhedron $C'=C'_{h^-,h^+}$. For this we need the sign pattern of $C'$ with
respect to hyperplanes in $R'$ in the $x_{d-1}$-direction. We compute these signs
in brute force, in $O(d\rho)$ time, with respect to a witness point within $C'$, that
the LP procedure does provide.

We have omitted in this description details concerning the handling of unbounded cells that
do not have a lower or an upper boundary. Handling such cells is in fact simpler, because there is no need to pair up
a facet from the top boundary with a facet on the bottom boundary. Assuming that the top boundary
does not exist, we simply take each facet on the bottom boundary, intersect its hyperplane with all
other hyperplanes of $R$ (all lying below the cell), project, and recurse. We omit the further straightforward details.

We now construct the  tree $Q_C$. Its root represents $C$ and all the hyperplanes in $R$.
The root has a child for each pair of hyperplanes $h^-$ and $h^+$ that are vertically visible inside $C$.
Each child $v$, with a corresponding pair $h_v^-$, $h_v^+$ of vertically visible hyperplanes,
represents the polyhedron $C'_v=C'_{h_v^-,h_v^+}$ and the set $R'_v$ of the $(d-2)$-hyperplanes
in $\pi_0$ that surround $C'_v$, as specified in \lemref{lem:feasible}. The construction
of $Q_C$ proceeds recursively at each node $v$, with $C'_v$ and $R'_v$ as input.
At level $i$, each node $v$ represents a $(d-i)$-dimensional polyhedron $C'_v$ within the subspace
$x_d=0,\ldots,x_{d-i+1}=0$, and a corresponding set $R'_v$ of fewer than $\rho$ hyperplanes in this subspace.
Each node $v$ in $Q_C$ has at most $\frac14 \rho^2$ children, where each child $w$ is associated with
a pair $(h_w^-,h_w^+)$ of vertically visible hyperplanes of $R'_v$ inside $C'_v$.

By \lemref{def:hyp}, if $v$ is a child of a node $u$ that arises at dimension $j$,
each of $h_v^-$, $h_v^+$ is defined by a sequence of intersections and projections of
$2(d-j)$ original hyperplanes of $R$, and one additional respective original hyperplane
$\hat{h}_v^-$, $\hat{h}_v^+$. Since the other previous $2(d-j)$ hyperplanes are those that define $u$,
$v$ can be uniquely indexed from $u$ by the pair $(\hat{h}_v^-,\hat{h}_v^+)$ of original hyperplanes.
We therefore use these pairs to index the children of $u$. Using appropriate pointers,
this takes only $O(1)$ storage per child. The query will navigate through $Q_C$ using
these indexed links.

The recursion bottoms out at $d=1$, where the relevant cell $C'$, which is equal to its
trivial vertical decomposition, is just a (possibly unbounded) interval.
Each leaf $v_\sigma$ of $Q_C$ corresponds to a prism $\sigma$ in $\VD(C)$
which is defined by the pairs of hyperplanes associated with the nodes along the path
of $Q_C$ from the root to $v$.

It is noteworthy to compare this technique with the corresponding one used in \secref{meiser-bvt}
for bottom-vertex triangulation. The only significant difference is that here the parent-child
links are indexed by a pair of hyperplanes, whereas there a single hyperplane was used.

\paragraph{Constructing the conflict lists.}
The construction continues recursively at each prism $\cell$ in the vertical decomposition of $\VD(R)$,
with the conflict list of $\cell$ as input. To proceed, we construct, for each $\cell$,
the conflict list $\clX{\cell}$
of $\cell$ in brute force, by determining for each hyperplane $h\in H$, whether $h$ crosses $\cell$.
This is done using linear programming, regarding $\cell$ as the feasible region defined by (at most)
$2d$ linear inequalities, and regarding the normal direction $\nn$ of $h$ as an objective
function $\xx\cdot\nn$, for $\xx\in\cell$. By computing the minimum and the maximum values of this
function over $\cell$, we get two hyperplanes that are parallel to $h$ and support $\cell$. Then
$h$ crosses $\cell$ if and only if it lies (strictly) between these two supporting hyperplanes.

Note that constructing the conflict lists in the case of bottom-vertex triangulation was much simpler,
since we only had to deal with the original vertices of $\Arr(R)$. Here, in contrast, the prisms have
too many vertices, so constructing the conflict lists by checking which hyperplane separates the
vertices of which prisms is too expensive, and one has to resort to the LP-based technique (which is
different than the approach taken in~\cite{ES17}).

We do not store $\clX{\cell}$ explicitly after the preprocessing is over, but only maintain it
temporarily, as long as the recursive construction at $\cell$ is still going on. We only store
(permanently) the random sample $R_\cell$ drawn from $\clX{\cell}$ and
its associated data structures (defined recursively).
Similarly, we store
the partial sign pattern with respect to the complement of the conflict list only temporarily and discard it
once we have computed the complete sign patterns at the leaves.

\subsubsection{Answering a query}

The query with a point $q$ follows a path in the main hierarchical tree structure, where at each step
we have access to a random sample $R$, drawn from the conflict list of a parent prism, and we identify
the prism of $\VD(R)$ that contains $q$. We recursively search through the structure in this manner
until we reach a leaf, from which we retrieve, as in the case of $\BT$, the sign pattern of the cell of $\Arr(H)$ that contains $q$.

Consider a single step of this procedure. We compute the sign pattern of the cell $C$ of $\Arr(R)$
containing $q$ (with respect to the hyperplanes of the current random sample $R$); as before, this
is straightforward to do in $O(d\rho)$ time. We then locate the leaf $\xi$ of the top level trie
that is associated with  $C$ (and the tree $Q_C$ which encodes $\VD(C)$). This sign pattern also identifies
(i) the set of hyperplanes that pass above $C$ (and can contribute facets on its upper boundary),
(ii) the set of hyperplanes that pass below $C$ (and can contribute facets on its lower boundary),
and (iii) the set of hyperplanes that contain $C$ (if $C$ is lower-dimensional).

We then compute the hyperplane $h^-$ (resp., $h^+$) that is first hit by the
downward (resp., upward) $x_d$-vertical ray emanating from $q$;  this
too takes $O(d\rho)$ time.
The pair $(h^-,h^+)$ identifies the child
$v$ of the root of the search tree $Q_C$ in which our search has to continue.

To continue the search from $v$, we apply the filtering procedure of \lemref{unique_charge}.
That is, we compute the intersection hyperplanes $h\cap h^-$, for the hyperplanes
$h\in R\setminus \{h^-,h^+\}$ that pass below $C$, the intersection hyperplanes $h\cap h^+$,
for the hyperplanes $h\in R\setminus \{h^-,h^+\}$ that pass above $C$, and also include the
intersection $h^-\cap h^+$. We then project all these $(d-2)$-flats onto $x_d=0$.
This yields the set $R'_v$ of at most $\rho-1$ hyperplanes within $x_d=0$. We also take
the projection $q'$ of $q$ onto $x_d=0$, and compute the sign pattern of $q'$ with respect
to the set $R'_v$ and the $x_{d-1}$-direction.\footnote{%
  This $(d-1)$-dimensional sign pattern can also be computed and stored during preprocessing,
  but computing it on the fly, as we do here, does not affect the asymptotic cost of the query,
  and somewhat simplifies the preprocessing and storage.}
We continue the search with $q'$ within the cell $C'_v$, within $x_d=0$, that contains $q'$
(that is, the cell that has the sign pattern that we have just computed).

In general, the search reaches
 node $v$ of level $i$ in $Q_C$ when the query $q$ projected into
the $(d-i)$-dimensional subspace
$x_d=0,\ldots,x_{d-i+1}=0$ is contained the polyhedron $C'_v$ (also in this subspace) which
is represented by $v$.
To continue the search we find the
 hyperplanes $h^-$ (resp., $h^+$) that is first hit by the
downward (resp., upward) $x_j$-vertical ray emanating from $q'$ in
$C'_v$.
The hyperplanes $h^-$ and $h^+$ correspond to a pair of respective original hyperplanes $(\hat{h}^-, \hat{h}^+)$,
that were intersected with previous floors and ceilings along the path to $v$ so far
and projected into the subspace $x_d=0,\ldots,x_{d-i+1}=0$.
We continue the search with the child $w$ of $v$ in $Q_C$ that is indexed by $(\hat{h}^-, \hat{h}^+)$;
$w$ must exist by construction.

 We trace the path in $Q_C$ in this manner for the sole purpose of reaching
its leaf $w$, which represents the prism $\cell$ in $\VD(R)$ containing $q$.
This leaf  stores the next random sample $R_w$
and its associated data structures in which we continue the search.

When we reach a leaf-prism $\cell$ of the overall tree hierarchy (a prism whose conflict list is of size smaller than $\rho$), we compute the sign pattern of the query
with respect to the at most $\rho$ remaining hyperplanes stored at that leaf to identify the cell
of the arrangement $\Arr(\clX{\cell})$ that contains $q$. We locate the leaf
corresponding to $\cell$ in the trie associated with it and   return the sign pattern (with respect to all of $H$) stored there.

\begin{figure}[t]
    \centerline{%
       \begin{tabular}{|c|c|c|c|}
         \hline
         Parameter & Value $\Bigl.$ & Sample size & Using \\
         \hline\hline%
         VC-dim. &
           \MCY{0.15}{%
           $O(d^3)$\newline
           \thmref{v:c:dim:prisms}%
           }
           & $O( r d^3 \log r )$ &
           \MCY{0.15}{%
           $\eps$-net theorem\newline
           (\thmref{epsilon:net})%
           }
         \\
         \hline
         Shatter dim. &
           \MCY{0.15}{%
           $d(d+1)$\newline
           \thmref{v:c:dim:prisms}%
           }
         & $O( r d^2 \log (dr) )$ & \remref{epsilon:net:s} \\
         \hline \hline
         Combinatorial dim.
         &%
           \MCY{0.2}{%
           $b=2d$
           \newline%
           \corref{prisms:comb:dim}%
           \newline%
           $\alpha, \beta \leq 2d$
           \newline%
           \thmref{vdsize}%
           }
         & $O(rd \log (rd))$ & \lemref{weak:cutting} \\
         \hline
         Optimistic sampling &\MCY{0.2}{%
           $b=2d$
           \newline%
          $\alpha, \beta \leq 2d$
           \newline%
            Same references
           } & $\Bigl.O(rd)$ & \corref{optimistic} \\
         \hline
       \end{tabular}%
    }
    \caption{The different sizes of a sample $R$ needed to ensure, with high probability, that $\VD(R)$
       is a $(1/r)$-cutting for $R$, in $\Re^d$, according to the different tools at our disposal.
       The probability increases with the constant of proportionality.}
    \figlab{prism:cutting}
\end{figure}

\begin{remark}
Instead of answering point-location queries by returning the (sign pattern of the) cell of $\Arr(H)$
containing the query point $q$, as described above, and as in Meiser's algorithm, here it is somewhat
more natural to return the lowest hyperplane of $H$ (in the $x_d$-direction) that lies above the query
point $q$ (or a hyperplane that contains $q$). The ceiling of each prism containing $q$, along the
search path in the main tree hierarchy, is a candidate for the answer, and we return the vertically
closest hyperplane among these ceilings (and the hyperplanes in the leaf-subproblem).
The correctness of this procedure is easy to establish, and we omit its details.
\end{remark}

\subsubsection{The cost of a query}
As in Meiser's algorithm, each step of the main recursion (on the set of hyperplanes, passing
from some parent subset $H'$ to the conflict list of the vertical prism $\cell_q$ containing $q$
in the vertical decomposition of a $\rho$-sample $R$ from $H'$)
takes $O(d^2\rho)$ time. Indeed, the main operations performed in each dimension-reducing recursive step in the
construction of $\cell_q$ are (i) computing the sign pattern of the query point $q$ with respect to $R$,
(ii) searching the trie $T$ with the sign pattern to a leaf $\xi$ associated with the cell $C$ that
contains $q$, and (iii) searching $Q_C$, by identifying, at each node $v$ along this search path,
the pair $(\hat{h}^-,\hat{h}^+)$ of the original hyperplanes involved in the definition of the
floor and ceiling of the corresponding projected cell.
Steps (i) and (ii) take $O(d\rho)$ time, and step (iii) takes $O(d\rho)$ time at each node $v$,
using vertical ray shootings, for a total of $O(d^2\rho)$ time.
That is, the overall cost of a query is $O(d^2\rho)$ times the number of steps in the main recursion (on the size of the input).

It remains to determine how large should $\rho$ be to make the decomposition efficient,
that is, to make it be a $(1/r)$-cutting for a suitable parameter $r$.
Here we do not want to use the VC-dimension or the primal shatter dimension, since they
are both at least $\Omega(d^2)$. Instead, we use the smaller combinatorial dimension,
via the Clarkson--Shor analysis technique. See the table in \figref{prism:cutting} for a summary of
the various bounds, as derived earlier in this paper.

The best option in \figref{prism:cutting} (ignoring the issue of optimistic sampling, which will be
discussed later in this section is the one based on the combinatorial dimension $b=2d$
(see \corref{prisms:comb:dim}), so we take
\begin{equation}
    \eqlab{rcd}%
    \rho=cb r \log (b r) = 2cdr \log (2dr),
\end{equation}
where $c$ is some small absolute constant.  This choice guarantees that, with constant
probability (which can be made larger by increasing $c$), the conflict list of each
prism in the vertical decomposition of $\Arr(R)$ is of size at most $n/r$.
If we discover, during preprocessing, that this is not the case, we take,
as in the case of bottom-vertex triangulation, another sample, and repeat the whole step.
In an expected small number of steps we will get a sample that satisfies this property.
By applying this approach at each recursive step where we draw a random sample, we may
assume that this $(1/r)$-cutting property holds  at all sampling steps, over
the entire structure.

As before, this implies that the number of recursive steps is
$\log_r (n/\rho) = \frac{\log (n/\rho)}{\log r}$, making the query time
\begin{equation} \label{vd:qt}
  Q(n) = O\pth{\frac{d^2\rho \log( n/\rho)}{\log r}} =
  O\pth{\frac{d^3 r \log (dr) \log n}{\log r}} =
  O\pth{ d^3\log n \cdot \frac{r \log (dr)}{\log r}} .
\end{equation}
This replaces one factor of $d$ from the  bound using bottom  vertex triangulation in \Eqref{q:bvt} by the factor $\log(dr)$.
When we discuss optimistic sampling (see below) we show how to slightly improve this bound.


\subsubsection{Storage}
The storage required for the vertical decomposition of a $\rho$-sample $R$ of some
subset of the hyperplanes is estimated as follows. The trie $T$ has a leaf for each cell of $\Arr(R)$,
and the number of cells (of any dimension) is at most $e\rho^{d}$, by \lemref{num:cells}.
Each cell has a sign pattern of length $\rho$, so the total size of the trie $T$ is $O(\rho^{d+1})$.

By \thmref{vdsize}, the overall number of prisms of all dimensions (where prisms of dimension $j$
arise in the vertical decomposition within some $j$-flat formed by the intersection of a corresponding
subset of $d-j$ hyperplanes of $R$) is at most $c'\frac{4^d}{d^{7/2}} \rho^{2d}$ for some absolute constant $c'$.
Hence this also bounds the total number of leaves in the trees $Q_C$ and the branching factor of our global
hierarchical structure.

The depth of each leaf $v$ of any tree $Q_C$ that represents a $j$-dimensional prism is at most $j$,
for $j=0,\ldots,d$. Moreover, we store, at each internal node of each such tree, only the identifiers of
two (original) hyperplanes. Finally, there are no unary nodes in $Q_C$, because no cell (or any dimension
larger than $1$) can have a single facet on its top boundary and a single facet on the bottom boundary.
It therefore follows that the total size of the trees $Q_C$ is at most $O(\frac{4^d}{d^{7/2}} \rho^{2d})$.

We conclude that the total storage required for $\VD(R)$ (the trie $T$ and the associated
trees $Q_C$ of the cells $C$ in $\Arr(R)$) is
$$
O\left(\rho^{d+1} + \frac{4^d \rho^{2d}}{d^{7/2}} \right) \le c_0 \frac{4^d \rho^{2d}}{d^{7/2}} ,
$$
for some absolute constant $c_0$.

Let $S_d(n)$ denote the maximum overall storage used by the data structure
for an instance involving $n$ hyperplanes in $d$ dimensions, ignoring the storage used for
the sign patterns (with respect to the entire $H$) stored at the cells of the arrangements of the conflict lists of
the bottommost leaf-prisms. We take, as above, $\rho = 2cdr \log (2dr)$, for some suitable absolute constant $c$,
and repeat the sampling until the size of each recursive subproblem is $n/r$. This leads to
the following recurrence for $S_d(n)$.
\begin{align*}
  S_d(n) &\leq
  \begin{cases}
    c_0 \frac{4^d \rho^{2d}}{d^{7/2}}  + c'\frac{4^d\rho^{2d}}{d^{7/2}} S_d(n/r) , & \text{for $n > \rho $} \\
    a \rho^{d+1} & \text{for $n \le \rho$}  ,
  \end{cases}
\end{align*}
where $c'$ is another absolute constant, and the bound at the bottom of recursion is the same as for
bottom-vertex triangulation, with a suitable absolute constant $a$.
Unfolding the recurrence, and upper bounding both factors $c_0/d^{7/2}$ and $c/d^{7/2}$ by $1$,
for simplicity (this holds when $d$ is at least some sufficiently large constant), we get
$$
S_d(n) \le 4^d \rho^{2d} \pth{ 1 + 4^d\rho^{2d} + \cdots + (4^d\rho^{2d})^j } \le c''4^d \rho^{2d} (4^d\rho^{2d})^j ,
$$
where $c''$ is another absolute constant (very close to $1$), and
$j = \lceil \log_r (n/\rho) \rceil = \lceil \log (n/\rho)/\log r \rceil$ is the depth of the recurrence.
Substituting this value of $j$ (and neglecting the rounding) we get that
\begin{equation} \label{eq:sdn}
S_d(n) = O\pth{4^d \rho^{2d} (4^d\rho^{2d})^{\log (n/\rho)/\log r} } =
O\pth{4^d \rho^{2d} (n/\rho)^{\frac{2d + 2d\log \rho }{\log r}}} =
O\pth{ n^{\frac{2d + 2d\log \rho }{\log r}}} ,
\end{equation}
where $\rho = 2cdr \log (2dr)$. As is easily checked, the coefficient of proportionality is independent of $d$.

\subsubsection{Preprocessing}
\seclab{vd-prep-lp}

We construct $\Arr(R)$, as in the case of bottom-vertex triangulation.
That is, we compute the sign pattern of each vertex, and then of each cell of $\Arr(R)$
(with respect to $R$), and store these sign patterns in a trie $T=T_R$.
This takes $O( \rho^{d+1} )$ time by the procedure described in \secref{bvt-preproc}.

We next construct the trees $Q_C$. For each leaf $\xi$ of $T$, we take the cell $C=C_\xi$ of $\Arr(R)$
associated with $\xi$,  construct $\VD(C)$ and, in parallel, the tree $Q_C$,
by the dimension-recursive construction described above. Specifically, for each projected subcell $C'$,
at any dimension $j\le d$, with its associated set $R'$ of at most $\rho$ $(j-1)$-hyperplanes, we have
$O(\rho^2)$ potential floor-ceiling pairs of hyperplanes in $R'$. For each such pair $(h^-,h^+)$,
we determine whether $h^-$ and $h^+$ are vertically visible within $C'$, using the LP-based procedure described earlier.
%
%
%
Using the best known randomized sub-exponential algorithm for linear programming, as presented in
G\"artner and Welzl~\cite{gw-lp}, we can solve a linear program with $n$ constraints in $d$ dimensions in
$$
O\left( d^2n + e^{O(\sqrt{d\log d})} \right)
$$
expected time. It follows that the total expected time to perform the computation described above,
for all $O(\rho^2)$ pairs $(h^-,h^+)$, is
$$
O\pth{\rho^2 \cdot \left( d\rho+\left(d^2\rho + e^{O(\sqrt{d\log d})} \right) \right) } =
O\left(d^2\rho^3 + \rho^2e^{O(\sqrt{d\log d})} \right) .
$$
As we already argued, the number of nodes in all trees $Q_C$ is $O\pth{\frac{4^d \rho^{2d}}{d^{7/2}} }$.
Hence the overall cost of constructing the trees $Q_C$ is
\begin{equation} \label{eq:Qc1}
O\left( \frac{4^d \rho^{2d}}{d^{7/2}}\cdot \left(  d^2\rho^3 + \rho^2e^{O(\sqrt{d\log d})} \right) \right) .
\end{equation}
Once all the leaves of $Q_C$ (that is, prisms of $\VD(C)$), over all cells $C$ of $\Arr(R)$,
have been constructed, we proceed to construct the conflict list of each of these prisms.
Using linear programming once again, as described above, this takes
$O\left( n\left(d^2\rho + e^{O(\sqrt{d\log d})} \right) \right)$ expected time per prism,
for a total time of
\begin{equation} \label{eq:Qc2}
O\left( \frac{4^d \rho^{2d}}{d^{7/2}} n\left(d^2\rho + e^{O(\sqrt{d\log d})} \right)  \right) .
\end{equation}
This step also yields the partial sign pattern of each prism, with respect to the hyperplanes not corssing it, so the bound in
Equation \ref{eq:Qc2} also bounds the cost of the preparation of these partial patterns.

If any of the conflict lists is of size larger than $n/r$, we repeat the whole construction with a new sample.
Taking $\rho$ as in \Eqref{rcd}, the probability of such a failure is small, so this resampling approach
increases the expected running time by at most some small constant factor.

Let $T_d(n)$ denote the maximum overall preprocessing time of the algorithm
for an instance involving $n$ hyperplanes in $d$ dimensions.
By Equations (\ref{eq:Qc1}) and (\ref{eq:Qc2}), we get the following recurrence for $T_d(n)$.

\begin{align*}
  T_d(n) &\leq
  \begin{cases}
    a \frac{4^d \rho^{2d}}{d^{7/2}} (n+\rho^2)\left(d^2\rho + e^{O(\sqrt{d\log d})} \right) +
    b \frac{4^d \rho^{2d}}{d^{7/2}} T_d(n/r) & \text{for $n > \rho $} \\
    a \rho^{d+1} & \text{for $n \le \rho$}  ,
  \end{cases}
\end{align*}
where $a$ and $b$  are suitable absolute constants (independent of $d$).
Unfolding the recurrence, and upper bounding $a/d^{1/2}$ and $b/d^{7/2}$ by $1$, for simplicity, we get
$$
T_d(n) \le a' 4^d \rho^{2d}  n\left(\rho + e^{O(\sqrt{d\log d})} \right)
\left( \frac{4^d \rho^{2d}}{r} \right)^j + a'4^d \rho^{2d+2}\left(\rho + e^{O(\sqrt{d\log d})} \right) (4^d \rho^{2d})^j,
$$
where $a'$ is  another absolute constant and $j= \lceil \log (n/\rho) /\log r \rceil$.
Substituting this value of $j$ (and neglecting the rounding) we get
\begin{equation} \label{eq:tdn}
T_d(n) = O\pth{ 4^d \rho^{2d+2}  \left(    \rho + e^{O(\sqrt{d\log d})}\right)
   (n/\rho)^{ \frac{2d+2d\log \rho }{\log r}} } =  O\pth{n^{ \frac{2d+2d\log \rho}{\log r}} } ,
\end{equation}
where $\rho = 2cdr \log (2dr)$.

We can therefore conclude with the following main result of this section.

\begin{theorem}
    \thmlab{meiser:v:d}%
    Given a set $H$ of $n$ hyperplanes  in $\Re^d$, and a
    parameter $r>1$, one can construct a data-structure for point location
    (or vertical ray-shooting) in $\Arr(H)$ that answers a query in time
    $O\pth{ d^3\log n \cdot \frac{r \log (dr)}{\log r}}$.
    The bounds on the storage and expected preprocessing costs of the structure are given
    in (\ref{eq:sdn}) and (\ref{eq:tdn}), respectively, where $\rho = 2cdr \log (2dr)$.
\end{theorem}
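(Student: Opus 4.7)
The theorem is essentially a summary of the construction and analysis carried out throughout \secref{meiser-vd}, so the plan is to assemble those pieces into a coherent argument while verifying that the asserted bounds drop out of the parameter choices.

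First, I would recall the data structure: given the current set of hyperplanes (initially $H$), draw a $\rho$-sample $R$ with $\rho = 2cdr\log(2dr)$ as in \eqqref{rcd}, build the trie $T_R$ on the sign patterns of the cells of $\Arr(R)$, attach to each cell $C\in\Arr(R)$ the tree $Q_C$ encoding $\VD(C)$ (using the floor/ceiling visibility procedure justified by \lemref{lem:feasible}), compute the conflict list $\KillSet{\cell}$ of each prism $\cell\in \VD(R)$, and recurse on each such $\cell$ with input $\KillSet{\cell}$. Bottom out when $|\KillSet{\cell}|<\rho$, storing there the full arrangement of the remaining hyperplanes (with precomputed full sign patterns merged from the partial sign patterns collected along the root-to-leaf path). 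By \corref{prisms:comb:dim} and \thmref{vdsize}, the Clarkson--Shor parameters are $b=\alpha=\beta=2d$, and \lemref{weak:cutting} then guarantees that with constant probability every prism in $\VD(R)$ has conflict list of size $\le n/r$. By resampling until this holds at every recursive node, we may assume, at the cost of an expected constant-factor overhead, that the size of the problem shrinks by a factor of at least $r$ at every level, so the recursion depth is $\lceil\log_r(n/\rho)\rceil$.

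For the query bound, I would argue that a single recursive step costs $O(d^2\rho)$: computing the sign pattern of $q$ with respect to $R$ takes $O(d\rho)$; navigating $T_R$ to the correct leaf (and hence the cell $C\ni q$) takes $O(d\rho)$; and descending $Q_C$ requires, at each of the $d$ dimension-reducing levels, two vertical ray shootings among at most $\rho$ hyperplanes to identify the indexing pair $(\hat h^-,\hat h^+)$ of the next child, contributing $O(d\rho)$ per level, for a total of $O(d^2\rho)$ per recursive step. Multiplying by the depth yields
\[
Q(n) = O\!\pth{ d^2\rho \cdot \frac{\log(n/\rho)}{\log r} } = O\!\pth{ d^3\log n \cdot \frac{r\log(dr)}{\log r} } ,
\]
after substituting $\rho = 2cdr\log(2dr)$.

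The storage and preprocessing bounds are obtained by unfolding the recurrences already written down in \secref{meiser-vd}. For storage, the non-recursive cost at a node is dominated by the total size of the trees $Q_C$, which by \thmref{vdsize} is $O(4^d\rho^{2d}/d^{7/2})$, and the branching factor (the number of leaf prisms) is the same quantity. Unrolling $S_d(n) \le c_0 \tfrac{4^d\rho^{2d}}{d^{7/2}}(1 + S_d(n/r))$ to depth $\lceil\log(n/\rho)/\log r\rceil$ gives \Eqref{eq:sdn}. For preprocessing, the only subtlety beyond the storage recurrence is the cost of the LP-based tests needed to (i) decide floor--ceiling visibility within each projected subcell and (ii) construct the conflict lists; using G\"artner--Welzl this contributes the extra factor $(d^2\rho+e^{O(\sqrt{d\log d})})$ visible in the recurrence, and unfolding it yields \Eqref{eq:tdn}.

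The main obstacle is really bookkeeping rather than a conceptual difficulty: one must check carefully that the combinatorial dimension, local growth function, and the Clarkson--Shor axioms all apply to vertical decomposition (which we have already established in \secref{v:de}), that resampling on failure only multiplies the expected preprocessing cost by a constant, and that the recurrences telescope to the claimed closed forms after the substitution $\rho=2cdr\log(2dr)$. Once these are in place, the theorem follows.
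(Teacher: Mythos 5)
Your proposal is correct and reproduces the paper's own argument: the theorem is indeed established in \secref{meiser-vd} by exactly the route you describe---choosing $\rho=2cdr\log(2dr)$ via \lemref{weak:cutting} with the Clarkson--Shor parameters $b=\alpha=\beta=2d$ from \corref{prisms:comb:dim} and \thmref{vdsize}, bounding a single recursive step by $O(d^2\rho)$ (sign pattern, trie search, and $d$ vertical ray shootings in the tree $Q_C$), multiplying by the recursion depth $\lceil\log_r(n/\rho)\rceil$ to get the query bound, and unfolding the storage and preprocessing recurrences (the latter incorporating the LP-based visibility and conflict-list tests) to obtain \eqqref{eq:sdn} and \eqqref{eq:tdn}. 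No gaps; the minor difference in how you combine the constants in the storage recurrence does not affect the unrolled bound.
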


The query and storage bounds of \thmref{meiser:v:d}, according to the chosen value of
$r$, are depicted in the table in \figref{meiser:v:d:st}.

{\scriptsize
\begin{figure}[htbp]
    \centerline{%
       \begin{tabular}{|c||c||c|}%
         \hline
         & {Query time} & {Storage} \\
         \hline
         $\Bigl. r$ & $O\pth{ d^3 \log n \cdot \frac{r\log(dr)}{\log r}}$
         & $O\left(n^{(2d + 2d\log\rho)/\log r} \right)$ \\%
         \hline
         \hline
         $\Bigl.2$ & $O\pth{d^{3} \log d \log n}$ &
         $O\left(n^{2d(\log d + \log\log 2d + 2+\log c)} \right)$ \\
         \hline
         $\Bigl.d$ & $O\bigl( d^{4} \log n \bigr)$ &
         $O(n^{4d(1+o_d(1))})$ \\
         \hline%
         $\Bigl.d^{1/\eps}$ & $O\bigl(d^{3+1/\eps} \log n\bigr)$ &
         $O(n^{2d + 2d\eps(1+o_d(1)})$ \\
         \hline
         $\Bigl.d^{d}$ & $O\bigl(d^{d+3} \log n \bigr)$ &
         $O(n^{2d + 4 + o_d(1)})$ \\
         \hline
         $\Bigl.d^{4d/\eps}$ & $O\bigl(d^{4d/\eps+3} \log n \bigr)$ &
         $O(n^{2d + \eps(1+o_d(1)})$ \\
         \hline
       \end{tabular}
    }%
    \caption{Query and storage costs for our variant of Meiser's data structure,
      which uses vertical decomposition, with sample size $\rho=2cdr\log(2dr)$.}
    \figlab{meiser:v:d:st}%
\end{figure}
}

\begin{figure}
    \centerline{%
       \begin{tabular}{|c||c||c|}%
         \hline
         & {Query time} & {Storage} \\
         \hline
         $\Bigl. r$ & $O\pth{ d^3 \log n \cdot \frac{r}{\log r}}$
         & $O\left(n^{(2d + 2d\log\rho)/\log r} \right)$ \\%
         \hline
         \hline
         $\Bigl.2$ & $O\pth{d^{3} \log n}$
         & $O(n^{2d(\log d + 2 + \log c)} )$ \\%
         \hline
         $\Bigl.d$ & $O\bigl( \frac{d^{4}}{\log d} \log n \bigr)$
         & $O(n^{4d(1+o_d(1))})$ \\%
         \hline%
         $\Bigl.d^{1/\eps}$ & $O\bigl({\frac{\eps d^{3+1/\eps}}{\log d} \log n}\bigr)$
         & $O(n^{2d + 2d\eps(1+o_d(1))})$ \\%
         \hline $\Bigl.d^{d}$ & $O\bigl({\frac{ d^{d+2}}{\log d} \log n}\bigr)$
         & $O(n^{2d + 2 + o_d(1)})$ \\%
         \hline
         $\Bigl.d^{2d/\eps}$ & $O\bigl({\frac{\eps d^{2d/\eps+2}}{\log d} \log n }\bigr)$
         & $O(n^{2d + \eps + o_d(1)})$ \\%
         \hline
       \end{tabular}
    }%
    \caption{Query and storage costs for the optimistic version of the structure,
      with $\rho = O(dr + \log\log n)$. The table only depicts the resulting bounds when
      $dr$ dominates $\log\log n$.}
    \figlab{meiser:v:d}%
\end{figure}


\subsubsection{Optimistic sampling} \seclab{s:opt}

We can slightly improve the query time further, by using a slightly smaller random sample.
Set $\BadProb = 1/(2h)$, where $h = \frac{\log (n/\rho)}{\log r}$ is the maximum recursion depth.
The strategy is to take a sample of size
$$
\rho = O\left(r\left(b + \ln\frac{1}{\BadProb}\right)\right) = O(dr + \log \log n) .
$$
For a fixed query point $q$, \corref{optimistic} implies that, with probability at
least $\geq 1- \BadProb$, the prism containing $q$ in the vertical decomposition
of the arrangement of the random sample has at most $n/r$ elements in its conflict list.

We modify the construction of the data structure, so that the recursion continues into
a prism only if its conflict list is sufficiently small (i.e., of size at most $n/r$),
in which case we call the prism {\em light}. That is, we continue the construction only
with the light prisms. In order to be able to answer queries that get stuck at a
\emph{heavy} prism, we build $u := \beta d^2\log (2n)$ independent copies of the data-structure,
for a suitable sufficiently large constant $\beta$.
Clearly, for a query point $q$, there are at most $h$ nodes in the search path of the
main hierarchical structure in a single copy, and the probability that at least one of
the prisms associated with these nodes is heavy is at most $h \cdot \BadProb \leq 1/2$.
If this happens, we say that the current copy of the data structure fails for $q$, and
we move to the next copy, starting the query process from scratch at this copy.
Clearly, the expected number of nodes that the query process visits, over all copies,
is $O(h)$, and the probability that the query, with a specific point $q$, succeeds
in at least one copy is at least $1-1/2^u \geq 1- 1/(2n)^{\beta d^2}$.

To continue the analysis, we need the following lemma. In its statement, two points $q$, $q'$
are said to be \emph{combinatorially equivalent} if, for any choice of random samples, at all
nodes of the main recursion, if $q$ and $q'$ land at the same prism, at each node along their
(common) search path.
\begin{lemma} \lemlab{num:opt:q}
The number of combinatorially distinct queries, i.e., the number of classes in the
combinatorial equivalence relation, is at most $(2n)^{2d^2}$.
\end{lemma}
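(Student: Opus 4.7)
The plan is to reduce counting combinatorial equivalence classes of queries to counting cells in a single (large) hyperplane arrangement in $\Re^d$. Specifically, I would define
\[
\mathcal{F} \;=\; \bigl\{\, h \subset \Re^d \;\bigm|\; h \text{ contains a bounding facet of some prism } \sigma \in \VD(R) \text{ for some } R\subseteq H \,\bigr\},
\]
and show (i) that the equivalence class of a query $q$ is determined by the cell of $\Arr(\mathcal{F})$ containing $q$, and (ii) that $|\Arr(\mathcal{F})| \le (2n)^{2d^2}$.

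First I would prove the reduction. Consider any node of the main hierarchical structure, with an associated sample $R \subseteq H$ (possibly drawn from the conflict list of some ancestor prism). The prism $\sigma \in \VD(R)$ containing $q$ is determined solely by the signs of $q$ with respect to the at most $2d$ bounding hyperplanes of each candidate prism of $\VD(R)$; by definition, every such bounding hyperplane lies in $\mathcal{F}$. Hence if $q$ and $q'$ have identical signs with respect to every hyperplane of $\mathcal{F}$, they must land in the same prism at this node. An induction on the depth in the main recursion (using the fact that every sample encountered is a subset of $H$, so all the bounding hyperplanes remain in $\mathcal{F}$) then shows that $q$ and $q'$ follow identical search paths. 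The various per-node subroutines (sign-pattern computation against $R$ and the vertical ray-shootings inside $Q_C$) are merely the algorithmic mechanism for identifying this prism; their outcomes are functions of $q$'s signs against the hyperplanes in $\mathcal{F}$.

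Second, I would bound $|\mathcal{F}|$. By \lemref{def:hyp}, every prism $\sigma$ in any $\VD(R)$ is defined by at most $2d$ hyperplanes of $H$, and each of its bounding hyperplanes in $\Re^d$ is obtained from an ordered subsequence of these defining hyperplanes via the explicit sequence of intersections and projections described there. Hence each element of $\mathcal{F}$ is specified by an ordered tuple of length at most $2d$ drawn from $H$, which gives
\[
|\mathcal{F}| \;\le\; \sum_{k=0}^{2d} n^k \;\le\; (2n)^{2d}.
\]
Finally, by \lemref{num:cells}, the number of cells of all dimensions in $\Arr(\mathcal{F})$ is at most $O\bigl(|\mathcal{F}|^d\bigr) \le (2n)^{2d^2}$, which bounds the number of equivalence classes.

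The main obstacle is the reduction step: one must be careful to verify that \emph{every} decision made anywhere in the data structure, including deep inside the secondary tree $Q_C$ at arbitrary levels and in recursive subproblems on conflict lists, is encoded by $q$'s position in $\Arr(\mathcal{F})$. The crux is the observation that the algorithm's only per-node ``output'' is the identity of the containing prism, and prism containment is a sign-pattern predicate on the prism's $\le 2d$ bounding hyperplanes, all of which \lemref{def:hyp} places inside $\mathcal{F}$; once this is in hand, the counting step is routine.
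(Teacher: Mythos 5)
Your proposal is correct and follows essentially the same route as the paper: both construct the set $\mathcal{F}$ of all hyperplanes that can bound a prism in the vertical decomposition of any subset of $H$, observe that the search path of a query is determined by its cell in $\Arr(\mathcal{F})$, and then bound the number of cells via \lemref{num:cells}. The one (inessential) difference is how $|\mathcal{F}|$ is bounded: the paper multiplies the prism count of \corref{vd:all:p} by $2d$, whereas you count ordered defining tuples of length at most $2d$ directly via \lemref{def:hyp}, getting $|\mathcal{F}|\le(2n)^{2d}$; either bound, when fed into \lemref{num:cells}, yields the stated $(2n)^{2d^2}$.
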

\begin{proof}
To establish the lemma, we construct the collection of all possible hyperplanes that can
bound a prism, in the decomposition of the arrangement of any subset of $H$. Since each
prism is bounded by at most $2d$ hyperplanes, the desired number is at most $2d$ times
the number of all possible such prisms. By \corref{vd:all:p}, this number is
$$
N := O\left( d\cdot O\pth{ \frac{4^d}{d^{7/2}} n^{2d} } \right) =
O\left( \frac{4^d}{d^{5/2}} n^{2d} \right) .
$$
We now form the overlay of all these hyperplanes. It is clear from the construction
that for each cell of the overlay, all its points are combinatorially equivalent, so the
number of desired equivalence classes is at most the complexity of the overlay of $N$
hyperplanes in $\Re^d$, which, by \lemref{num:cells}, is
$$
\sum_{i=0}^d \binom{N}{i}2^i \leq 2 \pth{\frac{2Ne}{d}}^d \le (2n)^{2d^2} ,
$$
as is easily checked.
\end{proof}

\lemref{num:opt:q} implies that, choosing $\beta > 2$, the resulting structure
will answer correctly, with probability at least $1- 1/(2n)^{(\beta-2)d^2}$, all possible queries.

The new expected query time is
\begin{align*}
  Q(n) = O\pth{\frac{d^2\rho \log n}{\log r}}
  = O\pth{\frac{d^2(dr + \log \log n) \log n}{\log r}}
  = O\pth{ d^2\log n \cdot \frac{dr + \log \log n}{\log r}} .
\end{align*}
This expected query time is $O(d^3r \log n / \log r)$ when $\log \log n = O(dr)$, which is an
improvement by a $O(\log (dr))$ factor over the previous (deterministic) bound.

\smallskip

\paragraph{Storage and preprocessing.}
The preprocessing proceeds exactly as in the general treatment of vertical decomposition,
in \secref{meiser-vd}, except that (i) the sample size is smaller, by a logarithmic factor,
and (ii) we need to construct $u=O(d^2\log n)$ independent copies of the structure. Otherwise,
the algorithm and its analysis are identical to those presented above. That is, the storage
used by the structure is
$$
S_d(n) = O\left( d^2\log n \cdot
 n^{\frac{2d + 2d\log \rho }{\log r}}
\right) ,
$$
and the preprocessing cost is
$$
T_d(n) = O\left( d^2\log n \cdot
 n^{\frac{2d + 2d\log \rho }{\log r}}
\right) ,
$$
for $\rho = O(dr + \log \log n)$.

\paragraph{Discussion.}
The improvement achieved by optimistic sampling comes at a cost. First, the algorithm is Monte Carlo,
that is, with some small probability, queries may fail. We do not know how to turn it into a Las Vegas
algorithm; the resampling approach that we have used earlier in this section is too inefficient, because
it requires that we test all possible combinatorially distinct queries for success,, and there are too
many such points, by \lemref{num:opt:q}.

Second, we only have an expected time bound for a query. We do not know how to turn it into a
high-probability bound without repeating the query $\Theta(\log (1/\varphi))$ times, if we want to ensure
the bound with probability at least $1-\varphi$,
which kills the improvement that we have for small values of $\varphi$.

Still, optimistic sampling is an interesting alternative to consider for the problems at hand.


\subsection{Point location in arrangements of low-complexity hyperplanes}
\seclab{sec:low}

Let $H$ be a set of $n$ \emph{low complexity} hyperplanes in $\reals^d$.
By this we mean that each $h\in H$ has integer coefficients, and the $L_1$-norm
of the sequence of its coefficients is at most some (small) parameter $w$;
we refer to the $L_1$-norm of $h$ as its \emph{complexity}.
We also assume that all the hyperplanes
of $H$ pass through the origin. (This involves no real loss of generality, as
we can identify $\reals^d$ with the hyperplane $x_{d+1}=1$ in $\reals^{d+1}$,
and replace each input hyperplane $h$ by the affine hull of $h\cup\{o\}$, where
$o$ is the origin in $\reals^{d+1}$.)

Low-complexity hyperplanes have been studied in the recent groundbreaking work of
Kane et al.~\cite{KLM17}. They have shown that the \emph{inference dimension}
of the set of all such hyperplanes (with integer coefficients and complexity
at most $w$) is $\delta = O(d\log w)$. Without getting into the details of this
somewhat technical notion (full details of which can be found in \cite{KLM17}),
this implies that a random sample $R$ of $2\delta$ hyperplanes of $H$ has the
following property.

Regard each $h\in H$ as a vector in $\reals^d$ (it should actually be a vector in
projective $d$-space, but we stick to one concrete real representation).\footnote{%
  These arbitrary choices of affine representations of projective quantities
  should give us some flexibility in the algorithm that follows. However, we
  do not see how to exploit this flexibility; neither does the machinery in \cite{KLM17}.}
Let $R-R$ denote the set $\{h-h' \mid h,h'\in R\}$.
Let $x$ be some point in $\reals^d$, and let $C(x)$ denote the relatively open
cell (actually, a cone with apex at the origin), of the appropriate dimension,
that contains $x$ in the arrangement $\A(R\cup (R-R))$. Then the expected number
of hyperplanes of $H$ that cross $C(x)$ is smaller than $|H|/2$.

Actually, Kane et al.~\cite{KLM17} also establish the stronger property, that
a random sample of $\rho:= O(\delta + d\log\delta) = O(d\log(dw))$ hyperplanes of $H$
is such that, with constant probability, \emph{every} $x\in\reals^d$ has the property
that $C(x)$ is crossed by at most $\frac78 |H|$ hyperplanes of $H$.

Note that $C(x)$ is uniquely determined by the subset $R^0(x)$ of
hyperplanes of $R$ that vanish at $x$, and by the sequences $R^-(x)$,
$R^+(x)$, where $R^-(x)$ (resp., $R^+(x)$) consists of the hyperplanes of $R$
that are negative (resp., positive) at $x$, so that each of these sequences
is sorted by the values $\langle h,x \rangle$, for the hyperplanes $h$ in the
respective sequence. Note also that if $h\in H$ does not cross $C(x)$ then it
has a fixed sign with respect to all the points in the cell.

Paraphrasing what has just been discussed, a random $\rho$-sample $R$ from $H$
has the property that the cell decomposition formed by $\A(R\cup (R-R))$ is a
$(7/8)$-cutting of $\A(H)$, with constant probability.

\paragraph{Preprocessing.}
We now apply the point-location machinery developed so far in \secref{sec:meiser}
using $\A(R\cup (R-R))$ as the cell decomposition. We  go briefly over this
machinery, highlighting mainly the new aspects that arise when dealing with this
kind of cell decomposition.

We construct a hierarchical tree structure similar to the one  in
\secref{meiser-bvt} and \secref{meiser-vd}.
Here the hierarchy consists only of tries, where  each trie indexes  cells of an arrangement
$\A(R\cup (R-R))$ for some random sample $R$.
The top trie  is associated with a random sample $R$ from
 the entire $H$.
Each leaf $\xi$ of this trie corresponds to a cell $C_\xi$ of $\A(R\cup (R-R))$, and points to
 a trie
 associated with a random sample $R_\xi$ from the corresponding conflict list $K(C_\xi)$.

Each random sample $R$ (we abuse the notation slightly and use $H$ to denote some conflict list in the hierarchical structure and $R$
the random sample  from $H$) is of size
$\rho = O(\delta + d\log\delta) = O(d\log(dw))$.
Each cell $C$ of $\A(R\cup (R-R))$ is identified by a \emph{compact sign pattern},
which, for an arbitrary point $x\in C$, consists of the set $R^0(x)$ (sorted, say, by the indices of its hyperplanes) and of the
two sequences $R^+(x)$, $R^-(x)$, sorted by the values $\langle h,x \rangle$ of
their hyperplanes. Clearly, this compact sign pattern is independent of the choice
of $x\in C$.
The trie $T_R$, at the present node of the structure, stores the compact sign patterns of the cells in $\A(R\cup (R-R))$.
That is, each parent-child link in $T_R$ is associated with some hyperplane of $R$, and the hyperplanes associated with the  edges on the path from the root of $T_R$ to a leaf $\xi$, associated with  cell
$C_\xi$,
appear in their order $R^0$,$R^-$, $R^+$,  appropriately delimited,
 in the compact sign pattern of $C_\xi$.
It follows that each
 node of $T_R$ may have up to $\rho$ children, each corresponding to a different hypeplane in $R$.\footnote{To traverse a compact sign pattern
 in $T_R$ in  constant time per hyperplane,  we  store the children of each
node in $T_R$ in a hash table.}

The construction of $\A(R\cup (R-R))$ is performed using the same vertex-based
scheme presented in  \secref{bvt-preproc}. (Here too we expect the arrangement to be degenerate,
and we handle vertices of high degree exactly as before.) We  store the
compact sign patterns of the cells in  $T_R$ and compute the conflict list $K(C)$ of each cell $C\in \A(R\cup (R-R))$, using
linear programming as in \secref{vd-prep-lp}.
Specifically, to
check, for a hyperplane $h\in H$, whether
$h$ crosses $C$, we observe that the compact sign pattern of $C$ defines it as
an intersection of at most $\rho$ halfspaces and hyperplanes, where the hyperplanes
are of the form $h=0$, for $h\in R^0$, and the halfspaces are of the form
$h_{i+1}-h_i \ge 0$, for all the pairs of consecutive elements $h_i$, $h_{i+1}$
in $R^+$ and in $R^-$. We then determine whether $h$ crosses $C$ using the LP-based method of \secref{vd-prep-lp}
 with this  set of linear inequalities.

By the aforementioned properties, as established in \cite{KLM17}, we have the property that,
with constant probability, the size of $K(C)$, for every cell $C$, is at most $\frac78 |H|$.
As before, we can ensure this property with certainty by discarding samples that do not
satisfy this property, and by resampling, at each node of the recursive structure, yielding,
almost surely, a structure that satisfies that the size reduction property holds at each of its nodes.
As before we keep each conflict list and the sign pattern of $C$ with respect to each hypeplane in $R\setminus K(C)$, only temporarily and discard it once
the recursive preprocessing terminates.
We only store
permanently the hierarchy of the tries.

The recursion bottoms out at each cell $C$ for which $K(C)$ is of size smaller than $\rho$.
In each such  leaf-cell $C$, we
construct the arrangement $\A(K(C))$ (there is no need to consider $K(C)-K(C)$ now),
in the same vertex-based manner as before, where each cell is represented by its sign
pattern with respect to $K(C)$, and store at each cell $C'\in \A(K(C))$ a pointer to the
sign pattern of (every point in) $C'$ with respect to the full original set $H$.
This sign pattern is simply the union of the partial sign patterns computed
(and stored) at the nodes of the path from the root to $v$, including the sign
pattern with respect to $K(C)$, as just computed; see \secref{bvt-preproc} for details.

We note that the structure here is simpler than its counterparts considered earlier,
since each of its nodes only has to store a trie; there is no need for the ``cell trees'' $Q_C$
that were attached to the leaves of the tries in the previous versions.

\paragraph{Answering a query.}
A query with a point $x$ is processed by following a path in the hierarchical structure.
At each level we take the sample $R$ and compute
the compact sign pattern of $x$ with respect to the hyperplanes of $R\cup (R-R)$. That
 is, we compute the sign of each $h\in R$ at $x$, separate the hyperplanes into
the sets $R^0(x)$, $R^+(x)$, $R^-(x)$, as defined above, and sort each of
$R^+(x)$, $R^-(x)$, in increasing order of the values $\langle h,x \rangle$ and sort $R^0(x)$ with respect to the indices of the hyperplanes.
We then search the trie $T_R$ with this compact sign pattern  to locate
the leaf $\xi$ representing the cell
$C_\xi$
of $\A(R\cup (R-R))$ that contains $x$.
Then we continue the search recursively at the trie stored
at $\xi$ (which is associated with a random sample out of $K(C_\xi)$).
When we reach a leaf-cell $C$ we search the trie
associated with $\Arr(K(C))$ for the appropriate cell $C'$ of this arrangement containing $x$, and return the sign pattern
associated with $C'$.

We search $O(\log n)$ tries. At each recursive step, computing the compact sign pattern of $x$ with respect to $R\cup (R-R)$ takes
$O(\rho\log \rho)$ linear tests, each taking $O(d)$ time, for a total of $O(\rho d\log \rho)$ time.
Searching the trie $T_R$ takes $O(\rho)$ time,
so the cost of the query at each visited node is $O(\rho d\log \rho)$.
With the choice $\rho=O(d\log(dw))$, the total cost of the query is
$$
O(\rho d\log \rho \log n) = O(d^2\log^2(dw)\log n) .
$$

\paragraph{Storage.}
Each trie
 of the main hierarchical structure indexes a random sample $R$, of size
$\rho = cd\log(dw)$, for a suitable absolute constant $c$. The size of the trie $T_R$
is at most $\rho$ times the number of its leaves, namely, the number of cells, of all
dimensions, in $\A(R\cup (R-R))$
which,  by \lemref{num:cells}, is $O((2e\rho^2/d)^d)$.
 So the storage used by $T_R$ is $O(\rho(2e\rho^2/d)^d)$.

 We thus obtain the following recurrence for the
maximum storage $S_d(n)$ needed for an input set of $n$ hyperplanes in $d$ dimensions.
$$
S_d(n) \le a \rho \left(\frac{2e\rho^2}{d} \right)^d +
b \left(\frac{2e\rho^2}{d} \right)^d \cdot S_d\left(\tfrac78 n\right) ,
$$
where $a$ and $b$ are absolute constants.
The cost of storage at a leaf node of the structure is only $O(\rho^d)$.
The solution of this recurrence is easily seen to be
$$
S_d(n) = O\left( \rho^d  b^j \left(\frac{2e\rho^2}{d} \right)^{dj} \right),
$$
for $j = \left\lceil \log(n/\rho)/\log(8/7)\right\rceil$. Ignoring rounding, we get
\begin{equation} \label{eq:special-sdn}
S_d(n) = O\left(\rho^d  (n/\rho)^{(d\log(2e\rho^2/d)+\log b)/\log(8/7)} \right) =
O\left(  n^{(d\log(2e\rho^2/d)+\log b)/\log(8/7)} \right) ,
\end{equation}
where the coefficient of proportionality tends to $0$ as $d$ increases.

The resulting bound, which is $n^{O(d\log d)}$, falls short off the ideal bound $O(n^d)$,
but is reasonably close to it, and is comparable with the previous off-ideal bounds for $r=2$,
although the constant of proportionality in the exponent is slightly larger,
(see \figref{meiser:v:d:st} and \figref{meiser:v:d}), topped with the fact that the query cost
here is faster (by roughly a factor of $d$) than the best previous query costs.

\paragraph{Preprocessing.}
At each step of the main structure, with its associated random sample $R$ or size $\rho$,
we perform the following steps:
(i) Construct the arrangement $\Arr(R\cup (R-R))$.
(ii) Compute the compact sign pattern of each cell.
(iii) Construct the trie $T_R$.
(iv) Construct the conflict list of each cell.

Step (i) is carried out as in \secref{bvt-preproc}, in
$O(\rho^{2(d+1)})$ time.
Creating the compact sign pattern of each of the $O((2e\rho^2/d)^d)$ cells of $\A(R\cup (R-R))$
takes $O(\rho \log \rho)$ time, for a total of $O((2e/d)^d \rho^{2d+1}\log \rho)$ time. It is also straightforward to create $T_R$
within the same amount of time.
Finally, we compute the conflict list of each cell by solving, for each hyperplane in $H$, a linear program with at most $\rho$ constraints in $d$ dimensions.
This takes, as in \secref{vd-prep-lp}, $O\left( n\left(d^2\rho + e^{O(\sqrt{d\log d})} \right) \right)$ expected time per cell of
$\Arr(R\cup (R-R))$,
for a total expected time of
$$
O\left( \left(\frac{2e\rho^2}{d} \right)^d n\left(d^2\rho + e^{O(\sqrt{d\log d})} \right)  \right) .
$$

Let $T_d(n)$ denote the maximum overall expected  preprocessing time of the algorithm
for an instance involving $n$ hyperplanes in $d$ dimensions.
We get the following recurrence for $T_d(n)$.

\begin{align*}
  T_d(n) &\leq
  \begin{cases}
    a \left(\frac{2e\rho^2}{d} \right)^d n\left(d^2\rho + e^{O(\sqrt{d\log d})}\right)  + a \rho^{2(d+1)}   +
    b \left(\frac{2e\rho^2}{d} \right)^d T_d\left(\tfrac78 n \right) & \text{for $n > \rho $} \\
    a \rho^{2(d+1)} & \text{for $n \le \rho$} ,
  \end{cases}
\end{align*}
for some absolute constants $a$ and $b$.
Unfolding the recurrence, and noting that the overhead is dominated by the $a\rho^{2(d+1)}$ term, we get that
$$
T_d(n) = O\left( \rho^{2(d+1)} b^j \left(\frac{2e\rho^2}{d} \right)^{jd} \right)
$$
where  $j = \left\lceil \log(n/\rho)/\log(8/7)\right\rceil$.
Substituting this value of $j$ (and neglecting the rounding), we get
\begin{equation} \label{eq:special-tdn}
T_d(n) = O\left(\rho^{2(d+1)}  (n/\rho)^{(d\log(2e\rho^2/d)+\log b)/\log(8/7)} \right) =
O\left(  n^{(d\log(2e\rho^2/d)+\log b)/\log(8/7)} \right) ,
\end{equation}
where the coefficient of proportionality tends to $0$ as $d$ increases.
In summary we have the following result.
\begin{theorem}
    \thmlab{special-klm}%
    Given a set $H$ of $n$ hyperplanes  in $\Re^d$ with vectors of coefficient of $L_1$-norm bounded by $w$, one can construct a data-structure for point location
 in $\Arr(H)$ that answers a query in
$O(\rho d\log \rho \log n) = O(d^2\log^2(dw)\log n)$ time.
    The bounds on the storage and expected preprocessing costs of the structure are given
    in (\ref{eq:special-sdn}) and (\ref{eq:special-tdn}), respectively, where  $\rho=O(d\log(dw))$.
\end{theorem}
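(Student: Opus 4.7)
The proof is essentially a compilation of the construction and analysis already developed in \secref{sec:low}, so the plan is to verify that all the pieces fit together to yield the stated bounds.

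The plan is to build a hierarchical trie structure whose top-level trie is associated with a random sample $R \subseteq H$ of size $\rho = c d \log(dw)$, where $c$ is a suitably large absolute constant. The first key step is to invoke the result of Kane \etal~\cite{KLM17} which, via the inference-dimension bound $\delta = O(d \log w)$ for low-complexity hyperplanes, guarantees that with constant probability the arrangement $\A(R \cup (R-R))$ is a $(7/8)$-cutting of $\A(H)$: every cell of this arrangement has a conflict list of size at most $\tfrac{7}{8}|H|$. By resampling at each node until this property holds, we may assume that the size of the subproblem shrinks by a factor of $7/8$ at every recursive step, which makes the depth of the main recursion at most $\lceil \log(n/\rho)/\log(8/7) \rceil = O(\log n)$.

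Next I would describe the per-node construction. At each node we build $\A(R \cup (R-R))$ using the vertex-based enumeration of \secref{bvt-preproc}, index its cells in a trie $T_R$ keyed by the compact sign pattern (the sorted triple $R^0(x), R^-(x), R^+(x)$), and compute the conflict list of each cell in brute force by running, for every $h \in H$, a linear program with $\rho$ constraints in $d$ variables, as in \secref{vd-prep-lp}. At the bottom of the recursion, when the conflict list at a cell $C$ has fewer than $\rho$ hyperplanes, we simply construct $\Arr(K(C))$ and store with each of its cells the full sign pattern with respect to $H$, obtained by concatenating the partial sign patterns accumulated along the root-to-leaf path.

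For the query analysis, a search with a point $x$ visits $O(\log n)$ tries; at each trie we compute the compact sign pattern of $x$ in $O(\rho d)$ time for the sign evaluations plus $O(\rho \log \rho)$ time to sort $R^+(x)$ and $R^-(x)$ by $\langle h, x \rangle$, and then descend in $T_R$ in $O(\rho)$ time. The dominant term per level is $O(\rho d \log \rho)$, and multiplying by the recursion depth yields the query cost $O(\rho d \log \rho \log n) = O(d^2 \log^2(dw) \log n)$. The storage and expected preprocessing costs come from solving the recurrences
\begin{align*}
S_d(n) &\le a\rho \left( \tfrac{2e\rho^2}{d} \right)^d + b \left( \tfrac{2e\rho^2}{d} \right)^d \cdot S_d\pth{\tfrac78 n}, \\
T_d(n) &\le a\rho^{2(d+1)} + a\pth{\tfrac{2e\rho^2}{d}}^d n \pth{d^2\rho + e^{O(\sqrt{d\log d})}} + b \pth{\tfrac{2e\rho^2}{d}}^d T_d\pth{\tfrac78 n},
\end{align*}
whose solutions, unfolded to depth $\log(n/\rho)/\log(8/7)$, are exactly the bounds stated in \eqref{equation:special-sdn} and \eqref{equation:special-tdn}. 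The main obstacle, already handled in \cite{KLM17} and imported here as a black box, is the inference-dimension bound that justifies the $(7/8)$-cutting property of $\A(R \cup (R-R))$; everything else is bookkeeping that parallels the analysis of the vertical-decomposition variant in \secref{meiser-vd}, only simpler because there is no secondary tree $Q_C$ per cell.
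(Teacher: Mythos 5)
Your proposal follows the paper's proof of \thmref{special-klm} essentially verbatim: same sample size $\rho = O(d\log(dw))$, same cell decomposition $\A(R\cup(R-R))$ justified by the inference-dimension result of Kane \etal~\cite{KLM17}, same trie structure keyed by compact sign patterns, same resampling trick to enforce the $(7/8)$-cutting property, same recursion depth $O(\log n)$, and the same storage/preprocessing recurrences unfolded in the same way. This is not a different route, it is the paper's own route.

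One small internal wrinkle worth noting: you break the per-level cost into $O(\rho d)$ for evaluating the $\rho$ inner products plus $O(\rho\log\rho)$ for sorting them, which sums to $O(\rho d + \rho\log\rho)$, but then you assert the ``dominant term per level is $O(\rho d\log\rho)$.'' That conclusion does not follow from your own breakdown, which is actually \emph{tighter}; the paper obtains $O(\rho d\log\rho)$ because it charges $O(d)$ per comparison in the sort, i.e., it counts comparisons as linear sign tests rather than scalar comparisons of precomputed inner products. Either accounting yields the stated theorem (yours with room to spare), so the theorem is unaffected, but you should either adopt the paper's ``each of $O(\rho\log\rho)$ linear tests costs $O(d)$'' bookkeeping or keep your tighter $O(\rho d + \rho\log\rho)$ and not inflate it to $O(\rho d\log\rho)$ without justification.
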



 \providecommand{\CNFX}[1]{ {\em{\textrm{(#1)}}}}
  \providecommand{\tildegen}{{\protect\raisebox{-0.1cm}{\symbol{'176}\hspace{-0.03cm}}}}
  \providecommand{\SarielWWWPapersAddr}{http://sarielhp.org/p/}
  \providecommand{\SarielWWWPapers}{http://sarielhp.org/p/}
  \providecommand{\urlSarielPaper}[1]{\href{\SarielWWWPapersAddr/#1}{\SarielWWWPapers{}/#1}}
  \providecommand{\Badoiu}{B\u{a}doiu}
  \providecommand{\Barany}{B{\'a}r{\'a}ny}
  \providecommand{\Bronimman}{Br{\"o}nnimann}  \providecommand{\Erdos}{Erd{\H
  o}s}  \providecommand{\Gartner}{G{\"a}rtner}
  \providecommand{\Matousek}{Matou{\v s}ek}
  \providecommand{\Merigot}{M{\'{}e}rigot}
  \providecommand{\Hastad}{H\r{a}stad\xspace}
  \providecommand{\CNFCCCG}{\CNFX{CCCG}}
  \providecommand{\CNFBROADNETS}{\CNFX{BROADNETS}}
  \providecommand{\CNFESA}{\CNFX{ESA}}
  \providecommand{\CNFFSTTCS}{\CNFX{FSTTCS}}
  \providecommand{\CNFIJCAI}{\CNFX{IJCAI}}
  \providecommand{\CNFINFOCOM}{\CNFX{INFOCOM}}
  \providecommand{\CNFIPCO}{\CNFX{IPCO}}
  \providecommand{\CNFISAAC}{\CNFX{ISAAC}}
  \providecommand{\CNFLICS}{\CNFX{LICS}}
  \providecommand{\CNFPODS}{\CNFX{PODS}}
  \providecommand{\CNFSWAT}{\CNFX{SWAT}}
  \providecommand{\CNFWADS}{\CNFX{WADS}}


\begin{thebibliography}{dBCvKO08}

\bibitem[AMS98]{AMS98}
P.~K. Agarwal, J.~Matou{\v{s}}ek, and O.~Schwarzkopf.
\newblock Computing many faces in arrangements of lines and segments.
\newblock {\em {SIAM} J. Comput.}, 27(2):491--505, 1998.

\bibitem[AS00]{as-aa-00}
P.~K. Agarwal and M.~Sharir.
\newblock Arrangements and their applications.
\newblock In J.-R. Sack and J.~Urrutia, editors, {\em Handbook of Computational
  Geometry}, pages 49--119. North-Holland Publishing Co., Amsterdam, 2000.

\bibitem[BEHW89]{Blumer}
A.~Blumer, A.~Ehrenfeucht, D.~Haussler, and M.~K. Warmuth.
\newblock Learnability and the {Vapnik-Chervonenkis} dimension.
\newblock {\em J.\ ACM}, 36(4):929--965, 1989.

\bibitem[CEGS91]{cegs-sessr-91}
B.~Chazelle, H.~Edelsbrunner, L.~J. Guibas, and M.~Sharir.
\newblock A singly-exponential stratification scheme for real semi-algebraic
  varieties and its applications.
\newblock {\em Theoret. Comput. Sci.}, 84:77--105, 1991.
\newblock Also in 16th Int. Colloq. on Automata, Languages and Programming
  pages 179--193, 1989.

\bibitem[CF90]{cf-dvrsi-90}
B.~Chazelle and J.~Friedman.
\newblock A deterministic view of random sampling and its use in geometry.
\newblock {\em Combinatorica}, 10(3):229--249, 1990.

\bibitem[Cha05]{c-c-05}
B.~Chazelle.
\newblock Cuttings.
\newblock In D.~P. Mehta and S.~Sahni, editors, {\em Handbook of Data
  Structures and Applications}. Chapman and Hall / CRC, Boca Raton, FL, 2005.

\bibitem[Cla87]{c-narsc-87}
K.~L. Clarkson.
\newblock New applications of random sampling in computational geometry.
\newblock {\em Discrete Comput. Geom.}, 2:195--222, 1987.

\bibitem[Cla88]{c-racpq-88}
K.~L. Clarkson.
\newblock A randomized algorithm for closest-point queries.
\newblock {\em SIAM J. Comput.}, 17(4):830--847, 1988.

\bibitem[CMS93]{CMS93}
K.~L. Clarkson, K.~Mehlhorn, and R.~Seidel.
\newblock Four results on randomized incremental constructions.
\newblock {\em Comput. Geom.}, 3:185--212, 1993.

\bibitem[CS89]{cs-arscg-89}
K.~L. Clarkson and P.~W. Shor.
\newblock Applications of random sampling in computational geometry, {II}.
\newblock {\em Discrete Comput. Geom.}, 4:387--421, 1989.

\bibitem[dBCvKO08]{bcko-cgaa-08}
M.~de~Berg, O.~Cheong, M.~van Kreveld, and M.~Overmars.
\newblock {\em Computational Geometry: Algorithms and Applications}.
\newblock Springer Verlag, Berlin-Heidelberg, 3rd edition, 2008.

\bibitem[dBS95]{bs-ca-95}
M.~de~Berg and O.~Schwarzkopf.
\newblock Cuttings and applications.
\newblock {\em Internat. J. Comput. Geom. Appl.}, 5:343--355, 1995.

\bibitem[ES17]{ES17}
E.~Ezra and M.~Sharir.
\newblock A nearly quadratic bound for the decision tree complexity of
  $k$-{SUM}.
\newblock In {\em 33rd International Symposium on Computational Geometry
  (SoCG)}, volume~77 of {\em LIPIcs}, pages 41:1--41:15, 2017.

\bibitem[GHMS95]{GHMS-95}
L.~J. Guibas, D.~Halperin, J.~Matou\v{s}ek, and M.~Sharir.
\newblock Vertical decomposition of arrangements of hyperplanes in four
  dimensions.
\newblock {\em Discrete Comput. Geom.}, 14:113--122, 1995.

\bibitem[GW96]{gw-lp}
B.~G\"artner and E.~Welzl.
\newblock Linear programming: Randomization and abstract frameworks.
\newblock In {\em Proc.~13th Annu. Sympos. Theoret. Aspects of Comput. Sci.},
  pages 669--687. Springer Lecture Notes in Computer Science 1046, 1996.

\bibitem[{Har}00]{h-cctp-00}
S.~{Har-Peled}.
\newblock Constructing planar cuttings in theory and practice.
\newblock {\em SIAM J. Comput.}, 29(6):2016--2039, 2000.

\bibitem[{Har}11]{h-gaa-11}
S.~{Har-Peled}.
\newblock {\em Geometric Approximation Algorithms}, volume 173 of {\em
  Mathematical Surveys and Monographs}.
\newblock Amer. Math. Soc., Boston, MA, 2011.

\bibitem[Har16]{h-sppss-16}
S.~Har{-}Peled.
\newblock Shortest path in a polygon using sublinear space.
\newblock {\em J. Comput. Geom.}, 7(2):19--45, 2016.

\bibitem[HW87]{hw-ensrq-87}
D.~Haussler and E.~Welzl.
\newblock $\varepsilon$-nets and simplex range queries.
\newblock {\em Discrete Comput. Geom.}, 2:127--151, 1987.

\bibitem[JK92]{Komlos:1992}
G.~Woeginger J.~Koml\'{o}s, J.~Pach.
\newblock Almost tight bounds for $\varepsilon$-nets.
\newblock {\em Discrete Comput. Geom.}, 7(2):163--173, 1992.

\bibitem[KLM17]{KLM17}
D.~M. Kane, S.~Lovett, and S.~Moran.
\newblock Near-optimal linear decision trees for $k$-sum and related problems.
\newblock {\em CoRR}, abs/1705.01720, 2017.

\bibitem[Kol04]{koltun}
V.~Koltun.
\newblock Sharp bounds for vertical decompositions of linear arrangements in
  four dimensions.
\newblock {\em Discrete Comput. Geom.}, 31:435--460, 2004.

\bibitem[Liu04]{l-nplah-04}
D.~Liu.
\newblock A note on point location in arrangements of hyperplanes.
\newblock {\em Inform. Process. Lett.}, 90(2):93--95, 2004.

\bibitem[MadH84]{Meyer84}
F.~Meyer auf~der Heide.
\newblock A polynomial linear search algorithm for the $n$-dimensional knapsack
  problem.
\newblock {\em J. ACM}, 31(3):668--676, 1984.

\bibitem[Mat02]{m-ldg-02}
J.~Matou{\v s}ek.
\newblock {\em Lectures on Discrete Geometry}, volume 212 of {\em Grad.\ Texts
  in Math.}
\newblock Springer, 2002.

\bibitem[Mei93]{m-plah-93}
S.~Meiser.
\newblock Point location in arrangements of hyperplanes.
\newblock {\em Information Comput.}, 106(2):286--303, 1993.

\bibitem[PA95]{pa-cg-95}
J.~Pach and P.~K. Agarwal.
\newblock {\em Combinatorial Geometry}.
\newblock John Wiley \& Sons, New York, 1995.

\bibitem[SA95]{sa-dsstg-95}
M.~Sharir and P.~K. Agarwal.
\newblock {\em {Davenport-Schinzel} Sequences and Their Geometric
  Applications}.
\newblock Cambridge University Press, New York, 1995.

\bibitem[Spa66]{Spanier-66}
E.~H. Spanier.
\newblock {\em Algebraic Topology}.
\newblock Springer Verlag, New York, 1966.

\bibitem[VC71]{vc-ucrfe-71}
V.~N. Vapnik and A.~Y. Chervonenkis.
\newblock On the uniform convergence of relative frequencies of events to their
  probabilities.
\newblock {\em Theory Probab. Appl.}, 16:264--280, 1971.

\bibitem[VC13]{vc-ucfoe-13}
V.~N. Vapnik and A.~Y. Chervonenkis.
\newblock {\em On the Uniform Convergence of the Frequencies of Occurrence of
  Events to Their Probabilities}, pages 7--12.
\newblock Springer Berlin Heidelberg, Berlin, Heidelberg, 2013.

\end{thebibliography}

\end{document}